\tikzset{cd/.style=matrix of math nodes,row sep=2em,column sep=2em, text height=1.5ex, text depth=0.5ex}
\tikzset{cdar/.style=->,auto}
\tikzset{dar/.style={double,double equal sign distance,-implies}}%style for 2-arrows in triangles
\tikzset{mid/.style={anchor=mid}} % put labels on the arrow
\tikzset{triar/.style={anchor=mid,->}}%style for 1-arrows in triangles
\tikzset{tridar/.style={anchor=mid,double,double equal sign distance,-implies}}%style for 2-arrows in triangles
\tikzset{narrowfill/.style={inner sep=0pt, fill=white}}% style for nodes with filled background
\setlist[enumerate,1]{label=\textup{(\arabic*)}}
\setlist[enumerate,2]{label=\textup{(\alph*)}}
\renewcommand{\PrintDOI}[1]{\href{http://dx.doi.org/\detokenize{#1}}{doi: \detokenize{#1}}}
\theoremstyle{plain}
\newtheorem{theorem}[subsection]{Theorem}
\newtheorem{lemma}[subsection]{Lemma}
\newtheorem{proposition}[subsection]{Proposition}
\newtheorem{corollary}[subsection]{Corollary}
\theoremstyle{definition}
\newtheorem{definition}[subsection]{Definition}
\theoremstyle{remark}
\newtheorem{example}[subsection]{Example}
\newcommand*{\defeq}{\mathrel{\vcentcolon=}}
\newcommand*{\congto}{\xrightarrow\sim}
\DeclarePairedDelimiter{\abs}{\lvert}{\rvert}% absolute value
\DeclarePairedDelimiter{\norm}{\lVert}{\rVert}% norm
\DeclarePairedDelimiterX{\braket}[2]{\langle}{\rangle}{#1\,\delimsize\vert\,\mathopen{}#2}% inner product
\DeclarePairedDelimiterX{\braketop}[3]{\langle}{\rangle}{#1\,\delimsize\vert #2\delimsize\vert\,\mathopen{}#3}% inner product with operator
\DeclarePairedDelimiterX{\BRAKET}[2]{\langle}{\rangle}{\!\delimsize\langle#1\,\delimsize\vert\,\mathopen{}#2\delimsize\rangle\!}% inner product
\DeclarePairedDelimiterX{\setgiven}[2]{\{}{\}}{#1\,{:}\,\mathopen{}#2}% set given by
\newcommand{\idealin}{\mathrel{\triangleleft}} % relation of being an ideal
\newcommand*{\into}{\rightarrowtail}
\newcommand*{\injto}{\hookrightarrow}
\newcommand*{\prto}{\twoheadrightarrow}
\newcommand{\C}{\mathbb{C}}
\newcommand{\N}{\mathbb{N}}
\newcommand{\Z}{\mathbb{Z}}
\newcommand{\R}{\mathbb{R}}
\newcommand{\T}{\mathbb{T}}
\newcommand*{\Hils}[1][H]{\mathcal #1}% Hilbert space
\newcommand*{\Cont}{\mathrm C}% continuous functions
\newcommand*{\Contc}{\mathrm{C}_\mathrm{c}}% continuous functions with compact support
\newcommand*{\Contb}{\mathrm{C}_\mathrm{b}}% bounded continuous functions
\newcommand*{\Roe}{\mathrm{Roe}}% used for Roe C*-algebra
\newcommand*{\Star}{$^*$\nobreakdash-}
\newcommand*{\nb}{\nobreakdash}
\newcommand*{\Cst}{\mathrm C^*}% C*-algebra
\newcommand*{\Cred}{\mathrm C^*_\mathrm r}% C*-algebra
\newcommand*{\diff}{\mathrm{d}}% differential in integrals
\newcommand*{\ima}{\mathrm i}% imaginary unit
\newcommand*{\Euler}{\mathrm{e}}% Euler number
\newcommand{\Comp}{\mathbb K}
\newcommand{\Bound}{\mathbb B}
\newcommand{\Mat}{\mathbb M}
\newsavebox\myboxA
\newsavebox\myboxB
\newlength\mylenA
\newcommand*\xoverline[2][0.75]{%
    \sbox{\myboxA}{$\m@th#2$}%
    \setbox\myboxB\null% Phantom box
    \ht\myboxB=\ht\myboxA%
    \dp\myboxB=\dp\myboxA%
    \wd\myboxB=#1\wd\myboxA% Scale phantom
    \sbox\myboxB{$\m@th\overline{\copy\myboxB}$}%  Overlined phantom
    \setlength\mylenA{\the\wd\myboxA}%   calc width diff
    \addtolength\mylenA{-\the\wd\myboxB}%
    \ifdim\wd\myboxB<\wd\myboxA%
       \rlap{\hskip 0.5\mylenA\usebox\myboxB}{\usebox\myboxA}%
    \else
        \hskip -0.5\mylenA\rlap{\usebox\myboxA}{\hskip 0.5\mylenA\usebox\myboxB}%
    \fi}
\newcommand*{\conj}[1]{\overline{#1}}
\newcommand*{\Id}{\mathrm{id}}% identity map
\newcommand*{\K}{\mathrm{K}}% K-theory
\newcommand*{\KR}{\mathrm{KR}}% “real” K-theory
\newcommand*{\KK}{\mathrm{KK}}% KK-theory
\DeclareMathOperator{\supp}{supp}% support
\DeclareMathOperator{\Aut}{Aut}
\DeclareMathOperator{\Ad}{Ad}
\DeclareMathOperator{\Cliff}{Cl}% Clifford algebra
\begin{document}
\title{Coarse geometry and topological phases}
\author{Eske Ellen Ewert}
\email{Eske.Ewert@mathematik.uni-goettingen.de}

\author{Ralf Meyer}
\email{rmeyer2@uni-goettingen.de}

\address{Mathematisches Institut\\
  Universit\"at G\"ottingen\\Bunsenstra\ss e 3--5\\
  37073 G\"ottingen\\Germany}

\keywords{topological insulator;
  Roe \(\Cst\)\nb-algebra;
  disordered material;
  \(\K\)\nb-theory;
  K-homology}

\subjclass{82C44; 46L80; 82D25; 81R15}% change

\begin{abstract}
  We propose the Roe \(\Cst\)\nb-algebra
  from coarse geometry as a model for topological phases of
  disordered materials.  We explain the robustness of this
  \(\Cst\)\nb-algebra
  and formulate the bulk--edge correspondence in this framework.  We
  describe the map from the K\nb-theory of the group
  \(\Cst\)\nb-algebra
  of~\(\Z^d\)
  to the K\nb-theory of the Roe \(\Cst\)\nb-algebra,
  both for real and complex K\nb-theory.
\end{abstract}
\maketitle

\section{Introduction}
\label{sec:intro}

Topological insulators are materials that are insulating in the bulk
but allow a current to flow on the boundary.  These boundary currents
are protected by topological invariants and thus, in ideal cases, flow
without dissipation.  The mathematical description of a topological
insulator uses a \(\Cst\)\nb-algebra~\(\mathcal{A}\)
that contains the resolvent of the Hamiltonian~\(H\)
of the system; this amounts to \(H\in\mathcal{A}\)
if~\(H\)
is bounded.  To describe an insulator, the spectrum of~\(H\)
should have a gap at the Fermi energy~\(E\).
Depending on further symmetries of the system such as a time
reversal, particle--hole or chiral symmetry, the topological phase
of the material may be classified by a class in the \(\K\)\nb-theory
of~\(\mathcal{A}\)
associated to the spectral projection of~\(H\)
at the Fermi energy (see, for instance,
\cites{Kellendonk:Cstar_phases, Prodan-Schulz-Baldes:Bulk_boundary}).  
So the observable algebra~\(\mathcal{A}\)
or rather its K\nb-theory predicts the possible topological
phases of a material.

At first, a material is often modelled without disorder and in a tight
binding approximation.  This gives a translation-invariant Hamiltonian
acting on \(\ell^2(\Z^d,\C^N)\)
(see, for instance, \cites{Bernevig-Hughes-Zhang:Quantum,
  Fu-Kane-Mele:Insulators,
  Liu-Qi-Zhang-Dai-Fang-Zhang:Model_Hamiltonian}).  Bloch--Floquet
theory describes the Fermi projection through a vector bundle over the
\(d\)\nb-torus,
with extra structure that reflects the symmetries of the system (see,
for instance, \cites{De_Nittis-Gomi:Real_Bloch,
  De_Nittis-Gomi:Quaternionic_Bloch, Kennedy-Zirnbauer:Bott_gapped}).
The K\nb-theory of the \(d\)\nb-torus
is easily computed.  Once \(d\ge2\),
many of the topological phases that are predicted this way are
obtained by stacking a lower-dimensional topological insulator in some
direction.  Such topological phases are called ``weak'' by
Fu--Kane--Mele~\cite{Fu-Kane-Mele:Insulators}.  They claim that
weak topological phases are not robust under disorder.

Other authors have claimed instead that weak topological insulators
are also quite robust, see~\cite{Ringel-Kraus-Stern:Strong_side}.
Their proof of robustness, however, is no longer topological.
Roughly speaking, the idea is that, although disorder may destroy
the topological phase, it must be rather special to do this.
\emph{Random} disorder will rarely be so special.  So in a finite
volume approximation, the topological phase will remain intact in
most places, and the small area where the randomness destroys it
will become negligible in the limit of infinite volume.  Such an
argument may also work for the Hamiltonian of an insulator that is
homotopic to a trivial one.  Our study is purely topological in
nature and thus cannot see such phenomena.

We are going to explain the difference between strong and weak
topological phases and the robustness of the former through a
difference in the underlying observable algebras.  Namely, we shall
model a material with disorder by the Roe \(\Cst\)\nb-algebra
of \(\R^d\)
or~\(\Z^d\),
which is a central object of coarse geometry.
Roe~\cites{Roe:Index_open_I, Roe:Index_open_II} introduced them
to get index theorems for elliptic operators on non-compact
Riemannian manifolds.

Before choosing our observable algebra, we should ask: What is causing
topological phases?  At first sight, the answer seems to be the
translation invariance of the Hamiltonian.  Translation-invariance
alone is not enough, however.  And it is destroyed by disorder.  The
subalgebra of translation-invariant operators on the Hilbert space
\(\ell^2(\Z^d,\C^N)\)
is the algebra of \(N\times N\)-matrices
over the group von Neumann algebra of~\(\Z^d\),
which is isomorphic to \(L^\infty(\T^d, \Mat_N)\).
If topological phases were caused by translation invariance alone,
they should be governed by the K\nb-theory of \(L^\infty(\T^d, \Mat_N)\).
This is clearly not the case.  Instead, we need the group
\(\Cst\)\nb-algebra,
which is isomorphic to \(\Cont(\T^d,\Mat_N)\).
The reason why the spectral projections of the Hamiltonian belong to
the group \(\Cst\)\nb-algebra
instead of the group von Neumann algebra is that the matrix
coefficients of the Hamiltonian for \((x,y)\in\Z^d\)
are supported in the region \(\norm{x-y}\le R\)
for some \(R>0\);
let us call such operators \emph{controlled}.  The controlled
operators do not form a \(\Cst\)\nb-algebra,
and it makes no difference for \(\K\)\nb-theory purposes to allow
the Hamiltonian to be a limit of controlled operators in the norm
topology.  We shall see below that this is equivalent to continuity
with respect to the action of~\(\R^d\)
on \(\Bound(\ell^2(\Z^d,\C^N))\)
generated by the position observables.  This action restricts to
the translation action of~\(\R^d\)
on the group von Neumann algebra \(L^\infty(\T^d, \Mat_N)\),
so that its continuous elements are the functions in
\(\Cont(\T^d,\Mat_N)\).
Hence \(\Cont(\T^d,\Mat_N) \subseteq \Bound(\ell^2(\Z^d,\C^N))\)
consists of those operators that are both translation-invariant and
norm limits of controlled operators. 

Since disorder destroys translation invariance, we should drop this
assumption to model systems with disorder.  The \(\Cst\)\nb-algebra
of all operators on \(\ell^2(\Z^d,\C^N)\)
that are norm limits of controlled operators is the algebra
of \(N\times N\)-matrices
over the \emph{uniform Roe \(\Cst\)\nb-algebra}
of~\(\Z^d\).
To get the Roe \(\Cst\)\nb-algebra,
we must work in \(\ell^2(\Z^d,\Hils)\)
for a separable Hilbert space~\(\Hils\)
and add a local compactness property, namely, that the operators
\(\braketop{x}{(H-\lambda)^{-1}}{y} \in\Bound(\Hils)\)
are compact for all \(x,y\in\Z^d\),
\(\lambda\in \C\setminus\R\).
This property is automatic for operators on \(\ell^2(\Z^d,\C^N)\).
Working on \(\ell^2(\Z^d,\Hils)\)
and assuming local compactness means that we include infinitely many
bands in our model and require only finitely many states with finite
energy in each finite volume.
Kubota~\cite{Kubota:Controlled_bulk-edge} has already used the
uniform Roe \(\Cst\)\nb-algebra
and the Roe \(\Cst\)\nb-algebra
in the context of topological insulators.  He prefers the uniform
Roe \(\Cst\)\nb-algebra.  We explain why we consider this a
mistake.

Working in the Hilbert space \(\ell^2(\Z^d,\C^N)\)
already involves an approximation.  We ought to work in a continuum
model, that is, in the Hilbert space \(L^2(\R^d,\C^k)\),
where~\(k\)
is the number of internal degrees of freedom.
This Hilbert space is isomorphic to
\[
L^2(\R^d,\C^k)
\cong L^2(\Z^d\times (0,1]^d)\otimes \C^k
\cong \ell^2(\Z^d,\Hils\otimes \C^k)
\]
when we cover~\(\R^d\)
by the disjoint translates of the fundamental domain~\((0,1]^d\).
This identification preserves both controlled and locally compact
operators.  Thus the Roe \(\Cst\)\nb-algebras
of \(\Z^d\)
and~\(\R^d\)
are isomorphic.  For the Roe \(\Cst\)\nb-algebra
of~\(\R^d\),
it makes no difference to replace \(L^2(\R^d)\)
by \(L^2(\R^d,\C^k)\)
or \(L^2(\R^d,\Hils)\):
all these Hilbert spaces give isomorphic \(\Cst\)\nb-algebras
of locally compact, approximately controlled operators.  So
there is only one Roe \(\Cst\)\nb-algebra
for~\(\R^d\),
and it is isomorphic to the non-uniform Roe \(\Cst\)\nb-algebra
of~\(\Z^d\).
We view the appearance of the uniform Roe \(\Cst\)\nb-algebra
for~\(\Z^d\)
as an artefact of simplifying assumptions in tight binding models.

We describe some interesting elements of the Roe \(\Cst\)\nb-algebra
of~\(\R^d\)
in Example~\ref{exa:Roe_Rn}.  In particular, it contains all
\(\Cont_0\)\nb-functions of the impulse operator~\(P\)
on \(L^2(\R^d)\) or, equivalently,
\begin{equation}
  \label{eq:f_of_P}
  \int_{\R^d} f(x) \exp(\ima x P) \,\diff x  
\end{equation}
for \(f\in \Cst(\R^d)\);
this operator is controlled if and only if~\(f\)
has compact support.  If \(V\in L^\infty(\R^d)\),
then the operator of multiplication by~\(V\) on~\(L^2(\R^d)\)
is controlled, but not locally compact.  Its product with an
operator as in~\eqref{eq:f_of_P} belongs to the Roe
\(\Cst\)\nb-algebra.

The real and complex K\nb-theory of the Roe \(\Cst\)\nb-algebra
of~\(\Z^d\)
is well known: up to a dimension shift of~\(d\),
it is the \(\K\)\nb-theory
of \(\R\)
or~\(\C\),
respectively.  In particular, the Roe \(\Cst\)\nb-algebra
as an observable algebra is small enough to predict some distinct
topological phases.  These coincide with Kitaev's periodic
table~\cite{Kitaev:Periodic_table}.  This corroborates the choice of
the Roe \(\Cst\)\nb-algebra
as the observable algebra for disordered materials.

When we disregard disorder, the Roe \(\Cst\)\nb-algebra
may be replaced by its translation-invariant subalgebra, which is
isomorphic to
\[
\Cst(\Z^d) \otimes \Comp(\Hils) \cong \Cont(\T^d,\Comp(\Hils)),
\]
where \(\Comp(\Hils)\)
denotes the \(\Cst\)\nb-algebra
of compact operators on an infinite-dimensional separable Hilbert
space~\(\Hils\).
In the real case, the \(d\)\nb-torus
must be given the real involution by the restriction of complex
conjugation on \(\C^d\supseteq\T^d\).
The real or complex \(\K\)\nb-theory
groups of the ``real'' \(d\)\nb-torus describe both weak and strong
topological phases in the presence of different types of symmetries.
We show that the map
\begin{equation}
  \label{eq:comparison_group_Roe}
  \K_*(\Cst(\Z^d)_\mathbb{F}) \to \K_*(\Cst_\Roe(\Z^d)_\mathbb{F})
\end{equation}
for \(\mathbb{F} = \R\)
or \(\mathbb{F} = \C\)
is split surjective and that its kernel is the subgroup generated by
the images of \(\K_*(\Cst(\Z^{d-1})_\mathbb{F})\)
for all coordinate embeddings \(\Z^{d-1} \to \Z^d\).
That is, the kernel of the map in~\eqref{eq:comparison_group_Roe}
consists exactly of the \(\K\)\nb-theory
classes of weak topological insulators as defined by
Fu--Kane--Mele~\cite{Fu-Kane-Mele:Insulators}.  The strong topological
insulators are those that remain topologically protected even if the
observable algebra is enlarged to the Roe \(\Cst\)\nb-algebra,
allowing rather general disorder.

Following Bellissard \cites{Bellissard:K-theory_solid,
  Bellissard-Elst-Schulz-Baldes:Noncommutative_Hall}, disorder is
usually modelled by crossed product \(\Cst\)\nb-algebras
\(\mathcal{A} = \Cont(\Omega)\rtimes\Z^d\),
where~\(\Omega\)
is the space of disorder configurations.  It is more precise to say,
however, that the space~\(\Omega\)
describes \emph{restricted} disorder.  Uncountably many different
choices are possible.  Such models are only reasonable when the
physically relevant objects do not depend on the choice.  But the
\(\K\)\nb-theory
of the crossed product depends on the topology of the
space~\(\Omega\).
To make it independent of~\(\Omega\),
the space~\(\Omega\)
is assumed to be contractible
in~\cite{Prodan-Schulz-Baldes:Bulk_boundary}.
This fits well with standard choices of~\(\Omega\)
such as a product space \(\prod_{n\in\Z^d} [-1,1]\)
to model a random potential.  The resulting \(\K\)\nb-theory
then becomes the same as in the system without disorder.  So another
argument must be used to explain the difference between weak and
strong topological phases, compare
\cite{Prodan-Schulz-Baldes:Bulk_boundary}*{Remark 5.3.5}.
If one allows non-metrisable~\(\Omega\), then
there is a maximal choice for~\(\Omega\), namely,
the Stone--\v{C}ech compactification of~\(\Z^d\).
The resulting crossed product \(\ell^\infty(\Z^d)\rtimes\Z^d\)
is isomorphic to the uniform Roe \(\Cst\)\nb-algebra
of~\(\Z^d\),
see~\cite{Kubota:Controlled_bulk-edge}.
Nevertheless, even this maximal choice of~\(\Omega\) still contains
a hidden restriction on disorder: the number of bands for a tight
binding model is fixed, and so the disorder is also limited to a
fixed finite number of bands.  The Roe \(\Cst\)\nb-algebra
of~\(\Z^d\)
also removes this hidden restriction on the allowed disorder.
It is also a crossed product, namely,
\[
\Cst_\Roe(\Z^d) \cong \ell^\infty(\Z^d,\Comp(\Hils)) \rtimes \Z^d.
\]
The \(\Cst\)\nb-algebra \(\ell^\infty(\Z^d,\Comp(\Hils))\)
is not isomorphic to \(\ell^\infty(\Z^d) \otimes \Comp(\Hils)\):
it even has different \(\K\)\nb-theory.

Since the Roe \(\Cst\)\nb-algebra
has not been used much in the context of topological insulators, we
recall its main properties in Section~\ref{sec:Roe_Cstar}.  We
highlight its robustness or even ``universality.''  Roughly
speaking, there is only one Roe
\(\Cst\)\nb-algebra
in each dimension, which describes all kinds of disordered materials
in that dimension without symmetries.  The various symmetries
(time-reversal, particle-hole, chiral) may be added by tensoring the
real or complex Roe \(\Cst\)\nb-algebra
with Clifford algebras, which replaces \(\K_0\)
by~\(\K_i\)
for some \(i\in\Z\).
We shall not say much about this here.  The Roe \(\Cst\)\nb-algebra
of a coarse space is a coarse invariant.  In particular, all coarsely
dense subsets in~\(\R^d\)
give isomorphic Roe \(\Cst\)\nb-algebras.
Furthermore, the Roe \(\Cst\)\nb-algebras
of~\(\Z^d\)
and other coarsely dense subsets of~\(\R^d\)
are isomorphic to that of~\(\R^d\).
Thus it makes no difference whether we work in a continuum or lattice
model.  We also consider the twists of the Roe \(\Cst\)\nb-algebra
defined by magnetic fields.  The resulting twisted Roe
\(\Cst\)\nb-algebras
are also isomorphic to the untwisted one.  This robustness of the Roe
\(\Cst\)\nb-algebra
means that the same \emph{strong} topological phases occur for all
materials of a given dimension and symmetry type, even for
quasi-crystals and aperiodic materials.

We compute the \(\K\)\nb-theory
of the Roe \(\Cst\)\nb-algebra
in Section~\ref{sec:coarse_MV} using the coarse Mayer--Vietoris
principle introduced in~\cite{Higson-Roe-Yu:Coarse_Mayer-Vietoris}.
We prove the Mayer--Vietoris exact sequence in the real and complex
case by reducing it to the \(\K\)\nb-theory
exact sequence for \(\Cst\)\nb-algebra
extensions.  The computation of \(\K_*(\Cst_\Roe(X)_\mathbb{F})\)
for \(X=\Z^d\)
is based on the vanishing of this invariant for half-spaces
\(\N\times\Z^{d-1}\).
This implies
\(\K_{*+d}(\Cst_\Roe(X)_\mathbb{F}) \cong \K_*(\mathbb{F})\).
It shows also that the inclusion
\(\Cst_\Roe(\Z^{d-1})_\mathbb{F} \to \Cst_\Roe(\Z^d)_\mathbb{F}\)
induces the zero map on \(\K\)\nb-theory.
Hence the map~\eqref{eq:comparison_group_Roe} kills all
elements of \(\K_*(\Cst(\Z^d)_\mathbb{F})\)
that come from inclusions \(\Z^{d-1} \injto \Z^d\).

In Section~\ref{sec:weak_phases} we compute the real and complex
\(\K\)\nb-theory for the group \(\Cst\)\nb-algebra
of~\(\Z^d\)
and the map in~\eqref{eq:comparison_group_Roe}.  More precisely, we
describe the composite of the map in~\eqref{eq:comparison_group_Roe}
with the isomorphism
\(\K_{*+d}(\Cst_\Roe(X)_\mathbb{F}) \cong \K_*(\mathbb{F})\):
this is the pairing with the fundamental class of the ``real''
\(d\)\nb-torus~\(\T^d\).
Except for an adaptation to ``real'' manifolds, this fundamental
class is introduced in~\cite{Kasparov:Novikov}.  We show that the
fundamental class extends to a \(\K\)\nb-homology
class on the Roe \(\Cst\)\nb-algebra
and that the pairing with this \(\K\)\nb-homology
class is an isomorphism
\(\K_{*+d}(\Cst_\Roe(X)_\mathbb{F}) \cong \K_*(\mathbb{F})\).

\section{Roe \texorpdfstring{$\Cst$}{C*}-algebras}
\label{sec:Roe_Cstar}

In this section, we define the real and complex Roe
\(\Cst\)\nb-algebras
of a proper metric space and prove that they are invariant under
passing to a coarsely dense subspace and, more generally, under coarse
equivalence.  We prove that the twists used to encode magnetic fields
do not change them.  And we describe elements of the Roe
\(\Cst\)\nb-algebra
of a subset of~\(\R^d\)
as those locally compact operators that are continuous for the
representation of~\(\R^d\)
generated by the position operators.  We describe the subalgebras of
smooth, real-analytic and holomorphic elements of the Roe
\(\Cst\)\nb-algebra
for this action of~\(\R^d\).
We show that Roe \(\Cst\)\nb-algebras
have approximate units of projections, which simplifies the definition
of their \(\K\)\nb-theory.

Let \((X,d)\)
be a locally compact, second countable, metric space.  We assume the
metric~\(d\)
to be \emph{proper}, that is, bounded subsets of~\(X\)
are compact.  We shall be mainly interested in \(\R^d\)
or a discrete subset of~\(\R^d\)
with the restriction of the Euclidean metric.  (All our results on
general proper metric spaces extend easily to the more general coarse
spaces introduced in~\cite{Roe:Lectures}.)  Let~\(\Hils\)
be a real or complex separable Hilbert space and let
\(\varrho\colon \Cont_0(X)\to\Bound(\Hils)\)
be a nondegenerate representation.  We are going to define the Roe
\(\Cst\)\nb-algebra
of~\(X\)
with respect to~\(\varrho\),
see also \cite{Higson-Roe:Analytic_K}*{Section 6.3}.  Depending on
whether~\(\Hils\)
is a real or complex Hilbert space, this gives a real or complex
version of the Roe \(\Cst\)\nb-algebra.
Both cases are completely analogous.

Let \(T\in\Bound(\Hils)\).
We call~\(T\)
\emph{locally compact} (on~\(X\))
if the operators \(\varrho(f)T\)
and~\(T\varrho(f)\)
are compact for all \(f\in\Cont_0(X)\).
The \emph{support} of~\(T\)
is a subset \(\supp T\subseteq X\times X\).
Its complement consists of all \((x,y)\in X\times X\)
for which there are neighbourhoods \(U_x\),
\(U_y\)
in~\(X\)
such that \(\varrho(f) T \varrho(g)=0\)
for all \(f\in\Cont_0(U_x)\),
\(g\in\Cont_0(U_y)\).
The operator~\(T\)
is \emph{controlled} (or has \emph{finite propagation}) if there is
\(R>0\)
such that \(d(x,y)\le R\)
for all \((x,y)\in \supp T\).
We sometimes write ``\(R\)\nb-controlled'' to highlight the control
parameter~\(R\).
The locally compact, controlled operators on~\(\Hils\)
form a \Star{}algebra.  Its closure in \(\Bound(\Hils)\)
is the \emph{Roe \(\Cst\)\nb-algebra} \(\Cst_\Roe(X,\varrho)\).

The representation~\(\varrho\)
is called \emph{ample} if the operator \(\varrho(f)\)
for \(f\in\Cont_0(X)\)
is only compact for \(f=0\).

\begin{theorem}
  \label{the:Roe_ample_unique}
  Let \(\varrho_i\colon \Cont_0(X)\to\Bound(\Hils_i)\)
  for \(i=1,2\)
  be ample representations, where \(\Hils_1\)
  and~\(\Hils_2\)
  are both complex or both real.  Then
  \(\Cst_\Roe(X,\varrho_1) \cong \Cst_\Roe(X,\varrho_2)\).
  Even more, there is a unitary operator
  \(U\colon \Hils_1 \congto \Hils_2\)
  with
  \[
  U \Cst_\Roe(X,\varrho_1) U^* = \Cst_\Roe(X,\varrho_2).
  \]
\end{theorem}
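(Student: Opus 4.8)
The plan is to reduce the statement to producing a single unitary $U\colon\Hils_1\to\Hils_2$ that is \emph{controlled}, meaning that there is $R>0$ with $\varrho_2(f)\,U\,\varrho_1(g)=0$ whenever $d(\supp f,\supp g)>R$; equivalently, the support of $U$, formed with respect to the pair $(\varrho_1,\varrho_2)$, lies within distance $R$ of the diagonal. Granting such a $U$, both $U$ and $U^*$ are $R$\nb-controlled, so for a locally compact, $S$\nb-controlled operator $T$ on $\Hils_1$ the operator $UTU^*$ is $(2R+S)$\nb-controlled; it is moreover locally compact, since for $f\in\Contc(X)$ properness of $X$ makes the $(R+1)$\nb-neighbourhood of $\supp f$ relatively compact, so one may choose $g\in\Contc(X)$ equal to $1$ there, and then $\varrho_2(f)U=\varrho_2(f)U\varrho_1(g)$ by the control condition, whence $\varrho_2(f)UTU^*=\varrho_2(f)U\bigl(\varrho_1(g)T\bigr)U^*$ is compact because $\varrho_1(g)T$ is; the symmetric computation handles $UTU^*\varrho_2(f)$. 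Hence $T\mapsto UTU^*$ is a norm-continuous \Star{}isomorphism from the \Star{}algebra of locally compact controlled operators on $\Hils_1$ onto its counterpart on $\Hils_2$, and, passing to closures, it gives the asserted identity $U\,\Cst_\Roe(X,\varrho_1)\,U^*=\Cst_\Roe(X,\varrho_2)$. Thus everything reduces to constructing a controlled unitary.

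The key ingredient is an \emph{absorption} lemma: if $\varrho$ is ample on $\Hils$ and $\sigma$ is any nondegenerate representation of $\Cont_0(X)$ on a separable Hilbert space $\Hils'$, then there is a controlled unitary $\Hils\to\Hils\oplus\Hils'$, controlled with respect to $\varrho$ and $\varrho\oplus\sigma$. Granting this, I apply it once with $(\varrho,\sigma)=(\varrho_1,\varrho_2)$ to obtain a controlled unitary $W_1\colon\Hils_1\to\Hils_1\oplus\Hils_2$, and once with $(\varrho,\sigma)=(\varrho_2,\varrho_1)$ to obtain a controlled unitary $W_2\colon\Hils_2\to\Hils_2\oplus\Hils_1$; since the flip $\Hils_1\oplus\Hils_2\congto\Hils_2\oplus\Hils_1$ has zero propagation, composing $W_1$, this flip, and $W_2^{-1}$ yields a controlled unitary $U\colon\Hils_1\to\Hils_2$, and the theorem follows from the previous paragraph.

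To prove the lemma I would decompose $X$ into small cells. Fix $R>0$ and choose a Borel partition $X=\bigsqcup_n E_n$ with $\operatorname{diam}E_n<R$ and each $E_n$ having nonempty interior. Since $\varrho$ and $\sigma$ are nondegenerate, they extend to the bounded Borel functions on $X$, so the commuting projections $\varrho(\chi_{E_n})$ and $\sigma(\chi_{E_n})$ are defined and $\Hils=\bigoplus_n\varrho(\chi_{E_n})\Hils$, $\Hils'=\bigoplus_n\sigma(\chi_{E_n})\Hils'$. Ampleness of $\varrho$ forces $\varrho(\chi_{E_n})\Hils$ to be infinite-dimensional, for otherwise $\varrho(f)$ would be finite-rank, hence compact, for a nonzero $f\in\Cont_0(X)$ supported in the interior of $E_n$. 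As $\sigma(\chi_{E_n})\Hils'$ is separable, there is a unitary $U_n\colon\varrho(\chi_{E_n})\Hils\congto\varrho(\chi_{E_n})\Hils\oplus\sigma(\chi_{E_n})\Hils'$, and I set $U\defeq\bigoplus_nU_n$. By construction $U$ carries $\varrho(\chi_{E_n})\Hils$ into $(\varrho\oplus\sigma)(\chi_{E_n})(\Hils\oplus\Hils')$ for every $n$, so, because $\operatorname{diam}E_n<R$, a short computation shows $(\varrho\oplus\sigma)(f)\,U\,\varrho(g)=0$ whenever $d(\supp f,\supp g)>R$; thus $U$ is the required controlled unitary.

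The main obstacle is the choice of partition in the last paragraph: one needs a Borel partition of $X$ into cells of uniformly bounded diameter, each with nonempty interior. The nonempty-interior condition is essential, since it is exactly what lets ampleness make each cell support an infinite-dimensional, hence sufficiently large, piece of $\Hils$; without it a cell could carry only a finite-dimensional part and the block unitary $U_n$ would not exist. For $X=\R^d$ and its coarsely dense subsets --- the cases of real interest here --- such a partition is evident (for a discrete subset one takes singleton cells once the diameter bound exceeds the minimal spacing), and for general proper metric spaces it is a standard fact. Everything else is routine bookkeeping, and in particular no Eilenberg swindle is needed, because the absorption lemma already delivers $\Hils_1\cong\Hils_1\oplus\Hils_2\cong\Hils_2$ by controlled unitaries in one step each.
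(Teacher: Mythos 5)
Your strategy is essentially the paper's: partition $X$ into Borel cells of uniformly bounded diameter with non-empty interior, use ampleness to see that each cell carries an infinite-dimensional subspace, build a block-diagonal unitary along the cells, observe that this unitary is controlled, and check that conjugation by a controlled unitary preserves local compactness and finite propagation, hence the Roe algebra. Your two variations are harmless and even pleasant. The local-compactness check for $UTU^*$ via properness of $X$ and the identity $\varrho_2(f)U=\varrho_2(f)U\varrho_1(g)$ is cleaner than the paper's cell-by-cell criterion; it only needs the small extra remark that $1-\varrho_1(g)$ is a strong limit of $\varrho_1\bigl(h\cdot(1-g)\bigr)$ for an approximate unit $h$ of $\Cont_0(X)$, so that the control condition really yields the identity. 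Your absorption lemma, applied twice and composed with the flip, is a mild detour: since both representations are ample you could take block unitaries $\varrho_1(\chi_{E_n})\Hils_1\congto\varrho_2(\chi_{E_n})\Hils_2$ directly, which is what the paper does; on the other hand your formulation has the small virtue of requiring ampleness on one side only.

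The one place where you have not actually proved anything is exactly where the paper spends almost all of its effort: the existence, for a general proper metric space, of a Borel partition into sets of uniformly bounded diameter each having non-empty interior. For $\R^d$, $\Z^d$ and their discrete or coarsely dense subsets this is indeed evident (half-open cubes, singletons), so for the spaces this paper cares about your argument is complete. But the theorem is stated for an arbitrary locally compact, second countable, proper metric space, and there the obvious construction --- disjointify a locally finite cover by sets of diameter at most $R$ via $B'_n = U_n\setminus\bigcup_{j<n}U_j$ --- produces cells that may well have empty interior. Repairing this requires the merging argument (absorbing the interior-less cells into neighbouring ones while keeping the cells disjoint and their diameters bounded by $3R$, and keeping the cover locally finite) that constitutes the bulk of the paper's proof. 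Declaring this ``a standard fact'' without proof or reference therefore leaves the main technical step unestablished in the stated generality; you should either supply such a merging argument or cite it explicitly (it is Lemma~2 of Higson--Roe--Yu, whose proof the paper follows).
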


Many references only assert the weaker statement that the Roe
\(\Cst\)\nb-algebras
for all ample representations have canonically isomorphic
\(\K\)\nb-theory,
compare \cite{Higson-Roe:Analytic_K}*{Corollary 6.3.13}.  The
statement above is
\cite{Higson-Roe-Yu:Coarse_Mayer-Vietoris}*{Lemma~2}, and our proof is
the same.

\begin{proof}
  If~\(X\)
  is compact, then \(\Cst_\Roe(X,\varrho_i) = \Comp(\Hils_i)\).
  Since~\(\Hils_i\)
  for \(i=1,2\)
  are assumed to be separable, there is a unitary
  \(U\colon \Hils_1 \congto \Hils_2\),
  and it will do the job.  So we may assume~\(X\)
  to be non-compact.  Fix \(R>0\).
  The open balls \(B(x,R)\)
  for \(x\in X\)
  cover~\(X\).
  Since~\(X\)
  is second countable, there is a subordinate countable, locally
  finite, open covering \(X= \bigcup_{n\in\N} U_n\),
  where each~\(U_n\)
  is non-empty and has diameter at most~\(R\).
  Then there is a countable covering of~\(X\)
  by disjoint Borel sets, \(X=\bigsqcup_{n\in\N} B'_n\),
  where each~\(B'_n\)
  has diameter at most~\(R\),
  and such that any relatively compact subset is already covered by
  finitely many of the~\(B'_n\):
  simply take \(B'_n \defeq U_n \setminus \bigcup_{j< n} U_j\).
  Next we modify the subsets~\(B'_n\)
  so that they all have non-empty interior.  Let \(M\subseteq\N\)
  be the set of all \(n\in\N\)
  for which~\(B'_n\)
  has non-empty interior.  Let \(m\in M\).
  Let~\(K_m\)
  be the set of all \(k\in\N\)
  for which~\(B'_k\)
  has empty interior and \(U_m\cap U_k\neq \emptyset\).
  The set~\(K_m\)
  is finite because~\(U_m\)
  is bounded and hence relatively compact.  Let
  \(K_m^\circ \defeq K_m \setminus \bigcup_{i\in M, i<m} K_i\).
  Define
  \[
  B_m \defeq B'_m \sqcup \bigsqcup_{k\in K_m^\circ} B'_k.
  \]
  This is a Borel set.  It has non-empty interior because~\(B'_m\)
  has non-empty interior.  Its diameter is at most~\(3R\)
  because all the~\(U_k\)
  for \(k\in K_m\)
  intersect~\(U_m\).
  The definition of~\(K_m^\circ\)
  ensures that the subsets~\(B_m\)
  for \(m\in M\)
  are disjoint.  We claim that \(\bigsqcup_{m\in M} B_m = X\).
  This is equivalent to
  \(\bigcup_{m\in M} K_m^\circ = \N\setminus M\)
  because \(B'_m\subseteq B_m\)
  for all \(m\in M\).
  This is further equivalent to
  \(\bigcup_{m\in M} K_m = \N\setminus M\).
  Let \(k_0\in\N \setminus M\),
  that is, \(B'_{k_0}\)
  has empty interior.  Then~\(B'_{k_0}\)
  does not contain~\(U_{k_0}\).
  So there is some \(k_1<k_0\)
  with \(U_{k_0} \cap U_{k_1} \neq \emptyset\).
  If \(k_1\in \N\setminus M\),
  then~\(B'_{k_1}\)
  does not contain \(U_{k_0} \cap U_{k_1}\).
  So there is \(k_2<k_1\)
  with \(U_{k_0} \cap U_{k_1} \cap U_{k_2} \neq \emptyset\).
  We continue like this and build a decreasing chain
  \(k_0>k_1>\dotsc>k_\ell\)
  such that \(k_0,\dotsc,k_{\ell-1}\in \N\setminus M\) and
  \[
  U_{k_0} \cap U_{k_1} \cap \dotsb \cap U_{k_\ell}
  \neq \emptyset.
  \]
  We eventually reach \(k_\ell\in M\)
  because \(B'_1= U_1\)
  is open and so \(1\in M\).
  We have \(k_0\in K_{k_\ell}\).
  So \(\bigcup_{m\in M} K_m = \N\setminus M\) as asserted.

  We have built a covering of~\(X\)
  by disjoint Borel sets \(X = \bigsqcup_{m\in M} B_m\)
  of diameter at most~\(3 R\),
  with non-empty interiors, and such that any relatively compact
  subset is already covered by finitely many of the~\(B_m\).
  The set~\(M\)
  is at most countable, and it cannot be finite because then~\(X\)
  would be bounded and hence compact.

  Using the Borel functional calculus for the
  representation~\(\varrho_i\),
  we may decompose the Hilbert space~\(\Hils_i\)
  as an orthogonal direct sum,
  \(\Hils_i = \bigoplus_{m\in M} \Hils_{i,m}\),
  where~\(\Hils_{i,m}\)
  is the image of the projection \(\varrho_i(1_{B_m})\).
  Since each~\(B_m\)
  has non-empty interior and our representations are ample, there is a
  non-compact operator on each~\(\Hils_{i,m}\).
  So no~\(\Hils_{i,m}\)
  has finite dimension.  Hence there is a unitary
  \(U_m\colon \Hils_{1,m}\congto\Hils_{2,m}\)
  for each \(m\in M\).
  We combine these into a unitary operator
  \(U= \bigoplus_{m\in M} U_m\colon \Hils_1 \congto \Hils_2\).

  Let \(T\in\Bound(\Hils_1)\).
  We claim that~\(U T U^*\)
  is locally compact or controlled if and only if~\(T\)
  is.  This implies
  \(U \Cst_\Roe(X,\varrho_1) U^* = \Cst_\Roe(X,\varrho_2)\)
  as asserted.  First, we claim that~\(T\)
  is locally compact if and only if \(T \varrho_1(1_{B_m})\)
  and \(\varrho_1(1_{B_m}) T\)
  are compact for all \(m\in M\).
  In one direction, this uses that there is \(g\in \Cont_0(X)\)
  with \(1_{B_m} \le g\)
  because~\(B_m\)
  has finite diameter.  In the other direction, it uses that any
  relatively compact subset of~\(X\)
  is already covered by finitely many~\(B_m\).
  Since \(U(\Hils_{1,m}) = \Hils_{2,m}\),
  the criterion above shows that~\(T\)
  is locally compact if and only if \(U T U^*\)
  is so.  Since the diameter of~\(B_m\)
  is at most~\(3 R\),
  the operator~\(U_m\),
  viewed as a partial isometry on \(\Hils_1 \oplus \Hils_2\),
  is \(3 R\)\nb-controlled.
  Thus~\(U\)
  is also \(3 R\)\nb-controlled.
  So \(U T U^*\) is controlled if and only if~\(T\) is.
\end{proof}

\begin{corollary}
  \label{cor:Roe_matrix-stable}
  Let~\(\varrho\)
  be an ample representation and let \(m\in\N_{\ge2}\).  Then
  \[
  \Cst_\Roe(X,\varrho) \cong \Mat_m(\Cst_\Roe(X,\varrho)).
  \]
\end{corollary}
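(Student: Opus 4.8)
The plan is to realise \(\Mat_m(\Cst_\Roe(X,\varrho))\) as the Roe \(\Cst\)\nb-algebra of~\(X\) for a suitable ample representation and then invoke Theorem~\ref{the:Roe_ample_unique}. Concretely, I would consider the Hilbert space \(\Hils^{\oplus m}\cong\Hils\otimes\C^m\) with the amplified representation \(\varrho^{\oplus m}\defeq\varrho\otimes 1_{\C^m}\), which sends \(f\in\Cont_0(X)\) to the diagonal operator \(\operatorname{diag}(\varrho(f),\dotsc,\varrho(f))\). This representation is again nondegenerate, and it is ample: if \(\varrho^{\oplus m}(f)\) is compact, then so is its diagonal corner \(\varrho(f)\), whence \(f=0\) by ampleness of~\(\varrho\).

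Next I would identify \(\Cst_\Roe(X,\varrho^{\oplus m})\) with \(\Mat_m(\Cst_\Roe(X,\varrho))\) under the canonical isomorphism \(\Bound(\Hils^{\oplus m})\cong\Mat_m(\Bound(\Hils))\). Writing \(T\in\Bound(\Hils^{\oplus m})\) as a matrix \((T_{ij})_{i,j=1}^m\) with \(T_{ij}\in\Bound(\Hils)\), one has \(\varrho^{\oplus m}(f)\,T\,\varrho^{\oplus m}(g)=(\varrho(f)\,T_{ij}\,\varrho(g))_{ij}\). Hence \(T\) is locally compact for~\(\varrho^{\oplus m}\) if and only if every \(T_{ij}\) is locally compact for~\(\varrho\); and \(\supp T=\bigcup_{i,j}\supp T_{ij}\), so \(T\) is \(R\)\nb-controlled if and only if every \(T_{ij}\) is. Therefore the \Star{}algebra of locally compact, controlled operators on \(\Hils^{\oplus m}\) is exactly \(\Mat_m\) of the corresponding \Star{}algebra on~\(\Hils\). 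Passing to norm closures and using that \(\Mat_m(A)\) is complete and contains \(\Mat_m(A_0)\) densely whenever \(A_0\subseteq A\) is dense, we obtain \(\Cst_\Roe(X,\varrho^{\oplus m})=\Mat_m(\Cst_\Roe(X,\varrho))\) inside \(\Bound(\Hils^{\oplus m})\).

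Finally, since \(\varrho\) and \(\varrho^{\oplus m}\) are ample representations of~\(X\) on separable Hilbert spaces that are both real or both complex, Theorem~\ref{the:Roe_ample_unique} gives \(\Cst_\Roe(X,\varrho)\cong\Cst_\Roe(X,\varrho^{\oplus m})\cong\Mat_m(\Cst_\Roe(X,\varrho))\), as claimed. I do not expect a real obstacle here: the only points needing a line of justification are the ampleness of the amplified representation and the blockwise characterisation of local compactness and finite propagation, both immediate from the definitions; the remaining step is the standard fact that forming \(m\times m\) matrices commutes with taking closures in a \(\Cst\)\nb-algebra.
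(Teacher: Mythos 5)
Your proposal is correct and is essentially the paper's own argument: the paper likewise observes that the direct sum representation \(m\cdot\varrho\) is still ample, that an operator on \(\Hils^m\) is locally compact or controlled if and only if its block matrix entries are, and then concludes via Theorem~\ref{the:Roe_ample_unique}. Your write-up merely spells out the blockwise verifications and the closure step in more detail.
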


\begin{proof}
  The direct sum representation \(m\cdot \varrho\)
  is still ample.  So
  \(\Cst_\Roe(X,m\cdot \varrho) \cong \Cst_\Roe(X,\varrho)\).
  An operator on~\(\Hils^m\)
  is locally compact or controlled if and only if its block
  matrix entries in \(\Bound(\Hils)\)
  are so.  Thus
  \(\Cst_\Roe(X,m\cdot \varrho) = \Mat_m(\Cst_\Roe(X,\varrho))\).
\end{proof}

The stabilisation \(\Cst_\Roe(X,\varrho) \otimes \Comp(\ell^2\N)\),
however, is usually not isomorphic to \(\Cst_\Roe(X,\varrho)\).

\begin{example}
  \label{exa:Roe_discrete}
  Let~\(X\)
  be discrete, for instance, \(X=\Z^d\).
  The representation~\(\varrho\)
  of \(\Cont_0(X)\)
  on~\(\ell^2(X)\)
  by multiplication operators is not ample.  It defines the
  \emph{uniform Roe \(\Cst\)\nb-algebra}
  of~\(X\).
  To get the Roe \(\Cst\)\nb-algebra,
  we may take the representation of \(\Cont_0(X)\)
  on \(\ell^2(X)\otimes \ell^2(\N)\).

  An operator~\(T\)
  on \(\ell^2(X)\otimes \ell^2(\N)\)
  is determined by its matrix coefficients
  \(T_{x,y} = \braketop{x}{T}{y} \in \Bound(\ell^2(\N))\)
  for \(x,y\in X\).
  It is locally compact if and only if all~\(T_{x,y}\)
  are compact.  Its support is the set of all \((x,y)\in X^2\)
  with \(T_{x,y}\neq0\).
  So it is controlled if and only if there is \(R>0\)
  so that \(T_{x,y}=0\)
  for \(d(x,y)>R\).
  The Roe \(\Cst\)\nb-algebra is the norm closure of these operators.

  If~\(X\)
  is a discrete group equipped with a translation-invariant metric,
  then \(\Cst_\Roe(X)\)
  is isomorphic to the reduced crossed product for the translation
  action of~\(X\)
  on~\(\ell^\infty(X, \Comp(\ell^2\N))\)
  (compare \cite{Roe:Lectures}*{Theorem~4.28} for the uniform Roe
  \(\Cst\)\nb-algebra).
\end{example}

\begin{example}
  \label{exa:Roe_Rn}
  Let \(X=\R^d\).
  The representation~\(\varrho\)
  of \(\Cont_0(\R^d)\)
  on \(L^2(\R^d,\diff x)\)
  (real or complex) by multiplication operators is ample.  Actually,
  all faithful representations of~\(\Cont_0(\R^d)\)
  are ample.  So they all give isomorphic Roe \(\Cst\)\nb-algebras
  by Theorem~\ref{the:Roe_ample_unique}.

  Let \(T\in\Contc(\R^d)\)
  (with real or complex values) act on~\(L^2(\R^d)\)
  by convolution.  Then \(T\cdot \varrho(f)\)
  and \(\varrho(f)\cdot T\)
  are compact because they have a compactly supported, continuous
  integral kernel.  And \(T\)
  is controlled by the supremum of~\(\norm{x}\)
  with \(T(x)\neq0\).
  So \(T\in\Cst_\Roe(\R^d)\).
  Hence \(\Cst(\R^d) \subseteq \Cst_\Roe(\R^d)\).
  In particular, the resolvent of the Laplace operator or another
  translation-invariant elliptic differential operator on~\(\R^d\)
  belongs to
  \(\Cst_\Roe(\R^d)\).
  Any multiplication operator is controlled.  Thus
  multiplication operators are multipliers of \(\Cst_\Roe(\R^d)\).
  And \(L^\infty(\R^d) \cdot \Cst(\R^d) \cdot L^\infty(\R^d)\)
  is contained in \(\Cst_\Roe(\R^d)\).
  (Since the translation action of~\(\R^d\)
  on \(L^\infty(\R^d)\)
  is not continuous, there is no crossed product for this action and
  it is unclear whether the closed linear spans of
  \(L^\infty(\R^d) \cdot \Cst(\R^d)\)
  and \(\Cst(\R^d) \cdot L^\infty(\R^d)\)
  are equal and form a \(\Cst\)\nb-algebra.)
\end{example}

\begin{proposition}
  \label{pro:resolvent_in_Roe}
  Let \(V\in L^\infty(\R^n)\)
  and let~\(\Delta\)
  be the Laplace operator on~\(\R^d\).
  Then the resolvent of \(V+\Delta\)
  belongs to the Roe \(\Cst\)\nb-algebra of~\(\R^d\).
\end{proposition}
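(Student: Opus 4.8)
The plan is to treat \(H\defeq\Delta+V\) as a bounded perturbation of \(H_0\defeq\Delta\) and to expand \((H-\lambda)^{-1}\) into a Neumann series whose individual terms visibly lie in \(\Cst_\Roe(\R^d)\) thanks to Example~\ref{exa:Roe_Rn}. Since \(H_0\) is self-adjoint and \(V\) is bounded, \(H\) is a closed operator on \(L^2(\R^d)\) with the same domain as~\(\Delta\). For \(\lambda\in\C\) with \(\abs{\operatorname{Im}\lambda}>\norm{V}_\infty\) one has \(\norm{(H_0-\lambda)^{-1}}\le\abs{\operatorname{Im}\lambda}^{-1}\), hence \(\norm{(H_0-\lambda)^{-1}V}<1\); the factorisation \(H-\lambda=(H_0-\lambda)\bigl(1+(H_0-\lambda)^{-1}V\bigr)\) then shows that \(H-\lambda\) is invertible with
\[
(H-\lambda)^{-1}=\sum_{n=0}^{\infty}(-1)^n\,(H_0-\lambda)^{-1}\bigl(V\,(H_0-\lambda)^{-1}\bigr)^{n},
\]
the series converging in operator norm.

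Now I would invoke Example~\ref{exa:Roe_Rn} twice. First, the resolvent \((H_0-\lambda)^{-1}\) of the Laplace operator lies in \(\Cst(\R^d)\) --- under the Fourier transform it is the operator of multiplication by a \(\Cont_0\)\nb-function of the dual variable --- and therefore in \(\Cst_\Roe(\R^d)\). Second, the bounded multiplication operator~\(V\) is a multiplier of \(\Cst_\Roe(\R^d)\), since every multiplication operator is controlled. Consequently each summand \((H_0-\lambda)^{-1}\bigl(V(H_0-\lambda)^{-1}\bigr)^n\) is a product of an element of the \(\Cst\)\nb-algebra \(\Cst_\Roe(\R^d)\) with multipliers of it, and so lies in \(\Cst_\Roe(\R^d)\). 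As \(\Cst_\Roe(\R^d)\) is norm-closed, the sum \((H-\lambda)^{-1}\) belongs to \(\Cst_\Roe(\R^d)\). This already proves the proposition for every \(\lambda\) with \(\abs{\operatorname{Im}\lambda}\) large.

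To reach an arbitrary \(\lambda\) in the resolvent set of~\(H\), note that the set of such \(\lambda\) with \((H-\lambda)^{-1}\in\Cst_\Roe(\R^d)\) is open --- near any of its points \(\lambda_0\) one has the norm-convergent expansion \((H-\lambda)^{-1}=\sum_{k\ge0}(\lambda-\lambda_0)^k(H-\lambda_0)^{-k-1}\), whose terms all lie in \(\Cst_\Roe(\R^d)\) --- and closed in the resolvent set, by norm-continuity of \(\lambda\mapsto(H-\lambda)^{-1}\) together with the norm-closedness of \(\Cst_\Roe(\R^d)\). Being clopen, it is a union of connected components of the resolvent set and it contains the region \(\abs{\operatorname{Im}\lambda}>\norm{V}_\infty\); for real-valued~\(V\) the operator \(H\) is self-adjoint with spectrum a proper closed subset of~\(\R\), so the resolvent set is connected and the proposition follows for every \(\lambda\notin\operatorname{spec}(H)\) (for general \(V\) one obtains it on every component of the resolvent set meeting that region). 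The argument is essentially bookkeeping once Example~\ref{exa:Roe_Rn} is available, so I do not expect a real obstacle; the only points that need care are the precise use of that example --- that \((H_0-\lambda)^{-1}\) actually lies in \(\Cst(\R^d)\), not merely in \(\Cst_\Roe(\R^d)\), and that bounded multiplication operators are genuine multipliers --- and the connectedness input in the last step.
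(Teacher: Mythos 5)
Your proof is correct and follows essentially the same route as the paper: factor \(H-\lambda=(H_0-\lambda)\bigl(1+(H_0-\lambda)^{-1}V\bigr)\), expand in a norm-convergent Neumann series, and observe via Example~\ref{exa:Roe_Rn} that \((H_0-\lambda)^{-1}\in\Cst(\R^d)\subseteq\Cst_\Roe(\R^d)\) while \(V\) is a multiplier, so every term and hence the sum lies in the norm-closed algebra \(\Cst_\Roe(\R^d)\). The paper stops at \(\lambda=\ima c\) for large \(c\); your final clopen-set argument extending the conclusion to the whole (component of the) resolvent set is a harmless and correct addition.
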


We are indebted to Detlev Buchholz for pointing out the following
simple proof.

\begin{proof}
  View \(V=V(Q)\)
  as an operator on \(L^2(\R^d)\).
  Then \(\norm{(\ima c + \Delta)^{-1} V}^2<1\)
  for sufficiently large \(c\in\R_{>0}\).
  Hence the Neumann series \(\sum (-(\ima c + \Delta)^{-1} V)^n\)
  converges, and
  \[
  \sum_{n=0}^\infty (-(\ima c + \Delta)^{-1} V)^n
  \cdot (\ima c + \Delta)^{-1}
  = (1+(\ima c + \Delta)^{-1} V)^{-1} \cdot (\ima c + \Delta)^{-1}
  = (\ima c + \Delta+V)^{-1}.
  \]
  We have already seen that \((\ima c + \Delta)^{-1}\)
  and \((\ima c + \Delta)^{-1} V\)
  belong to the Roe \(\Cst\)\nb-algebra.
  Hence so does \((\ima c + \Delta+V)^{-1}\).
\end{proof}

If~\(\varrho\)
is ample, then we often leave out~\(\varrho\)
and briefly write \(\Cst_\Roe(X)_\R\)
or \(\Cst_\Roe(X)_\C\),
depending on whether~\(\varrho\)
acts on a real or complex Hilbert space.
Theorem~\ref{the:Roe_ample_unique} justifies this.

\begin{definition}
  \label{def:coarsely_dense}
  A closed subset \(Y\subseteq X\)
  is \emph{coarsely dense} if there is \(R>0\)
  such that for any \(x\in X\) there is \(y\in Y\) with \(d(x,y)\le R\).
\end{definition}

\begin{theorem}
  \label{the:coarsely_dense}
  Let \(Y\subseteq X\)
  be coarsely dense.  Then \(\Cst_\Roe(Y)_\R \cong \Cst_\Roe(X)_\R\)
  and \(\Cst_\Roe(Y)_\C \cong \Cst_\Roe(X)_\C\).
  Both isomorphisms are implemented by unitaries between the
  underlying Hilbert spaces.
\end{theorem}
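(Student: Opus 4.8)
The plan is to reduce the statement, via Theorem~\ref{the:Roe_ample_unique}, to the construction of a single controlled unitary. Since $Y$ is a closed subset of $X$, it is again a locally compact, second countable, proper metric space for the restricted metric, so $\Cst_\Roe(Y)_\R$ and $\Cst_\Roe(Y)_\C$ are defined. Fix ample representations $\varrho_X\colon\Cont_0(X)\to\Bound(\Hils_X)$ and $\varrho_Y\colon\Cont_0(Y)\to\Bound(\Hils_Y)$, both real or both complex. By Theorem~\ref{the:Roe_ample_unique} it suffices to produce a unitary $U\colon\Hils_Y\congto\Hils_X$ with $U\,\Cst_\Roe(Y,\varrho_Y)\,U^*=\Cst_\Roe(X,\varrho_X)$; composing with the unitaries furnished by Theorem~\ref{the:Roe_ample_unique} then gives the asserted isomorphisms, in both the real and the complex case. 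As in the proof of Theorem~\ref{the:Roe_ample_unique}, the case of compact $X$ is trivial, both algebras being the compact operators on a separable Hilbert space; so assume $X$ non-compact.

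First I would build compatible Borel partitions of $X$ and of $Y$ indexed by one and the same countable set. Let $R>0$ witness coarse density (Definition~\ref{def:coarsely_dense}) and choose a maximal $1$\nb-separated subset $\{z_n\}_{n\in N}$ of $Y$; it is countable by second countability, infinite since $X$ is non-compact, $1$\nb-coarsely dense in $Y$, and hence $(R+1)$\nb-coarsely dense in $X$. The key point is that $\{z_n\}$ is at once a net in $Y$ and in $X$, for the \emph{same} metric. Let $B_n\subseteq X$ and $C_n\subseteq Y$ be the disjointified Voronoi cells of $z_n$: set $V_X(z_n)\defeq\{x\in X:d(x,z_n)\le d(x,z_k)\text{ for all }k\in N\}$, $B_n\defeq V_X(z_n)\setminus\bigcup_{k<n}V_X(z_k)$, and define $C_n$ the same way inside $Y$. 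These are Borel sets; they partition $X$, resp.\ $Y$, because every point has a nearest $z_n$; each contains the open ball of radius $\tfrac12$ around $z_n$ and so has non-empty interior; $B_n$ has diameter at most $2(R+1)$ and $C_n$ at most $2$, so $B_n\cup C_n$ has diameter at most $S\defeq 2(R+1)$; and any relatively compact set meets only finitely many of the $B_n$ (resp.\ $C_n$).

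Next I would assemble the unitary. By the Borel functional calculus, $\Hils_X=\bigoplus_{n\in N}\varrho_X(1_{B_n})\Hils_X$ and $\Hils_Y=\bigoplus_{n\in N}\varrho_Y(1_{C_n})\Hils_Y$. Since $B_n$ and $C_n$ have non-empty interior and $\varrho_X,\varrho_Y$ are ample, each summand $\varrho_X(1_{B_n})\Hils_X$ and $\varrho_Y(1_{C_n})\Hils_Y$ is infinite-dimensional, and of course separable, so we may pick a unitary $U_n\colon\varrho_Y(1_{C_n})\Hils_Y\congto\varrho_X(1_{B_n})\Hils_X$ for each $n$ and set $U\defeq\bigoplus_{n\in N}U_n$. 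Because $U$ carries $\varrho_Y(1_{C_n})\Hils_Y$ onto $\varrho_X(1_{B_n})\Hils_X$, one sees that $\varrho_X(f)\,U\,\varrho_Y(g)=0$ whenever $d(\supp f,\supp g)>S$, since then, for every $n$, either $f$ vanishes on $B_n$ or $g$ vanishes on $C_n$. Hence $U$ is $S$\nb-controlled, and so is $U^*$.

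It remains to check that conjugation by the controlled unitary $U$ carries $\Cst_\Roe(Y,\varrho_Y)$ onto $\Cst_\Roe(X,\varrho_X)$. Supports of products compose by the triangle inequality, so $\Ad_U(T)=UTU^*$ is controlled when $T$ is. For local compactness, given $\phi\in\Contc(X)$ pick $\psi\in\Contc(Y)$ equal to $1$ on the relatively compact set $Y\cap\cl{B(\supp\phi,S+1)}$; since $U$ is $S$\nb-controlled we get $\varrho_X(\phi)U=\varrho_X(\phi)U\varrho_Y(\psi)$, so $\varrho_X(\phi)\,UTU^*=\varrho_X(\phi)U\,\varrho_Y(\psi)T\,U^*$ is compact whenever $T$ is locally compact; a symmetric argument for $UTU^*\varrho_X(\phi)$, together with the density of $\Contc(X)$ in $\Cont_0(X)$, shows $UTU^*$ is locally compact. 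Thus $\Ad_U$ maps the $*$\nb-algebra of locally compact controlled operators for $(Y,\varrho_Y)$ into that for $(X,\varrho_X)$; it is a $*$\nb-homomorphism because $U$ is unitary, hence extends to a $*$\nb-homomorphism $\Cst_\Roe(Y,\varrho_Y)\to\Cst_\Roe(X,\varrho_X)$, and the same reasoning applied to $U^*$ yields a two-sided inverse. So $\Ad_U$ is an isomorphism implemented by $U$, and the proof is complete. The main obstacle is the first step: matching the two partitions over a common index set with corresponding cells uniformly close. Using the Voronoi cells of a single net that lives in $Y$ — and is therefore automatically a net in $X$ with the same metric — makes this essentially free, whereas reconciling two independently chosen partitions would demand a delicate combinatorial matching of their index sets.
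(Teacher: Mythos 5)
Your argument is correct, but it takes a different route from the paper. You construct an explicit controlled unitary $U\colon\Hils_Y\congto\Hils_X$ by choosing a single net $\{z_n\}\subseteq Y$ (so that it is simultaneously coarsely dense in $Y$ and in $X$), matching the disjointified Voronoi cells $B_n\subseteq X$ and $C_n\subseteq Y$ over the common index set, and then checking that $\Ad_U$ preserves control and local compactness in both directions; in effect you re-run the block construction from the proof of Theorem~\ref{the:Roe_ample_unique} with the extra care that corresponding cells are uniformly close, which is exactly the standard ``covering isometry'' proof of coarse invariance. The paper avoids building any new unitary: it forms the single ample representation $\varrho'=\varrho_X\oplus\varrho_Y\circ\pi$ of $\Cont_0(X)$ on $\Hils_X\oplus\Hils_Y$, constructs a Borel coarse retraction $g\colon X\to Y$ with $g|_Y=\Id_Y$ and $d(g(x),x)\le 2R$, and observes that $\varrho'\circ g^*$ is an ample representation of $\Cont_0(Y)$ on the \emph{same} Hilbert space with the \emph{same} controlled and locally compact operators, so that $\Cst_\Roe(X,\varrho')=\Cst_\Roe(Y,\varrho'\circ g^*)$ on the nose; the implementing unitaries then all come from Theorem~\ref{the:Roe_ample_unique}. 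Your approach buys an explicit, $S$\nb-controlled unitary (and hence directly generalises to coarse equivalences), at the cost of repeating the partition/conjugation verifications; the paper's pullback trick along $g$ makes the two Roe algebras literally equal and delegates all unitary bookkeeping to the uniqueness theorem, which is shorter given that that theorem is already proved. Minor quibbles only: your control and local-compactness checks for $\Ad_U$ implicitly use indicator functions $1_{B_n}$, $1_{C_n}$ against the support condition, which is phrased via $\Cont_0$\nb-functions; this is routine (and the paper glosses over the same point), and your constant $S=2(R+1)$ is a safe overestimate.
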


\begin{proof}
  The proofs in the complex and real case are identical.  Let
  \(\pi\colon \Cont_0(X) \to \Cont_0(Y)\)
  be the restriction homomorphism.  Let
  \(\varrho_Y\colon \Cont_0(Y)\to\Bound(\Hils_Y)\)
  and \(\varrho_X\colon \Cont_0(X)\to\Bound(\Hils_X)\)
  be ample representations.  Then
  \(\varrho' \defeq \varrho_X \oplus \varrho_Y\circ\pi\)
  is an ample representation of~\(\Cont_0(X)\)
  on \(\Hils'\defeq \Hils_X \oplus \Hils_Y\).
  By Theorem~\ref{the:Roe_ample_unique}, we may use the particular
  representations \(\varrho'\)
  and~\(\varrho_Y\)
  to define \(\Cst_\Roe(X)\)
  and \(\Cst_\Roe(Y)\),
  because different ample representations give Roe
  \(\Cst\)\nb-algebras
  that are isomorphic through conjugation with a unitary between the
  underlying Hilbert spaces.

  Pick \(R>0\)
  such that for each \(x\in X\)
  there is \(y\in Y\)
  with \(d(x,y) \le R\).
  We build a Borel map \(g\colon X\to Y\)
  with \(g|_Y=\Id_Y\) and \(d(g(x),x)\le 2 R\) for all \(x\in X\),
  First, there is a countable cover \(X=\bigsqcup_{m\in\N} B_m\)
  by non-empty, disjoint Borel sets of diameter at most~\(R\)
  as in the proof of Theorem~\ref{the:Roe_ample_unique}.  For each
  \(m\in\N\),
  pick \(x_m\in B_m\)
  and \(y_m\in Y\)
  with \(d(x_m,y_m) \le R\).
  Define \(g(x) \defeq x\)
  for \(x\in Y\)
  and \(g(x) \defeq y_m\)
  for \(x\in B_m\setminus Y\),
  \(m\in\N\).  This map has all the required properties.

  The representation
  \(\varrho'\circ g^*\colon \Cont_0(Y)\to\Bound(\Hils')\)
  is ample because it contains
  \(\varrho_Y\circ \pi\circ g^* = \varrho_Y\)
  as a direct summand.  An operator on~\(\Hils'\)
  is locally compact or controlled for~\(\varrho'\)
  if and only if it is so for \(\varrho'\circ g^*\).
  Thus \(\Cst_\Roe(X,\varrho') = \Cst_\Roe(Y,\varrho'\circ g^*)\).
\end{proof}

In particular, Theorem~\ref{the:coarsely_dense} shows that all
coarsely dense subsets of~\(\R^d\)
have isomorphic Roe \(\Cst\)\nb-algebras.
This applies, in particular, to \(\Z^d\)
and to all Delone subsets of~\(\R^d\).
The latter are often used to model the atomic configurations of
materials that are not crystals (see, for instance,
\cite{Bellissard-Herrmann-Zarrouati:Hull}).  So all kinds of materials
lead to the same Roe \(\Cst\)\nb-algebra,
which depends only on the dimension~\(d\).

Theorem~\ref{the:coarsely_dense} suffices for our purposes, but we
mention that it extends to arbitrary coarse equivalences, see also
\cite{Higson-Roe:Analytic_K}*{Section 6.3}.

\begin{definition}
  \label{def:coarse_maps}
  Let \(X\)
  and~\(Y\)
  be proper metric spaces as above.  Two maps \(f_0,f_1\colon X\to Y\)
  are \emph{close} if there is \(R>0\)
  so that \(d(f_0(x),f_1(x))<R\)
  for all \(x\in X\).
  A \emph{coarse map} \(f\colon X\to Y\)
  is a Borel map with two properties: for any \(R>0\)
  there is \(S>0\)
  such that \(d(x,y)\le R\)
  for \(x,y\in X\)
  implies \(d(f(x),f(y))\le S\),
  and \(f^{-1}(B)\)
  is bounded in~\(X\)
  if \(B\subseteq Y\)
  is bounded.  A \emph{coarse equivalence} is a coarse map
  \(f\colon X\to Y\)
  for which there is another coarse map \(g\colon Y\to X\),
  called the \emph{coarse inverse} of~\(f\),
  such that \(g\circ f\)
  and \(f\circ g\)
  are close to the identity maps on \(X\)
  and~\(Y\),
  respectively.  We call \(X\)
  and~\(Y\)
  \emph{coarsely equivalent} if there is a coarse equivalence between
  them.
\end{definition}

For instance, the inclusion of a coarsely dense subspace is a coarse
equivalence: the proof of Theorem~\ref{the:coarsely_dense} builds a
coarse inverse for the inclusion map.

\begin{theorem}
  \label{the:coarse_equivalence_Roe}
  Let \(X\)
  and~\(Y\)
  be coarsely equivalent.  Then
  \(\Cst_\Roe(X)_\R \cong \Cst_\Roe(Y)_\R\)
  and \(\Cst_\Roe(X)_\C \cong \Cst_\Roe(Y)_\C\).
\end{theorem}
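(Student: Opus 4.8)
The plan is to reduce the general coarse equivalence statement to the already-proven case of a coarsely dense inclusion (Theorem~\ref{the:coarsely_dense}), using the standard device of the mapping cylinder / disjoint union. Given a coarse equivalence \(f\colon X\to Y\) with coarse inverse \(g\colon Y\to X\), I would first form the disjoint union \(Z \defeq X\sqcup Y\), equipped with a proper metric that restricts to the given metrics on \(X\) and on \(Y\) and that places the two pieces ``at bounded distance along \(f\)'': concretely, set \(d_Z(x,y) \defeq \inf_{x'\in X}\bigl(d_X(x,x') + 1 + d_Y(f(x'),y)\bigr)\) for \(x\in X\), \(y\in Y\), and check the triangle inequality and properness. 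The point is that both inclusions \(X\injto Z\) and \(Y\injto Z\) are then coarse equivalences, and in fact each of \(X\) and \(Y\) is a coarsely dense Borel subset of \(Z\): every \(y\in Y\) lies within bounded distance of \(g(y)\in X\), and every \(x\in X\) within distance \(1+\const\) of \(f(x)\in Y\), where the constants come from the defining inequalities of a coarse equivalence (that \(g\circ f\) and \(f\circ g\) are close to the identities, and that \(f,g\) are controlled).

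With this in hand, the argument is short: Theorem~\ref{the:coarsely_dense} (or rather its obvious Borel-subset version, whose proof only used that the subset is closed; a coarsely dense Borel subset works equally well, or one replaces \(X\), \(Y\) by closed coarsely dense enlargements inside \(Z\)) gives unitarily implemented isomorphisms
\[
\Cst_\Roe(X)_\field \cong \Cst_\Roe(Z)_\field \cong \Cst_\Roe(Y)_\field
\]
for \(\field = \R\) and \(\field = \C\), and composing them yields the claim. One technical wrinkle is that Definition~\ref{def:coarsely_dense} asks for a \emph{closed} subset, whereas \(X\sqcup Y\) need not make \(X\) closed in \(Z\); this is harmless because in the proof of Theorem~\ref{the:coarsely_dense} closedness was only used to ensure that \(\Cont_0\) of the subset is a quotient of \(\Cont_0\) of the ambient space by a restriction homomorphism, and that remains true for any subset carrying the subspace topology once one works with the representation \(\varrho'\circ g^*\) as there. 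Alternatively, and perhaps more cleanly, one takes \(Z\) to be \(X\sqcup Y\) with the two copies glued to a point or thickened so that both become closed; the metric adjustments are routine.

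The main obstacle is purely bookkeeping: verifying that the metric \(d_Z\) above is genuinely a proper metric (symmetry is immediate, the triangle inequality needs the triangle inequalities in \(X\) and \(Y\) together with the control estimates for \(f\), and properness needs that \(f^{-1}\) of a bounded set is bounded) and that, with respect to it, the inclusions \(X\injto Z\), \(Y\injto Z\) are coarse equivalences onto coarsely dense subsets. None of these steps is deep; the coarse-invariance content is entirely carried by Theorem~\ref{the:coarsely_dense}. I would therefore present the proof as: (1) construct \(Z\) and its metric; (2) observe \(X\), \(Y\) are coarsely dense in \(Z\); (3) apply Theorem~\ref{the:coarsely_dense} twice and compose. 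If one wants to avoid even the metric construction, an alternative is to redo the proof of Theorem~\ref{the:coarsely_dense} directly: pick a Borel representative \(f\colon X\to Y\) of the coarse equivalence, push forward an ample representation of \(\Cont_0(Y)\) along \(f^*\), show the result is ample (using that \(f\) is ``coarsely surjective'', i.e.\ its image is coarsely dense, which follows from \(f\circ g\) being close to \(\Id_Y\)), and check that \(f\) being controlled makes \(\varrho\circ f^*\)-controlled operators \(\varrho\)-controlled and vice versa after conjugating by the coarse inverse — this is exactly the mechanism of the proof of Theorem~\ref{the:coarsely_dense}, now applied to a map that need not be injective.
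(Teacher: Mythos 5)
Your overall strategy coincides with the paper's: assemble \(X\) and \(Y\) into a single space \(Z=X\sqcup Y\) in which both sit coarsely densely, and apply Theorem~\ref{the:coarsely_dense} twice. (The worry about closedness is unfounded: in a disjoint union each piece is clopen.) The step that genuinely fails is the explicit metric. If you insist that \(d_Z\) restrict to \(d_X\) on \(X\times X\) and to \(d_Y\) on \(Y\times Y\) and set \(d_Z(x,y)\defeq\inf_{x'\in X}\bigl(d_X(x,x')+1+d_Y(f(x'),y)\bigr)\), then the triangle inequality for two points of \(X\) via a point of \(Y\) needs, in effect, a bound of the form \(d_X(x_1,x_2)\le 2+d_Y(f(x_1),f(x_2))+\const\). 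A coarse equivalence gives no such \emph{linear} lower bound on \(d_Y(f(x_1),f(x_2))\): it only yields \(d_X(x_1,x_2)\le\rho\bigl(d_Y(f(x_1),f(x_2))\bigr)\) for some control function~\(\rho\) that may grow arbitrarily fast. Concretely, take \(X=\R\) with the Euclidean metric, \(Y=\R\) with the proper metric \(d_Y(s,t)=\sqrt{\abs{s-t}}\), and \(f=g=\Id\) (a coarse equivalence). Then \(d_Z(0_X,0_Y)\le 1\) and \(d_Z(0_Y,(N^2)_X)\le 1+N\), while \(d_Z\bigl(0_X,(N^2)_X\bigr)=d_X=N^2\); the triangle inequality fails for \(N\ge2\).

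The paper avoids this by not constructing a metric at all: it invokes the remark that all results extend to general coarse spaces in the sense of Roe, and directly declares a coarse structure on \(X\sqcup Y\) (a subset \(E\subseteq(X\sqcup Y)^2\) is controlled iff its intersections with \(X^2\) and \(Y^2\) and its image in \(Y^2\) under \(f\times\Id\) are controlled), for which both pieces are coarsely dense. If you want to stay with metrics, you must pass to the chain metric generated by your cross distances and then prove separately that its restriction to \(X\) is still coarsely equivalent to \(d_X\) and still proper --- true, but only because \(f\) is a coarse \emph{equivalence}, so this is a real extra argument, not bookkeeping. Your final alternative, redoing the proof of Theorem~\ref{the:coarsely_dense} with an ample representation pushed forward along a Borel representative of~\(f\), is sound and is essentially the same mechanism; either repair makes the proof complete.
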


\begin{proof}
  Here it is more convenient to work with coarse spaces.  Let
  \(f\colon X\to Y\)
  be the coarse equivalence.  We claim that there is a coarse
  structure on the disjoint union \(X\sqcup Y\)
  such that both \(X\)
  and~\(Y\)
  are coarsely dense in \(X\sqcup Y\).
  This reduces the result to Theorem~\ref{the:coarsely_dense}.  We
  describe the desired coarse structure on \(X\sqcup Y\).
  A subset~\(E\)
  of \((X\sqcup Y)^2\)
  is called controlled if its intersections with \(X^2\)
  and~\(Y^2\)
  and the set of all \((f(x),y) \in Y^2\)
  for \((x,y)\in E\)
  or \((y,x)\in E\)
  are controlled.  This is a coarse structure on \(X\sqcup Y\)
  because~\(f\)
  is a coarse equivalence.  And the subspaces \(X\)
  and~\(Y\) are coarsely dense for the same reason.
\end{proof}

\subsection{Twists}
\label{sec:twists}

We show that magnetic twists do not change the isomorphism class of
the Roe \(\Cst\)\nb-algebra.
We let~\(X\)
be a \emph{discrete} metric space.  Let
\(\varrho\colon \Cont_0(X)\to \Bound(\Hils)\)
be a representation.  This is equivalent to a direct sum decomposition
\(\Hils = \bigoplus_{x\in X} \Hils_x\),
such that \(f\in \Cont_0(X)\)
acts by multiplication with \(f(x)\)
on the summand~\(\Hils_x\).
We assume for simplicity that each~\(\Hils_x\)
is non-zero.  This is weaker than being ample, which means that
each~\(\Hils_x\)
is infinite-dimensional.  So the following discussion also covers the
uniform Roe \(\Cst\)\nb-algebra of~\(X\).

We describe an operator on~\(\Hils\)
by a block matrix \((T_{x,y})_{x,y\in X}\)
with \(T_{x,y}\in \Bound(\Hils_y,\Hils_x)\).
These are multiplied by the usual formula,
\((S T)_{x,y} = \sum_{z\in X} S_{x,z} T_{z,y}\).
We twist this multiplication by a scalar-valued function
\(w\colon X\times X \times X\to \T\):
\[
(S *_w T)_{x,y} = \sum_{z\in X} w(x,z,y) S_{x,z} T_{z,y}.
\]
This defines a bounded bilinear map at least on the subalgebra
\(A(X,\varrho)\subseteq \Bound(\Hils)\)
of locally compact, controlled operators.

\begin{lemma}
  \label{lem:twisted_associative}
  The multiplication~\(*_w\)
  on \(A(X,\varrho)\)
  is associative if and only if
  \begin{equation}
    \label{eq:twist_cocycle_condition}
    w(x,z,y) w(x,t,z) = w(x,t,y) w(t,z,y)
  \end{equation}
  for all \(x,t,z,y\in X\).
\end{lemma}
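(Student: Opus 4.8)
The plan is to unwind the definition of $*_w$ twice, compare the two iterated products block‑entry by block‑entry, and then feed in a small family of rank‑one block operators to read off the cocycle identity.

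First I would record the two expansions. For $R,S,T\in A(X,\varrho)$ and $x,y\in X$, applying the defining formula for $*_w$ twice gives
\[
((R *_w S) *_w T)_{x,y} = \sum_{z,t\in X} w(x,z,y)\,w(x,t,z)\, R_{x,t}\,S_{t,z}\,T_{z,y}
\]
and
\[
(R *_w (S *_w T))_{x,y} = \sum_{z,t\in X} w(x,t,y)\,w(t,z,y)\, R_{x,t}\,S_{t,z}\,T_{z,y}.
\]
Both sums have only finitely many nonzero terms for fixed $x,y$, since $R$ and $S$ are controlled and $X$ is discrete with a proper metric, so there is no convergence subtlety. Hence associativity of $*_w$ on $A(X,\varrho)$ is equivalent to
\[
\sum_{z,t\in X} \bigl( w(x,z,y)\,w(x,t,z) - w(x,t,y)\,w(t,z,y) \bigr)\, R_{x,t}\,S_{t,z}\,T_{z,y} = 0
\]
for all $R,S,T\in A(X,\varrho)$ and all $x,y\in X$. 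If~\eqref{eq:twist_cocycle_condition} holds, every bracket vanishes, which settles the ``if'' direction.

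For the converse, I would fix $x,t,z,y\in X$ and test the displayed identity on well‑chosen operators. Pick unit vectors $\eta_x\in\Hils_x$, $\eta_t\in\Hils_t$, $\eta_z\in\Hils_z$ and a unit vector $\xi\in\Hils_y$; these exist because each $\Hils_a$ is nonzero. Let $R$, $S$, $T$ be the operators whose only nonzero block matrix entries are, respectively, $R_{x,t}=\ket{\eta_x}\bra{\eta_t}$, $S_{t,z}=\ket{\eta_t}\bra{\eta_z}$ and $T_{z,y}=\ket{\eta_z}\bra{\xi}$. Each is finite‑rank, hence locally compact, and has support a single point of $X\times X$, hence is controlled, so all three lie in $A(X,\varrho)$. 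For these choices all terms in both double sums collapse to the single index pair $(t,z)$, and since $R_{x,t}S_{t,z}T_{z,y}=\ket{\eta_x}\bra{\xi}\neq 0$ while $w$ takes values on the unit circle, the equality of the two iterated products forces $w(x,z,y)\,w(x,t,z) = w(x,t,y)\,w(t,z,y)$. As $x,t,z,y$ were arbitrary, this is~\eqref{eq:twist_cocycle_condition}.

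The step I expect to require the most care is the converse: one must check that the test operators genuinely lie in $A(X,\varrho)$, not merely in $\Bound(\Hils)$, and that their triple product is nonzero. The hypothesis, made just before the lemma, that each $\Hils_x$ is nonzero is precisely what makes both checks go through. The degenerate cases in which some of $x,t,z,y$ coincide need no separate treatment, since one then simply reuses the same unit vector in the repeated Hilbert space, and a block operator supported at a single pair of indices is controlled regardless.
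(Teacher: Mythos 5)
Your proof is correct and follows the same route as the paper: expand both iterated products into the double sum over $(t,z)$ and observe that equality for all triples of operators forces the cocycle identity because each $\Hils_x$ is non-zero. The paper leaves the converse implicit in the remark ``because all $\Hils_x$ are non-zero''; your rank-one test operators simply make that step explicit.
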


\begin{proof}
  For \(S,T,U\in A(X,\varrho)\), we compute
  \begin{align*}
    \bigl((S *_w T) *_w U\bigr)_{x,y}
    &%= \sum_{z\in X} w(x,z,y) (S *_w T)_{x,z} U_{z,y}
      = \sum_{z,t\in X} w(x,z,y) w(x,t,z) S_{x,t} T_{t,z} U_{z,y},\\
    \bigl(S *_w (T *_w U)\bigr)_{x,y}
    &%= \sum_{t\in X} w(x,t,y) S_{x,t} (T *_w U)_{t,y}
      = \sum_{z,t\in X} w(x,t,y) w(t,z,y) S_{x,t} T_{t,z} U_{z,y}.
  \end{align*}
  The condition~\eqref{eq:twist_cocycle_condition} holds if and only if these are
  equal for all \(S,T,U\in A(X,\varrho)\)
  because all \(\Hils_x\) are non-zero.
\end{proof}

\begin{proposition}
  \label{pro:all_twists_trivial}
  If the function~\(w\)
  satisfies the cocycle condition in the previous lemma, then there is
  a function \(v\colon X\times X \to \T\) with
  \[
  w(x,z,y) =  v(x,z) v(z,y) v(x,y)^{-1}.
  \]
  The map \(\varphi\colon (A(X, \varrho),*_w) \to (A(X,\varrho),\cdot)\),
  \((T_{x,y})_{x,y\in X} \mapsto (v(x,y)\cdot T_{x,y})_{x,y\in X}\),
  is an algebra isomorphism.
\end{proposition}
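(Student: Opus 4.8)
The plan is to exploit that \eqref{eq:twist_cocycle_condition} is the \(2\)\nb-cocycle identity for the pair groupoid \(X\times X\), and that this groupoid is ``contractible'': fixing a basepoint already trivialises the cocycle, because \eqref{eq:twist_cocycle_condition} then expresses \(w\) through the values \(w(p,\cdot,\cdot)\) alone. Concretely, I would pick any \(p\in X\) and set
\[
v(x,y)\defeq w(p,x,y)\in\T\qquad(x,y\in X).
\]

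The first step is to verify the coboundary formula. Substituting \((x,t,z,y)\mapsto(p,x,z,y)\) in \eqref{eq:twist_cocycle_condition} gives
\[
w(p,z,y)\,w(p,x,z)=w(p,x,y)\,w(x,z,y),
\]
and, as \(\T\) is abelian, this rearranges exactly to \(w(x,z,y)=v(x,z)\,v(z,y)\,v(x,y)^{-1}\). This is the only place the cocycle condition enters.

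The second step is to check that \(\varphi\) is an isomorphism of algebras \((A(X,\varrho),*_w)\to(A(X,\varrho),\cdot)\). Since \(\abs{v(x,y)}=1\), the matrix \((v(x,y)T_{x,y})\) has the same support as \((T_{x,y})\) and entries of the same norm, so \(\varphi\) preserves controlled operators and local compactness, and preserves boundedness by a Schur estimate, which only sees the entrywise norms; thus \(\varphi\) maps \(A(X,\varrho)\) to itself, with inverse the modulation by \(\overline{v(x,y)}\), and it is visibly linear. Multiplicativity is the computation
\[
\varphi(S*_w T)_{x,y}=v(x,y)\sum_{z\in X}w(x,z,y)\,S_{x,z}T_{z,y}
=\sum_{z\in X}\bigl(v(x,z)S_{x,z}\bigr)\bigl(v(z,y)T_{z,y}\bigr)
=\bigl(\varphi(S)\varphi(T)\bigr)_{x,y},
\]
where the middle equality is \(v(x,y)w(x,z,y)=v(x,z)v(z,y)\), i.e.\ the coboundary formula.

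I do not expect a real obstacle here: everything is forced once one guesses the formula for~\(v\), and the basepoint trick makes that immediate. The one point worth a remark is that \(\varphi\) genuinely lands in \(\Bound(\Hils)\); for the spaces relevant here, such as \(\Z^d\) or Delone subsets of~\(\R^d\), a controlled operator of propagation~\(R\) has only a uniformly bounded number of nonzero entries per row and per column, so the Schur test yields \(\norm{\varphi(T)}\le C_R\norm{T}\) and there is nothing further to check.
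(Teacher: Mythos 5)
Your proof is correct and takes essentially the same route as the paper: fix a basepoint to exhibit \(w\) as the coboundary of \(v\) (the paper puts the basepoint in the last slot, \(v(x,y)\defeq w(x,y,e)\), rather than the first, which is immaterial), and then verify multiplicativity by the same one-line computation. Your closing remark on why entrywise modulation by unimodular scalars preserves boundedness (via a Schur-type estimate, using bounded geometry) addresses a point the paper leaves implicit, and is a welcome addition rather than a deviation.
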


\begin{proof}
  Fix a ``base point'' \(e\in X\)
  and let \(v(x,y) \defeq w(x,y,e)\).
  The condition~\eqref{eq:twist_cocycle_condition} for \((x,z,y,e)\)
  says that
  \[
  w(x,y,e) w(x,z,y) = w(x,z,e) w(z,y,e)
  \]
  holds for all \(x,z,y\in X\).  So
  \[
  v(x,z) v(z,y) v(x,y)^{-1}
  = w(x,z,e) w(z,y,e) w(x,y,e)^{-1}
  = w(x,z,y).
  \]
  The map~\(\varphi\)
  is a vector space isomorphism because \(v(x,y)\neq0\)
  for all \(x,y\in X\).  The computation
  \begin{multline*}
    \varphi (S *_w  T)_{x,y}
    % = v(x,y)\cdot (S *_w  T)_{x,y}
    = v(x,y) \sum_{z\in X} w(x,z,y) S_{x,z} T_{z,y}
    \\= \sum_{z\in X} w(x,z,y) v(x,y) v(x,z)^{-1} v(z,y)^{-1}
    \varphi(S)_{x,z} \varphi(T)_{z,y}
    = \sum_{z\in X} \varphi(S)_{x,z} \varphi(T)_{z,y}
  \end{multline*}
  shows that it is an algebra isomorphism.
\end{proof}

So the twisted and untwisted versions of \(A(X,\varrho)\)
are isomorphic algebras.  Thus
a magnetic field does not change the isomorphism type of the Roe
\(\Cst\)\nb-algebra.

Now let~\(X\)
be no longer discrete.  Then controlled, locally compact
operators are not given by matrices any more.  To write down the
twisted convolution as above, we use the smaller \Star{}algebra of
controlled, locally \emph{Hilbert--Schmidt} operators; it is
still dense in the Roe \(\Cst\)\nb-algebra.
Let us assume for simplicity that the representation~\(\varrho\)
for which we build the Roe \(\Cst\)\nb-algebra
has constant multiplicity, that is, it is the pointwise multiplication
representation on \(L^2(X,\mu) \otimes \Hils\)
for some regular Borel measure~\(\mu\)
on~\(X\)
and some Hilbert space~\(\Hils\).
Controlled, locally Hilbert--Schmidt operators on
\(L^2(X,\mu) \otimes \Hils\)
are the convolution operators for measurable functions
\(T\colon X\times X\to \ell^2(\Hils)\)
with controlled support and such that
\[
\int_{K\times K} \norm{T(x,y)}^2 \,\diff \mu(x)\,\diff\mu(y)  < \infty
\]
for all compact subsets \(K\subseteq X\)
and such that the resulting convolution operator is bounded.  The
multiplication of such operators is given by a convolution of their
integral kernels.  This may be twisted as above, using a Borel
function \(w\colon X^3\to\T\)
that satisfies the condition~\eqref{eq:twist_cocycle_condition}.  The
resulting function~\(v\)
in Proposition~\ref{pro:all_twists_trivial} is again Borel.  So the
isomorphism in Proposition~\ref{pro:all_twists_trivial} still works.
Thus the twist gives an isomorphic \Star{}algebra also in the
non-discrete case.

Following Bellissard~\cite{Bellissard:K-theory_solid}, a
\(d\)\nb-dimensional
material is often described through a crossed product
\(\Cst\)\nb-algebra
\(\Cont(\Omega)\rtimes\Z^d\)
for a compact space~\(\Omega\)
with a \(\Z^d\)\nb-action
by homeomorphisms and with an ergodic invariant measure
on~\(\Omega\).
To encode a magnetic field, the crossed product is replaced by the
crossed product \emph{twisted} by a \(2\)\nb-cocycle
\(\sigma\colon \Z^d \times \Z^d \to \Cont(\Omega,\T)\).
The space~\(\Omega\)
may be built as the ``hull'' of a point set or a fixed Hamiltonian,
see~\cite{Bellissard-Herrmann-Zarrouati:Hull}.  In this case, there is
a \emph{dense} orbit \(\Z^d\cdot\omega\)
in~\(\Omega\)
by construction.  So assuming the existence of a dense orbit is a
rather mild assumption in the context of Bellissard's theory.

We briefly explain why all twisted crossed products
\(\Cont(\Omega)\rtimes_\sigma \Z^d\)
as above are ``contained'' in the uniform Roe \(\Cst\)\nb-algebra
of~\(\Z^d\)
and hence also in the Roe \(\Cst\)\nb-algebra.
This observation is due to Kubota~\cite{Kubota:Controlled_bulk-edge}.

The main point here is the description of the uniform Roe
\(\Cst\)\nb-algebra
as a crossed product \(\ell^\infty(\Z^d)\rtimes\Z^d\),
see \cite{Roe:Lectures}*{Theorem~4.28}.  Let \(\omega\in\Omega\).
Then we define a \(\Z^d\)\nb-equivariant
\Star{}homomorphism
\(\epsilon_\omega\colon \Cont(\Omega) \to \ell^\infty(\Z^d)\)
by \((\epsilon_\omega f)(n) \defeq f(n\cdot \omega)\)
for all \(n\in\Z^d\),
\(f\in \Cont(\Omega)\).
This induces a \Star{}homomorphism
\(\Cont(\Omega)\rtimes_\sigma \Z^d \to \ell^\infty(\Z^d)
\rtimes_{\epsilon_\omega\circ \sigma} \Z^d\),
where~\(\rtimes_\sigma\)
denotes the crossed product twisted by a \(2\)\nb-cocyle~\(\sigma\).
The same argument that identifies the crossed product
\(\ell^\infty(\Z^d) \rtimes \Z^d\)
with the uniform Roe \(\Cst\)\nb-algebra
of~\(\Z^d\)
identifies
\(\ell^\infty(\Z^d) \rtimes_{\epsilon_\omega\circ \sigma} \Z^d\)
with a twist of the Roe \(\Cst\)\nb-algebra
as above.  Since all these twists give isomorphic \(\Cst\)\nb-algebras
by Proposition~\ref{pro:all_twists_trivial}, we get a
\Star{}homomorphism
\(\Cont(\Omega)\rtimes_\sigma \Z^d \to \ell^\infty(\Z^d) \rtimes
\Z^d\).
If the orbit of~\(\omega\)
is dense, then the \Star{}homomorphism~\(\epsilon_\omega\)
above is injective.  Then the induced \Star{}homomorphism
\(\Cont(\Omega)\rtimes_\sigma \Z^d \to \ell^\infty(\Z^d) \rtimes
\Z^d\)
is also injective.  Hence the uniform Roe \(\Cst\)\nb-algebra
really contains the twisted crossed product algebra.

Now we turn to the continuum version of the above theory.
Let~\(\Omega\)
be a compact space with a continuous action of~\(\R^d\).
This leads to crossed products \(\Cont(\Omega)\rtimes_\sigma \R^d\)
twisted, say, by Borel measurable \(2\)\nb-cocycles
\(\sigma\colon \R^d \times \R^d \to \Cont(\Omega,\T)\);
once again, the twist encodes a magnetic field.  Restricting to the
\(\R^d\)\nb-orbit
of some \(\omega\in\Omega\)
maps \(\Cont(\Omega)\rtimes_\sigma \R^d\)
to
\(\Contb^\mathrm{u}(\R^d)\rtimes_{\epsilon_\omega\circ\sigma} \R^d\)
for the \(\Cst\)\nb-algebra
\(\Contb^\mathrm{u}(\R^d)\)
of bounded, uniformly continuous functions on~\(\R^d\).
We have seen in Example~\ref{exa:Roe_Rn} that
\(\Contb^\mathrm{u}(\R^d)\rtimes \R^d \subseteq L^\infty(\R^d) \cdot
\Cst(\R^d)\)
is contained in the Roe \(\Cst\)\nb-algebra
of~\(\R^d\).
This remains the case also in the twisted case because a Borel
measurable \(2\)\nb-cocycle
\(\sigma\colon \R^d \times \R^d \to \Cont(\Omega,\T)\)
defines a Borel function \((\R^d)^3 \to\T\),
which is untwisted by Proposition~\ref{pro:all_twists_trivial}.  So
all twisted crossed products \(\Cont(\Omega)\rtimes_\sigma \R^d\)
map to the Roe \(\Cst\)\nb-algebra
of~\(\R^d\).
As above, this map is an embedding if the orbit of~\(\omega\)
is dense in~\(\Omega\).

The Roe \(\Cst\)\nb-algebras
for \(\R^d\)
and~\(\Z^d\)
are isomorphic by Theorem~\ref{the:coarsely_dense}.  So there is a unique
Roe \(\Cst\)\nb-algebra
in each dimension that contains all the twisted crossed product
algebras that are used as models for disordered materials, both in
continuum models and tight binding models.  This fits interpreting
the twisted crossed products \(\Cont(\Omega)\rtimes_\sigma \Z^d\)
or \(\Cont(\Omega)\rtimes_\sigma \R^d\)
as models for disorder with built-in \emph{a priori} restrictions,
whereas the Roe \(\Cst\)\nb-algebra describes general disorder.

\subsection{Approximation by controlled operators as a continuity property}
\label{sec:approximation_continuity}

In order to belong to the Roe \(\Cst\)\nb-algebra,
an operator has to be a norm limit of locally compact, controlled
operators.  Any such norm limit is again locally compact.  The
property of being a norm limit of controlled operators may be hard to
check.  A tool for this is Property~A, an approximation property for
coarse spaces that ensures that elements of the Roe
\(\Cst\)\nb-algebra
may be approximated in a systematic way by controlled operators, see
\cites{Roe:Warped_A, Brodzki-Cave-Li:Exactness}.  We also mention the
related Operator Norm Localization Property for subspaces of~\(\R^d\).

We now specialise to the case where~\(X\)
is a closed subset of~\(\R^d\)
with the restriction of the Euclidean metric.  Such spaces have
Property~A.  We use it to define complex Roe \(\Cst\)\nb-algebras
through continuity for a certain representation of~\(\R^d\).
We fix a representation \(\varrho\colon \Cont_0(X) \to \Bound(\Hils)\)
on a complex Hilbert space~\(\Hils\).
Let \(\bar\varrho\colon \Contb(X)\to\Bound(\Hils)\)
be its unique strictly continuous extension to the multiplier algebra.
For \(t\in\R^d\),
define \(\Euler_t\in\Contb(X)\)
by \(\Euler_t(x) \defeq \Euler^{\ima x\cdot t}\).
The map \(t\mapsto \Euler_t\)
is continuous for the strict topology on~\(\Contb(X)\).
Hence the representation~\(\sigma\)
of~\(\R^d\)
on~\(\Hils\)
defined by \(\sigma_t(\xi) \defeq \bar\varrho(\Euler_t)(\xi)\)
is continuous.  This representation is generated by the position
operators.  If \(X\subseteq \Z^d\),
then \(\Euler_t=1\)
for \(t\in 2\pi\Z^d\),
so that the representation~\(\sigma\)
descends to the torus \((\R/2\pi \Z)^d\).

By conjugation, \(\sigma\)
induces an action \(\Ad \sigma\)
of~\(\R^d\)
by automorphisms of \(\Bound(\Hils)\).
We call \(S\in\Bound(\Hils)\)
\emph{continuous} with respect to \(\Ad \sigma\)
if the map \(\R^d \to \Bound(\Hils)\),
\(t\mapsto \Ad \sigma_t(S)\),
is continuous in the norm topology on~\(\Bound(\Hils)\).
The following theorem describes the Roe \(\Cst\)\nb-algebra
through this continuity property:

\begin{theorem}
  \label{the:controlled_continuity}
  An operator \(S\in\Bound(\Hils)\)
  is a norm limit of controlled operators if and only if it is
  continuous with respect to \(\Ad \sigma\).
  And \(\Cst_\Roe(X,\varrho)\)
  is the \(\Cst\)\nb-subalgebra
  of all operators on~\(\Hils\)
  that are locally compact and continuous with respect to
  \(\Ad \sigma\).
\end{theorem}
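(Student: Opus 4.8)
The plan is to establish the two implications in the first sentence separately, and then to read off the description of \(\Cst_\Roe(X,\varrho)\), using in addition the already noted fact that norm limits of locally compact operators stay locally compact.

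For the implication that controlled operators are continuous for \(\Ad\sigma\), I would first observe that the operators \(S\) with \(t\mapsto\Ad\sigma_t(S)\) norm continuous form a norm-closed subspace of \(\Bound(\Hils)\) (a routine approximation argument using \(\norm{\Ad\sigma_t(S)}=\norm S\)), so it suffices to treat a single \(R\)-controlled operator \(S\). Since \(\sigma_t=\bar\varrho(\Euler_t)\) is a unitary multiplier, \(\Ad\sigma_t(S)-S=[\bar\varrho(\Euler_t),S]\,\sigma_t^*\), so the task becomes to bound \(\norm{[\bar\varrho(\Euler_t),S]}\). Here I would use that \(X\) is a closed subset of \(\R^d\): cover \(X\) by pairwise disjoint Borel sets \(B_m\) of uniformly bounded diameter such that each \(B_m\) is within distance \(R\) of only boundedly many \(B_n\), pick \(x_m\in B_m\), and compare \(\bar\varrho(\Euler_t)\) with the ``discretised'' diagonal unitary \(D_t\defeq\sum_m\Euler^{\ima x_m\cdot t}\,\varrho(1_{B_m})\). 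One has \(\norm{D_t-\bar\varrho(\Euler_t)}=O(\abs t)\) from the bound on the diameters, while \(\norm{[D_t,S]}=O(\abs t)\norm S\) with constant depending only on \(R\) and \(d\), because \(\varrho(1_{B_m})S\varrho(1_{B_n})=0\) unless \(d(B_m,B_n)\le R\) — in which case \(\abs{\Euler^{\ima x_m\cdot t}-\Euler^{\ima x_n\cdot t}}=O(R\abs t)\) — and a finite colouring of the \(B_m\) with same-coloured pieces more than \(2R\) apart splits \([D_t,S]\) into boundedly many summands with pairwise orthogonal ranges and domains. This gives \(\norm{\Ad\sigma_t(S)-S}=O(\abs t)\norm S\), hence continuity.

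For the converse, the idea is to average \(S\) against \emph{band-limited} approximate units. I would fix a smooth \(\psi\) on \(\R^d\) supported in the unit ball with \(\psi(0)=1\), let \(\phi_R\) be the inverse Fourier transform of \(\psi(\blank/R)\) — a Schwartz function with \(\int\phi_R=1\), with \(\norm{\phi_R}_1\) bounded in \(R\), and with \(\int_{\abs t\ge\delta}\abs{\phi_R}\to0\) as \(R\to\infty\) — and set \(S_R\defeq\int_{\R^d}\phi_R(t)\,\Ad\sigma_t(S)\,\diff t\) (a weak integral, \(\norm{S_R}\le\norm{\phi_R}_1\norm S\)). Writing \(\sigma_t=\int_X\Euler^{\ima x\cdot t}\,\diff P(x)\) for the spectral measure \(P(\Delta)=\varrho(1_\Delta)\), one computes \(\varrho(f)\,S_R\,\varrho(g)=\int_X\int_X\widehat{\phi_R}(x-y)\,f(x)\,g(y)\,\diff P(x)\,S\,\diff P(y)\), and since \(\widehat{\phi_R}\) is supported in \(\{\abs v\le R\}\) this vanishes once \(f,g\) have supports more than \(R\) apart; thus \(S_R\) is \(R\)-controlled. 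Finally \(\int\phi_R=1\) gives \(\norm{S-S_R}\le\int\abs{\phi_R(t)}\,\norm{S-\Ad\sigma_t(S)}\,\diff t\to0\) by continuity of \(S\). Together with the first step this gives the two implications of the first sentence. For the statement about \(\Cst_\Roe(X,\varrho)\): one inclusion is immediate from the first implication, and for the other one checks that the \(S_R\) above are also locally compact when \(S\) is — using that \(\sigma_t\) commutes with \(\varrho(f)\), so \(\varrho(f)\,S_R=\int\phi_R(t)\,\sigma_t(\varrho(f)S)\sigma_t^*\,\diff t\) is a norm-convergent Bochner integral of compact operators (the integrand is norm continuous since \(\varrho(f)S\) is compact), hence compact, and similarly for \(S_R\varrho(f)\) — so \(S_R\in\Cst_\Roe(X,\varrho)\) and \(S=\lim_R S_R\in\Cst_\Roe(X,\varrho)\).

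The main obstacle is the construction of the \(S_R\) in the converse direction. Fourier duality converts ``propagation at most \(R\)'' into ``Fourier transform supported in a ball of radius \(R\)'', which forces the \(\phi_R\) to be Schwartz but not compactly supported; one then has to verify carefully, via the spectral measure, both that \(S_R\) is controlled and that it inherits local compactness from \(S\). By contrast the direction ``controlled \(\Rightarrow\) continuous'' is the softer one — it needs only the bounded geometry of \(\R^d\), which is the role Property~A plays for general coarse spaces — and should go through smoothly via the commutator estimate above.
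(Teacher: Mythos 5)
Your proposal is correct, and its core --- averaging \(S\) against an approximate identity whose Fourier transform is supported in the ball of radius \(R\), so that the averaged operator is \(R\)\nb-controlled --- is exactly the paper's strategy (the paper uses the rescaled Fejér kernel where you use the inverse Fourier transform of \(\psi(\blank/R)\); this difference is immaterial). Two genuine differences are worth recording. First, you prove ``controlled \(\Rightarrow\) continuous'' explicitly, via the discretised unitary \(D_t\) and the commutator estimate \(\norm{[D_t,S]}=O(\abs{t})\norm{S}\), which rests on the bounded geometry of \(\R^d\). The paper's proof establishes ``\(S\) continuous \(\Leftrightarrow\) \(f_n*S\to S\) in norm'' together with the controlledness and local compactness of \(f_n*S\); that yields ``continuous \(\Rightarrow\) norm limit of controlled'' but leaves the converse implication of the first sentence to the reader, and your commutator estimate is precisely what closes that direction. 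Second, you obtain local compactness of \(S_R\) by writing \(\varrho(f)S_R\) as a Bochner integral of the norm-continuous, compact-valued function \(t\mapsto \sigma_t(\varrho(f)S)\sigma_t^*\); this is more direct than the paper's route through the continuity of \(g\otimes h\mapsto \varrho(g)S\varrho(h)\) on \(\Cont_0^\infty(U\times V)\) and the trick of taking \(g\equiv 1\) on an \(R\)\nb-neighbourhood of \(\supp h\). For the remaining step --- that \(S_R\) is \(R\)\nb-controlled --- your double spectral integral is a reformulation of the paper's computation; the one point needing care is interchanging the \(t\)\nb-integral with the cut-offs \(\varrho(f)\) and \(\varrho(g)\), which the paper settles via the projective tensor product of Fréchet spaces of smooth functions and which you correctly flag; it can equally be settled by approximating \(f\Euler_t\) and \(\Euler_{-t}g\) by step functions subordinate to a fine Borel partition, at the cost of an error controlled by \(\int \abs{\phi_R(t)}\abs{t}\,\diff t<\infty\).
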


\begin{proof}
  For each \(S\in\Bound(\Hils)\),
  the map \(\R^d\to\Bound(\Hils)\)
  is continuous for the strong topology on \(\Bound(\Hils)\).
  Therefore, the \(\Bound(\Hils)\)-valued integral
  \[
  f*S \defeq \int_{\R^d} f(t) \Ad \sigma_t(S) \,\diff t
  \]
  makes sense for any \(f\in L^1(\R^d)\).
  Let \((f_n)_{n\in\N}\)
  be a bounded approximate unit in the Banach algebra \(L^1(\R^d)\).
  We claim that~\(S\)
  is continuous if and only if \((f_n*S)_{n\in\N}\)
  converges in the norm topology to~\(S\).
  It is well known that any continuous representation of~\(\R^d\)
  becomes a nondegenerate module over \(L^1(\R^d)\).
  Thus \((f_n*S)_{n\in\N}\)
  converges in norm to~\(S\)
  if~\(S\)
  is continuous.  Conversely, operators of the form \(f*S\)
  are continuous because the action of~\(\R^d\)
  on \(L^1(\R^d)\)
  is continuous.  Since the set of continuous operators is closed in
  the norm topology, \(S\)
  is continuous if \((f_n*S)_{n\in\N}\) converges in norm to~\(S\).

  There is an approximate unit~\((f_n)_{n\in\N}\)
  for~\(L^1(\R^d)\)
  such that the Fourier transform of each~\(f_n\)
  has compact support.  For instance, we may use the Fejér kernel
  \[
  \Lambda(x_1,\dotsc,x_n) \defeq \prod_{j=1}^d \frac{\sin^2(\pi x_j)}{\pi^2 x_j^2},
  \]
  which has Fourier transform \(\prod_{j=1}^d (1-\abs{x_j})_+\),
  and rescale it to produce an approximate unit for \(L^1(\R^d)\).
  We claim that \(f_n*S\)
  is controlled for each \(n\in\N\).
  More precisely, assume that \(\widehat{f_n}\)
  is supported in the ball of radius~\(R\)
  in~\(\R^d\).
  We claim that~\(f_n*S\)
  is \(R\)\nb-controlled.

  This is easy to prove if~\(X\)
  is discrete.  Then we may describe operators on~\(\Hils\)
  using matrix coefficients \(S_{x,y}\in\Bound(\Hils_y,\Hils_x)\)
  for \(x,y\in X\).
  A direct computation shows that the matrix coefficients of \(f_n*S\)
  are \(\widehat{f_n}(x-y)\cdot S_{x,y}\).
  This vanishes for \(\norm{x-y}>R\).
  So~\(f_n*S\)
  is \(R\)\nb-controlled
  as asserted.  The following argument extends this result to the case
  where~\(X\) is not discrete, such as \(X=\R^d\).

  Let \(U,V\subseteq \R^d\)
  be two relatively compact, open subsets of distance at least~\(R\)
  and let \(g,h\in\Cont^\infty(\R^d)\)
  be smooth functions supported in \(U\)
  and~\(V\), respectively.  Then
  \begin{equation}
    \label{eq:Roe_through_continuity_integral}
    \int_{\R^d} g(y) \Euler^{\ima y\cdot t}\cdot h(x) \Euler^{-\ima x\cdot t} f_n(t) \,\diff t
    % = \int_{\R^d} g(y) h(x) \Euler^{\ima (y-x)\cdot t} f_n(t) \,\diff t
    = g(y) h(x) \widehat{f_n}(y-x)
    = 0
  \end{equation}
  for all \(x,y\in \R^d\).  We restrict \(g,h\) to~\(X\) and compute
  \begin{align*}
    \varrho(g) (f_n* S) \varrho(h)
    &\defeq \int_{\R^d} \varrho(g) \bar\varrho(\Euler_t) S
    \bar\varrho(\Euler_{-t}) \varrho(h) f_n(t) \,\diff t
    \\&= \int_{\R^d} \varrho(g\cdot \Euler_t) S
    \varrho(h\cdot \Euler_{-t}) f_n(t) \,\diff t.
  \end{align*}
  Let \(\Cont_0^\infty(U\times V)\)
  denote the Fréchet space of smooth function on~\(\R^{2d}\)
  supported in \(U\times V\).
  We may identify this with the complete projective tensor product of
  \(\Cont^\infty_0(U)\)
  and \(\Cont^\infty_0(V)\).
  Hence there is a continuous linear map
  \begin{equation}
    \label{eq:function_on_both_side}
    \Cont^\infty_0(U\times V)\to \Bound(\Hils),\qquad
    g\otimes h\mapsto \varrho(g) S \varrho(h).
  \end{equation}
  The integral in~\eqref{eq:Roe_through_continuity_integral} converges
  to~\(0\)
  in the Fréchet topology of \(\Cont^\infty_0(U\times V)\).
  Hence~\eqref{eq:Roe_through_continuity_integral} implies
  \(\varrho(g) (f_n*S) \varrho(h)=0\).

  The continuous map~\eqref{eq:function_on_both_side} still exists if
  \(U\)
  and~\(V\)
  are not of distance~\(R\).
  If~\(S\)
  is locally compact, then \(\varrho(g) S \varrho(h)\)
  is a compact operator on~\(\Hils\)
  for all \(g\in\Cont^\infty_0(U)\),
  \(h\in\Cont^\infty_0(V)\).
  This remains so for all operators in the image
  of~\eqref{eq:function_on_both_side} by continuity.  Therefore,
  \(\varrho(g) (f_n*S)\varrho(h)\)
  is compact for all \(g,h\)
  as above.  Choosing~\(U\)
  large enough, we may take~\(g\)
  to be constant equal to~\(1\)
  on the \(R\)\nb-neighbourhood
  of~\(V\).
  Then \(\varrho(g) (f_n*S)\varrho(h) =(f_n*S)\varrho(h)\)
  because~\(f_n*S\)
  is \(R\)\nb-controlled.
  So operators of the form \((f_n*S)\varrho(h)\)
  with smooth, compactly supported~\(h\)
  are compact.  Since any continuous, compactly supported function is
  dominated by a smooth, compactly supported function, we get the same
  for all \(h\in\Contc(X)\).
  A similar argument shows that \(\varrho(g)(f_n*S)\)
  is compact for all \(g\in\Contc(X)\).
  Hence the operators \(f_n*S\)
  are locally compact, controlled operators if~\(S\)
  is locally compact.
\end{proof}

Property~A is equivalent to the ``Operator Norm Localization
Property'' for metric spaces with bounded geometry,
see~\cite{Sako:A_operator_localization}.  Roughly speaking, this
property says that the operator norm of a controlled operator may be
computed using vectors in the Hilbert space with bounded support.  The
support of a vector \(\xi\in\Hils\)
is the set of all \(x\in X\)
such that \(f\cdot \xi\neq0\)
for all \(f\in\Cont_0(X)\)
with \(f(x)\neq0\).
We formulate this property for subspaces of~\(\R^d\):

\begin{theorem}
  \label{the:ONL}
  Let \(X\subseteq \R^d\)
  and let \(\varrho\colon \Cont_0(X)\to\Bound(\Hils)\)
  be a representation.  Pick scalars \(R>0\)
  and \(c\in(0,1)\).
  Then there is a scalar \(S>0\)
  such that for any \(R\)\nb-controlled
  operator \(T\in\Bound(\Hils)\),
  there is \(\xi\in\Hils\)
  with \(\norm{\xi}=1\)
  such that the support of~\(\xi\)
  has diameter at most~\(S\)
  and \(\norm{T(\xi)} \ge \norm{T} \ge c\cdot \norm{T(\xi)}\).
\end{theorem}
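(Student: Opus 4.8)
The plan is to deduce Theorem~\ref{the:ONL} from the known equivalence between Property~A for spaces of bounded geometry and the Operator Norm Localization Property (ONL), as established by Sako~\cite{Sako:A_operator_localization}, combined with the fact that closed subsets of $\R^d$ have Property~A. The subtlety is that the cited ONL is usually stated for a \emph{specific} geometric Hilbert space $\ell^2(X)$ (or $\ell^2(X,\Hils)$), whereas here we must handle an \emph{arbitrary} representation $\varrho$. So the work lies in transferring the localization estimate from the standard module to the module given by $\varrho$.

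**First** I would recall the precise form of ONL: for $X$ of bounded geometry with Property~A, for every $R>0$ and $c\in(0,1)$ there is $S=S(R,c)>0$ such that for every $R$-controlled operator $T_0$ on $\ell^2(X)$ there is a unit vector $\eta$ whose support has diameter at most $S$ with $\norm{T_0\eta}\ge c\norm{T_0}$. (Note $X\subseteq\R^d$ has bounded geometry and, being a subspace of $\R^d$, has Property~A, so this applies; if $X$ is not uniformly discrete one first passes to a maximal $1$-separated net, which is coarsely dense, and transfers via the coarse-geometry identifications.) The constant $S$ depends only on the coarse geometry of $X$ — in particular not on the multiplicity of the module — which is exactly what lets it work for general $\varrho$.

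**Next** comes the transfer step, which I expect to be the main obstacle. Given an $R$-controlled $T$ on $\Hils$ with $\norm{T}=1$ (rescale), pick a unit vector $\zeta\in\Hils$ with $\norm{T\zeta}\ge c'$ for a $c'$ slightly larger than $c$. Decompose $\Hils=\bigoplus_{m} \varrho(1_{B_m})\Hils$ using a Borel partition $X=\bigsqcup_m B_m$ into sets of diameter at most $1$ with nonempty interior, as constructed in the proof of Theorem~\ref{the:Roe_ample_unique}; writing $\zeta=\sum_m \zeta_m$ with $\zeta_m\in\varrho(1_{B_m})\Hils$, the ``profile'' $m\mapsto\norm{\zeta_m}$ defines (after choosing a point $x_m\in B_m$) a unit vector $\bar\zeta=\sum_m \norm{\zeta_m}\,\delta_{x_m}\in\ell^2(X)$. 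The controlled operator $T$ induces, by the same coarse data, an $(R+2)$-controlled operator $\bar T$ on $\ell^2(X)$ with $\abs{\bar T_{x_m,x_k}}\le \norm{\varrho(1_{B_m})T\varrho(1_{B_k})}$, and a Schur-test / triangle-inequality estimate bounds $\norm{\bar T}$ in terms of $\norm T$ and $\norm{T\zeta}$ from below so that $\norm{\bar T\bar\zeta}$ is not much smaller than $\norm{T\zeta}$. Applying the $\ell^2$-ONL to $\bar T$ with control parameter $R+2$ produces a unit vector $\eta\in\ell^2(X)$ supported in a set of diameter at most $S_0=S_0(R+2,c'')$, and pulling this support back to $X$ and lifting $\eta$ to a genuinely normalized vector in $\bigoplus_{x_m\in\supp\eta}\varrho(1_{B_m})\Hils\subseteq\Hils$ whose $\varrho$-support has diameter at most $S_0+2$ gives the vector $\xi$ with $\norm{T\xi}\ge c$; set $S\defeq S_0+2$.

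**The hard part** is making the induced-operator comparison quantitatively tight: one must check that collapsing the blocks $\varrho(1_{B_m})$ to points on the lattice side does not lose more than an arbitrarily small factor in the norm, and this has to be done uniformly in $T$. The clean way to organize this is to fix at the outset a constant $c'\in(c,1)$, run the $\ell^2$-ONL at parameter $c''$ with $c<c''<c'$, and absorb all the multiplicative slack (from the diameter-$1$ partition, from passing between $\norm{T\zeta}$ and $\norm{\bar T\bar\zeta}$, and from re-lifting $\eta$) into the gap between the chosen thresholds; since there are only finitely many such slack factors and each can be made as close to $1$ as desired by suitable choices, this is routine once set up. I do not expect to need Property~A directly beyond invoking Sako's theorem; everything else is the bookkeeping of moving between $\Hils$ and $\ell^2(X)$ via the block decomposition already used twice above.
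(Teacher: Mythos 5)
Your proposal follows the same broad idea as the paper --- quote a known operator norm localisation result for subsets of $\R^d$ --- but the paper's proof is a bare citation: the statement of the theorem \emph{is} the Operator Norm Localization Property in the sense of Chen--Tessera--Wang, which is formulated there for \emph{arbitrary} representations $\varrho\colon\Cont_0(X)\to\Bound(\Hils)$ from the outset, is shown to be invariant under coarse equivalence and to pass to subspaces, and is verified for solvable Lie groups such as~$\R^d$. So the module-transfer problem you single out as ``the main obstacle'' simply does not arise in the reference the paper uses; it only arises because you chose to start from a version of Sako's theorem stated for the standard module $\ell^2(X)$.

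The transfer argument you sketch has a genuine gap at the final ``lifting'' step. Passing from $T$ to the scalarised matrix $\bar T$ with entries $\bar T_{m,k}=\norm{\varrho(1_{B_m})T\varrho(1_{B_k})}$ works in one direction: $\norm{\bar T\bar\zeta}\ge\norm{T\zeta}$, hence $\norm{\bar T}\ge c'\norm{T}$, and the $\ell^2$\nb-ONL gives a localised unit vector $\eta\ge0$ with $\norm{\bar T\eta}\ge c''c'\norm{T}$. But there is no canonical way back. To produce $\xi$ you must pick unit vectors $v_k\in\varrho(1_{B_k})\Hils$ and set $\xi=\sum_k\eta_k v_k$, and then $\norm{T\xi}$ need not be anywhere near $\norm{\bar T\eta}$: a single $v_k$ would have to nearly attain $\norm{\varrho(1_{B_m})T\varrho(1_{B_k})}$ simultaneously for all relevant~$m$, and the contributions $\varrho(1_{B_m})T\varrho(1_{B_k})\eta_k v_k$ for different~$k$ can cancel inside $\varrho(1_{B_m})\Hils$, whereas the scalarised matrix adds their norms. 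Neither loss is a ``multiplicative slack that can be made close to $1$''; the scalarisation discards exactly the directional information that the estimate needs, so this step is not routine bookkeeping and in fact fails as stated. A correct transfer goes the other way around: embed $\Hils$ as a $\Cont_0(X)$\nb-submodule of an ample module $\Hils'$ and extend $T$ by zero to $T'=TP$, where $P$ is the (\,$0$\nb-controlled, module-map) projection onto~$\Hils$; an ONL vector $\xi'$ for $T'$ then yields $\xi=P\xi'/\norm{P\xi'}$, which has support contained in that of $\xi'$ and satisfies $\norm{T\xi}\ge\norm{TP\xi'}\ge c\norm{T}$, and ONL passes between ample modules via the controlled unitary of Theorem~\ref{the:Roe_ample_unique}. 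With that repair --- or by citing a formulation of ONL that already quantifies over all modules, as the paper does --- your route goes through.
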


\begin{proof}
  The statement of the theorem is that the space~\(X\) has the
  ``Operator Norm Localisation Property'' defined
  in~\cite{Chen-Tessera-Wang:Operator_localization}.  This property
  is invariant under coarse equivalence and passes to subspaces by
  \cite{Chen-Tessera-Wang:Operator_localization}*{Propositions 2.5
    and~2.6}.
  \cite{Chen-Tessera-Wang:Operator_localization}*{Theorem~3.11 and
    Proposition~4.1} show that solvable Lie groups such as~\(\R^d\)
  have this property, and hence also all subspaces of~\(\R^d\).
\end{proof}

\subsection{Dense subalgebras with isomorphic K-theory}
\label{sec:dense_same_K}

Let~\(A\)
be a \(\Cst\)\nb-algebra
with a continuous \(\R^d\)\nb-action
\(\alpha\colon \R^d\to\Aut(A)\).
The action defines several canonical \Star{}subalgebras of~\(A\)
with the same \(\K\)\nb-theory.
The \Star{}subalgebra of \emph{smooth elements} is
\[
A^\infty \defeq \setgiven{a\in A}
{t\mapsto \alpha_t(a) \text{ is a smooth function } \R^d\to A}.
\]
This Fréchet \Star{}subalgebra is closed under holomorphic functional
calculus and also under smooth functional calculus for normal
elements, see~\cite{Blackadar-Cuntz:Differential}.

Let \(F \subseteq \R^d\)
be a compact convex subset with non-empty interior and
containing~\(0\).
Let \(\mathcal{O}(A,\alpha,F) \subseteq A\)
be the set of all \(a\in A\)
for which the function \(\R^d\ni t\mapsto \alpha_t(a)\)
extends to a continuous function on \(\R^d + \ima F\)
that is holomorphic on the interior of \(\R^d + \ima F\).
This is a dense Banach subalgebra in~\(A\),
and the inclusion \(\mathcal{O}(A,\alpha,F) \injto A\)
induces an isomorphism on topological \(\K\)\nb-theory
by \cite{Bost:Principe_Oka}*{Théorème~2.2.1}.  Let
\(\mathcal{O}^\infty(A,\alpha,F) \subseteq A\)
be the set of those \(a\in A\)
for which the function \(\R^d\ni t\mapsto \alpha_t(a)\)
extends to a smooth function on \(\R^d + \ima F\)
that is holomorphic on the interior of \(\R^d + \ima F\).
The inclusion \(\mathcal{O}^\infty(A,\alpha,F) \injto A\)
induces an isomorphism on topological \(\K\)\nb-theory
as well.  If \(F_1\subseteq F_2\),
then \(\mathcal{O}(A,\alpha,F_2) \injto \mathcal{O}(A,\alpha,F_1)\).
There are two important limiting cases of the subalgebras
\(\mathcal{O}(A,\alpha,F)\).

First, let~\(F\)
run through a neighbourhood basis of~\(0\)
in~\(\R^d\).
Then the dense Banach subalgebras \(\mathcal{O}(A,\alpha,F)\)
form an inductive system, whose colimit is the dense \Star{}subalgebra
\(A^\omega \subseteq A\)
of all \emph{real-analytic elements} of~\(A\),
that is, those \(a\in A\)
with the property that each \(t\in\R^d\)
has a neighbourhood on which \(s\mapsto \alpha_s(a)\)
is given by a convergent power series with coefficients in~\(A\).
The subalgebra~\(A^\omega\)
is still closed under holomorphic functional calculus by
\cite{Meyer:HLHA}*{Proposition~3.46}.  This gives an easier
explanation than Bost's Oka principle why~\(A^\omega\)
has the same topological \(\K\)\nb-theory as~\(A\).

Secondly, let~\(F\)
run through an increasing sequence whose union is~\(\R^d\).
Then the dense Banach subalgebras \(\mathcal{O}(A,\alpha,F)\)
form a projective system, whose limit is the dense \Star{}subalgebra
\(\mathcal{O}(A,\alpha)\)
of all \emph{holomorphic elements} \(a\in A\),
that is, those elements for which the map
\(\R^d \ni t\mapsto \alpha_t(a)\)
extends to a holomorphic function on~\(\C^d\).
This is a locally multiplicatively convex Fréchet algebra.
Phillips~\cite{Phillips:K_Frechet} has extended
topological \(\K\)\nb-theory
to such algebras.  The Milnor \(\varprojlim^1\)-sequence
in \cite{Phillips:K_Frechet}*{Theorem 6.5} shows that the
inclusion \(\mathcal{O}(A,\alpha)\injto A\)
induces an isomorphism in topological \(\K\)\nb-theory.

We apply all this to the Roe \(\Cst\)\nb-algebra
of~\(\Z^d\)
and the continuous \(\R^d\)\nb-action~\(\sigma\)
defined in Section~\ref{sec:approximation_continuity}.  Here this
action descends to the torus~\(\T^d\),
which simplifies the study of the dense subalgebras above.  We
describe the dense subalgebras of smooth, real-analytic and
holomorphic elements in \(\Cst_\Roe(\Z^d)\).
All these have the same topological \(\K\)\nb-theory.
Let \(\varrho\colon \Cont_0(\Z^d)\to\Bound(\Hils)\)
be a representation on a separable Hilbert space, not necessarily
ample.  Let~\(\Hils_x\)
for \(x\in X\)
be the fibres of~\(\Hils\)
with respect to~\(\varrho\).
Describe operators on~\(\Hils\)
by block matrices \((T_{x,y})_{x,y\in\Z^d}\)
with \(T_{x,y}\in\Bound(\Hils_y,\Hils_x)\)
for all \(x,y\in\Z^d\).  Then
\[
\sigma_t(T_{x,y}) = (\exp(\ima t\cdot (x-y)) T_{x,y})_{x,y\in\Z^d}.
\]

\begin{proposition}
  \label{pro:smooth_analytic_in_Roe_Zd}
  A block matrix \((T_{x,y})_{x,y\in\Z^d}\)
  as above gives a smooth element for the
  \(\R^d\)\nb-action~\(\sigma\) on~\(\Cst_\Roe(\Z^d)\) if and only if the function
  \[
  \Z^d \ni k \mapsto \sup_{n\in\Z^d} \{ \norm{T_{n,n+k}} \}
  \]
  has rapid decay, that is, for each \(a>0\)
  there is a constant~\(C_a>0\)
  such that \(\norm{T_{n,n+k}} \le C_a (1+ \norm{k})^{-a}\)
  for all \(n,k\in\Z^d\).
  It gives a real-analytic element for~\(\sigma\)
  if and only if there are \(a>0\) and \(C_a>0\) such that
  \[
  \norm{T_{n,n+k}} \le C_a \cdot \exp(-a \norm{k})
  \]
  for all \(n,k\in\Z^d\).
  It gives a holomorphic element for~\(\sigma\)
  if and only if for each \(a>0\) there is \(C_a>0\) such that
  \[
  \norm{T_{n,n+k}} \le C_a \cdot \exp(-a \norm{k}).
  \]
\end{proposition}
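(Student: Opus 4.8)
The key observation, recorded just before the statement, is that the \(\R^d\)-action~\(\sigma\) on \(\Cst_\Roe(\Z^d)\) factors through the torus \(\T^d=(\R/2\pi\Z)^d\). The plan is therefore to decompose \(T=(T_{x,y})_{x,y\in\Z^d}\in\Cst_\Roe(\Z^d)\) into its Fourier series for this torus action and to translate the decay of its Fourier coefficients into the decay of its matrix coefficients. For \(k\in\Z^d\) put \(T^{(k)}\defeq (2\pi)^{-d}\int_{\T^d}\exp(-\ima k\cdot t)\,\sigma_t(T)\,\diff t\), the \(k\)-th spectral component. Since \(\sigma_t(T)_{x,y}=\exp(\ima t\cdot(x-y))\,T_{x,y}\), a one-line computation gives \((T^{(k)})_{x,y}=T_{x,y}\) if \(x-y=k\) and \((T^{(k)})_{x,y}=0\) otherwise. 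As \(T\) is locally compact, each \(T_{x,y}\) is compact, hence \(T^{(k)}\) is again locally compact; it is also \(\norm{k}\)-controlled, so \(T^{(k)}\in\Cst_\Roe(\Z^d)\). The operator \(T^{(k)}\) sends \(\Hils_n\) into \(\Hils_{n+k}\) via \(T_{n+k,n}\), and as \(n\) varies these domains (respectively ranges) are pairwise orthogonal; therefore
\[
\norm{T^{(k)}} = \sup_{n\in\Z^d}\norm{T_{n+k,n}},
\]
which, after the harmless substitution \(k\mapsto -k\), is exactly the function occurring in the statement.

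Next I would invoke the standard Fourier analysis of smooth and analytic vectors for actions of a compact torus on a Banach space. A Banach-space-valued function on~\(\T^d\) is smooth if and only if its Fourier coefficients are rapidly decreasing (one direction by termwise differentiation of the Fourier series, the other from \(\norm{\widehat{\partial^\alpha g}(k)}=\abs{k^\alpha}\,\norm{\hat g(k)}\le\norm{\partial^\alpha g}_\infty\)); applied to \(g(t)=\sigma_t(T)\) this says \(T\in\Cst_\Roe(\Z^d)^\infty\) precisely when \(k\mapsto\norm{T^{(k)}}\) is rapidly decreasing, which by the norm formula is the first assertion. For the analytic assertions, recall that the paper defines \(\Cst_\Roe(\Z^d)^\omega\) as the colimit of the \(\mathcal{O}(\Cst_\Roe(\Z^d),\sigma,F)\) over a neighbourhood basis~\(F\) of~\(0\); since \(\sigma\) is periodic, this means \(T\in\Cst_\Roe(\Z^d)^\omega\) exactly when \(t\mapsto\sigma_t(T)\) extends to a holomorphic \(\Cst_\Roe(\Z^d)\)-valued function on some tube \(\{z\in\C^d:\abs{\operatorname{Im} z_j}<\varepsilon\}\), \(\varepsilon>0\). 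Shifting the contour of the Fourier integral (Paley--Wiener on the torus) shows that such an extension exists if and only if \(\norm{T^{(k)}}\le C\exp(-a\norm{k})\) for some \(a,C>0\); here the equivalence of norms on~\(\R^d\) lets one pass between \(\norm{k}\) and \(\sum_j\abs{k_j}\). Likewise, \(T\) is a holomorphic element, i.e.\ \(t\mapsto\sigma_t(T)\) extends holomorphically to all of~\(\C^d\) — equivalently, being periodic, to the complex torus \((\C^\times)^d\), so that its Laurent coefficients \(T^{(k)}\) decay faster than every exponential — if and only if for every \(a>0\) there is \(C_a>0\) with \(\norm{T^{(k)}}\le C_a\exp(-a\norm{k})\). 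Combining each of these with the norm formula yields the second and third assertions.

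It remains to note the converse reading of the statement: a block matrix \((T_{x,y})\) with \emph{compact} entries satisfying one of the three decay estimates automatically defines an element of \(\Cst_\Roe(\Z^d)\). Indeed, each \(T^{(k)}\) is then a \(\norm{k}\)-controlled, locally compact operator of norm \(\sup_n\norm{T_{n+k,n}}\), so every finite partial sum \(\sum_{\norm{k}\le N}T^{(k)}\) lies in \(\Cst_\Roe(\Z^d)\); any of the decay estimates forces \(\sum_k\norm{T^{(k)}}<\infty\), so \(T=\sum_k T^{(k)}\) is a norm limit of these partial sums and hence lies in \(\Cst_\Roe(\Z^d)\), and the analysis above identifies it as smooth (respectively real-analytic, holomorphic). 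I do not expect any single hard step; the work is bookkeeping — carefully setting up the Fourier characterisations of smooth, real-analytic and holomorphic vectors for the torus action (the Paley--Wiener estimates and the translation between a tube, a neighbourhood basis~\(F\), and the complex torus \((\C^\times)^d\)), and verifying the norm identity \(\norm{T^{(k)}}=\sup_n\norm{T_{n+k,n}}\), which is precisely what couples the abstract decay conditions to the concrete matrix-coefficient estimates in the proposition.
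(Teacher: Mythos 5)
Your argument is correct, and it reaches the result by a recognisably different route from the paper's. The paper computes directly with the infinitesimal generators of~$\sigma$: the $j$th generator multiplies $T_{x,y}$ by $x_j-y_j$, so smoothness becomes boundedness of $(p(x-y)\cdot T_{x,y})_{x,y}$ for every polynomial~$p$, and the complexified action is written out explicitly as $(\exp(z\cdot(x-y))T_{x,y})_{x,y}$, after which the three decay conditions are read off by letting $F\searrow\{0\}$ or $F\nearrow\C^d$. You instead expand $T$ into the spectral components $T^{(k)}$ of the torus action and quote the Fourier characterisations of smooth, real-analytic and entire vectors (Paley--Wiener on the torus). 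The two proofs pivot on the same identity, $\norm{T^{(k)}}=\sup_n\norm{T_{n+k,n}}$, which in the paper appears as the remark that the norm of a single generalised diagonal is the supremum of the norms of its entries. What your packaging buys: the ``only if'' directions become completely mechanical (integration by parts against $\exp(-\ima k\cdot t)$ rather than estimating derivatives of matrix entries), and your closing paragraph makes explicit the converse that the paper compresses into the phrase ``belongs to a bounded operator'' --- namely that a compact-entried block matrix obeying any of the decay estimates genuinely defines an element of $\Cst_\Roe(\Z^d)$, via absolute convergence of $\sum_k T^{(k)}$. The small extra debts you incur are standard and you acknowledge them: justify that $T=\sum_k T^{(k)}$ when the series converges absolutely (e.g.\ because the Fej\'er means converge to~$T$ and to the sum of the absolutely convergent series), and note that the sign flip $k\mapsto -k$ and the passage between $\norm{k}$ and $\sum_j\abs{k_j}$ are harmless.
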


\begin{proof}
  The \(j\)th
  generator of the \(\R^d\)\nb-action~\(\sigma\)
  maps a block matrix~\((T_{x,y})_{x,y\in\Z^d}\)
  to
  \[
  \lim_{t\to0} \frac{1}{t} (\sigma_{t e_j}(T_{x,y}) - (T_{x,y}))_{x,y\in\Z^d}
  = ((x_j-y_j)T_{x,y})_{x,y\in\Z^d}.
  \]
  Hence polynomials in these generators multiply the
  entries~\(T_{x,y}\)
  with polynomials in \(x-y\in\Z^d\).
  So~\((T_{x,y})_{x,y\in\Z^d}\)
  belongs to a smooth element of~\(\Cst_\Roe(\Z^d)\)
  if and only if \((p(x-y)\cdot T_{x,y})_{x,y\in\Z^d}\)
  belongs to a bounded operator for each polynomial~\(p\)
  in \(d\)~variables.
  It suffices to consider the polynomials \(1+\norm{x-y}_2^{2 b}\)
  for \(b\in\N\).
  Since the operator norm for diagonal block matrices is the supremum
  of the operator norms of the entries, we see that the boundedness of
  \(((1+ \norm{x-y}_2^{2 b})\cdot T_{x,y})_{x,y\in\Z^d}\)
  for all \(b\in\N\)
  is equivalent to the boundedness of
  \(\sup_{k,n\in\Z^d} \norm{T_{n,n+k}} (1+ \norm{k}_2^{2 b})\)
  for all \(b\in\N\).
  This proves the claim about the smooth elements.  The analytic
  extension of~\(\sigma\)
  to \(\ima z\in \C^d\)
  must map~\((T_{x,y})_{x,y\in\Z^d}\)
  to \(((\exp(z\cdot (x-y))T_{x,y})_{x,y\in\Z^d}\).
  Thus~\((T_{x,y})_{x,y\in\Z^d}\)
  describes an element of
  \(\mathcal{O}^\infty(\Cst_\Roe(\Z^d),\sigma,F)\) if and only if
  \begin{equation}
    \label{eq:O_F_estimate}
    \sup_{k,n\in\Z^d} \norm{T_{n,n+k}} (1+ \norm{k}_2^{2 b}) \exp(z\cdot k)<\infty    
  \end{equation}
  for all \(z\in F\),
  \(b\in\N\).
  When we let \(F\searrow\{0\}\)
  or \(F\nearrow \C^d\),
  we may leave out the polynomial factors because they are dominated
  by \(\exp(z\cdot k)\).
  This proves the claims about the
  real-analytic elements and holomorphic elements.
\end{proof}

Estimates of the form
\(\norm{T_{x,y}} \le C_a \exp(-a\cdot \norm{x-y})\)
for some \(a>0\),
\(C_a>0\)
play an important role in the study of Anderson localisation; see, for
instance,
\cite{Aizenman-Molchanov:Localization_elementary}*{Equation~(2.3)}.

\subsection{Approximate unit of projections}
\label{sec:apprid_projection}

Unlike the uniform Roe \(\Cst\)\nb-algebra,
the Roe \(\Cst\)\nb-algebra
of a proper metric space is never unital.  Instead, it has an
approximate unit of projections:

\begin{proposition}
  \label{pro:Roe_apprid}
  Let~\(X\)
  be a proper metric space and let
  \(\varrho\colon \Cont_0(X)\to\Bound(\Hils)\)
  be a representation.  The Roe \(\Cst\)\nb-algebra
  \(\Cst_\Roe(X,\varrho)\) has an approximate unit of projections.
\end{proposition}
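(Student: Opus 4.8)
The plan is to build the approximate unit from projections that are block-diagonal with respect to a fixed bounded Borel partition of~\(X\) and finite-rank in each block. If~\(X\) is compact then \(\Cst_\Roe(X,\varrho)=\Comp(\Hils)\) and the finite-rank projections already do the job, so assume~\(X\) non-compact. First I would fix, once and for all, a partition \(X=\bigsqcup_{m\in M}B_m\) into disjoint Borel sets with \(M\) countable and \(\operatorname{diam}B_m\le R_0\) for all~\(m\), such that every relatively compact subset of~\(X\) meets only finitely many of the~\(B_m\); such a partition is produced exactly as for the sets~\(B'_n\) in the proof of Theorem~\ref{the:Roe_ample_unique}, and we do not need the further modification there that arranges non-empty interiors. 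Using the Borel functional calculus for the nondegenerate representation~\(\varrho\), as in that proof, put \(E_m\defeq\varrho(1_{B_m})\) and \(\Hils_m\defeq E_m\Hils\), so that \(\Hils=\bigoplus_{m\in M}\Hils_m\).

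Next I would introduce the candidate net. Let~\(\Lambda\) be the set of all projections \(p=\bigoplus_{m\in M}p_m\) with each~\(p_m\) a finite-rank projection on~\(\Hils_m\), partially ordered by declaring \(p\le q\) when \(p_m\le q_m\) for all~\(m\); it is upward directed, since \(p\vee q\defeq\bigoplus_m(p_m\vee q_m)\) again lies in~\(\Lambda\). Each \(p\in\Lambda\) belongs to \(\Cst_\Roe(X,\varrho)\). It is \(R_0\)\nb-controlled because it is block-diagonal over the sets~\(B_m\): the support of \(E_m p_m E_m\) lies in \(\overline{B_m}\times\overline{B_m}\), so exactly the argument used for the partial isometries~\(U_m\) in the proof of Theorem~\ref{the:Roe_ample_unique} shows that~\(p\) is controlled. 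It is locally compact because, given \(g\in\Cont_0(X)\) and \(\epsilon>0\), the set \(\{\abs{g}\ge\epsilon\}\) is relatively compact and hence meets only finitely many~\(B_m\), so \(\varrho(g\cdot 1_{\{\abs{g}\ge\epsilon\}})\,p\) is a finite sum of finite-rank operators; letting \(\epsilon\to0\) shows that \(\varrho(g)p\) is a norm limit of finite-rank operators, and \(p\varrho(g)=(\varrho(\overline{g})p)^*\) is compact as well.

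Then I would check that \((p)_{p\in\Lambda}\) is an approximate unit. Since the locally compact controlled operators are dense in \(\Cst_\Roe(X,\varrho)\), a routine \(3\epsilon\)\nb-estimate (using \(\norm{p}\le1\)) reduces this to proving \(\norm{pa-a}\to0\) and \(\norm{ap-a}\to0\) along~\(\Lambda\) for every locally compact controlled~\(a\); as~\(a^*\) is again locally compact and controlled, the second follows from the first applied to~\(a^*\). Fix such an~\(a\). For \(p\le q\) one has \((1-q)(1-p)=1-q\), so \(\norm{(1-q)a}\le\norm{(1-p)a}\); hence \(p\mapsto\norm{(1-p)a}\) is monotone non-increasing on the directed set~\(\Lambda\) and converges to its infimum, and it remains to see this infimum is~\(0\). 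Given \(\epsilon>0\), each ``row'' \(E_m a\colon\Hils\to\Hils_m\) is compact: choosing \(g\in\Cont_0(X)\) with \(g\equiv1\) on the compact set~\(\overline{B_m}\) (recall~\(d\) is proper) gives \(E_m a=E_m\varrho(g)a\), and \(\varrho(g)a\) is compact by local compactness of~\(a\). Enumerate~\(M\) as \(m_1,m_2,\dotsc\) and pick finite-rank projections~\(p_{m_j}\) on~\(\Hils_{m_j}\) with \(\norm{(1-p_{m_j})E_{m_j}a}\le\epsilon\,2^{-j}\); set \(p\defeq\bigoplus_j p_{m_j}\in\Lambda\). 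For \(\xi\in\Hils\) the vectors \((1-p_{m_j})E_{m_j}a\,\xi\) lie in mutually orthogonal subspaces, so
\[
\norm{(1-p)a\,\xi}^2=\sum_j\norm{(1-p_{m_j})E_{m_j}a\,\xi}^2\le\epsilon^2\Bigl(\sum_j 4^{-j}\Bigr)\norm{\xi}^2\le\epsilon^2\norm{\xi}^2,
\]
whence \(\norm{(1-p)a}\le\epsilon\); letting \(\epsilon\to0\) completes the argument.

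The main obstacle is showing that the candidate projections really lie in \(\Cst_\Roe(X,\varrho)\)---concretely, that they are locally compact---and, hand in hand with that, getting \emph{norm} convergence rather than merely strong convergence: the more obvious choice of spectral truncations \(\varrho(1_K)\) for relatively compact \(K\nearrow X\) fails, since for the diagonal operator with all fibre entries equal to a fixed rank-one projection (cf.\ Example~\ref{exa:Roe_discrete}) one has \(\norm{a-\varrho(1_K)a\varrho(1_K)}=\norm{a}\) for every relatively compact~\(K\). What makes the proof work is using that~\(a\) is \emph{both} locally compact \emph{and} controlled: local compactness forces each block row~\(E_m a\) to be compact, and the geometric weighting~\(\epsilon\,2^{-j}\) then upgrades the fibrewise finite-rank approximations into a single global norm bound.
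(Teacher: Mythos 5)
Your proof is correct, but it takes a genuinely different route from the paper's. The paper first reduces to a discrete space via a coarsely dense discrete subspace and Theorem~\ref{the:coarsely_dense}, then invokes Theorem~\ref{the:Roe_ample_unique} to pass to the standard representation on \(\ell^2(X,\ell^2(\N))\), and finally exhibits the approximate unit as the explicit projections in \(\ell^\infty(X,\Cont_0(\N))\) given by characteristic functions of regions \(\setgiven{(x,n)}{n<h(x)}\) for \(h\colon X\to\N\) --- leaving the verification that these form an approximate unit to the reader. You instead work directly with the given representation on the original space, using only the bounded Borel partition \(X=\bigsqcup B_m\) from the proof of Theorem~\ref{the:Roe_ample_unique}, and you supply in full the estimate that the paper omits: the observation that local compactness makes each row \(E_m a\) compact, combined with the geometric weighting \(\epsilon\,2^{-j}\) that turns blockwise finite-rank approximations into a single operator-norm bound, is exactly the missing verification. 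Two further points in favour of your version: it applies verbatim to an arbitrary nondegenerate representation \(\varrho\), as the proposition is literally stated, whereas the paper's reductions via Theorems~\ref{the:coarsely_dense} and~\ref{the:Roe_ample_unique} are only available for ample representations; and your closing remark correctly isolates why naive spectral truncations \(\varrho(1_K)\) only give strong convergence. One tiny imprecision: in that remark you say the argument uses that \(a\) is ``both locally compact and controlled,'' but your norm estimate for \(\norm{(1-p)a}\) in fact uses only local compactness of \(a\); controlledness enters only through the density of the locally compact controlled operators in \(\Cst_\Roe(X,\varrho)\), and in making \(p\) itself an element of the algebra. This does not affect correctness.
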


\begin{proof}
  Any proper metric space contains a coarsely dense, discrete
  subspace.  By Theorem~\ref{the:coarsely_dense}, we may assume
  that~\(X\)
  itself is discrete.  By Theorem~\ref{the:Roe_ample_unique}, we may
  further assume that the Roe \(\Cst\)\nb-algebra
  is built using the obvious representation of~\(\Cont_0(X)\)
  on \(\ell^2(X,\ell^2(\N))\).
  Then the Roe \(\Cst\)\nb-algebra
  contains \(\ell^\infty(X, \Cont_0(\N))\)
  as multiplication operators.  Any function \(h\colon X\to\N\)
  defines a projection in \(\ell^\infty(X, \Cont_0(\N))\),
  namely, the characteristic function of
  \(\setgiven{(x,n)\in X\times\N}{n<h(x)}\).
  These projections form an approximate unit in the Roe
  \(\Cst\)\nb-algebra.
\end{proof}

Let \((p_\alpha)_{\alpha \in S}\)
be an approximate unit of projections in \(\Cst_\Roe(X,\varrho)\).
Then \(\Cst_\Roe(X,\varrho)\)
is isomorphic to the inductive limit
\[
\Cst_\Roe(X,\varrho) = \varinjlim p_\alpha \Cst_\Roe(X,\varrho) p_\alpha.
\]
Since \(\K\)\nb-theory commutes with inductive limits, we get
\begin{equation}
  \label{eq:K_Roe_indlim}
  \K_*(\Cst_\Roe(X,\varrho)) \cong
  \varinjlim \K_*(p_\alpha \Cst_\Roe(X,\varrho) p_\alpha).
\end{equation}
Each of the corners \(p_\alpha \Cst_\Roe(X,\varrho) p_\alpha\)
is unital.  This simplifies the definition of the groups
\(\K_*(p_\alpha \Cst_\Roe(X,\varrho) p_\alpha)\)
for fixed~\(\alpha\).

\begin{corollary}
  \label{cor:K0_Roe_projections}
  Any class in \(\K_0(\Cst_\Roe(X))\)
  is represented by a formal differences of projections in
  \(\Cst_\Roe(X)\).
  Two such differences \([p_+]-[p_-]\)
  and \([q_+]-[q_-]\)
  represent the same class in \(\K_0(\Cst_\Roe(X))\)
  if and only if there is a projection~\(r\)
  in \(\Cst_\Roe(X)\)
  such that the projections \(p_+ \oplus q_- \oplus r\)
  and \(p_- \oplus q_+ \oplus r\)
  in \(\Mat_3(\Cst_\Roe(X))\) are Murray--von Neumann equivalent.
\end{corollary}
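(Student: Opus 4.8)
The plan is to derive the statement from the two structural facts about \(A\defeq\Cst_\Roe(X)\) proved above — that \(A\) has an approximate unit of projections (Proposition~\ref{pro:Roe_apprid}) and that \(A\) is matrix\nb-stable (Corollary~\ref{cor:Roe_matrix-stable}) — together with elementary facts about Grothendieck groups of commutative monoids. Fix an ample representation \(\varrho\) on~\(\Hils\). The first step is to extract from matrix stability that matrix amplifications can be absorbed into~\(A\) without disturbing Murray--von Neumann equivalence. For each~\(n\), Corollary~\ref{cor:Roe_matrix-stable} identifies \(\Mat_n(A)\) with \(\Cst_\Roe(X,n\cdot\varrho)\), and the construction in the proof of Theorem~\ref{the:Roe_ample_unique} provides a \emph{controlled} unitary \(U_n\colon\Hils^n\congto\Hils\) with \(U_n\Mat_n(A)U_n^*=A\); the same construction produces controlled unitaries between~\(\Hils^k\) and~\(\Hils^N\) conjugating \(\Mat_k(A)\) onto \(\Mat_N(A)\). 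Since the product of a controlled operator with a norm limit of controlled, locally compact operators is again such a limit, these unitaries act like multipliers of the Roe algebras involved. Two consequences follow: every projection \(P\) in \(\Mat_n(A)\) is Murray--von Neumann equivalent, inside \(\Mat_{n+1}(A)\), to the projection \(U_nPU_n^*\in A\), via the partial isometry \(U_nP\); and, by running the same argument through the corner inclusion \(\Mat_k(A)\injto\Mat_N(A)\) followed by a controlled-unitary isomorphism \(\Mat_N(A)\cong\Mat_k(A)\), Murray--von Neumann equivalence in \(\Mat_k(A)\) agrees with Murray--von Neumann equivalence in \(\Mat_\infty(A)\defeq\bigcup_N\Mat_N(A)\) for all projections in \(\Mat_k(A)\), for every~\(k\). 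In particular, writing \(V(A)\) for the monoid of Murray--von Neumann classes of projections \emph{in~\(A\) itself}, the canonical map \(V(A)\to V(\Mat_\infty(A))\) is an isomorphism of monoids.

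Next I would use the standard description of the \(\K\)\nb-theory of a \(\Cst\)\nb-algebra with an approximate unit of projections, which for~\(A\) also follows from~\eqref{eq:K_Roe_indlim}: since the corners \(p_\alpha A p_\alpha\) are unital, \(\K_0(A)\) is the Grothendieck group of \(V(\Mat_\infty(A))\), hence by the first step the Grothendieck group \(G(V(A))\) of the monoid~\(V(A)\). Under this identification the sum in \(V(A)\) is \([p]+[q]=[U_2(p\oplus q)U_2^*]\), so a direct sum \(p_1\oplus\dots\oplus p_m\), viewed as a projection in \(\Mat_m(A)\), represents \([p_1]+\dots+[p_m]\) in \(V(A)\).

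The corollary is now monoid arithmetic. Every element of the Grothendieck group of a commutative monoid is a difference of two elements of the monoid, so every class in \(\K_0(A)\) is \([p_+]-[p_-]\) for projections \(p_\pm\) in~\(A\). For the equivalence criterion, recall that in the Grothendieck group of a commutative monoid~\(M\) one has \(a-b=a'-b'\) if and only if \(a+b'+c=a'+b+c\) holds in~\(M\) for some \(c\in M\). Taking \(M=V(A)\), \(a=[p_+]\), \(b=[p_-]\), \(a'=[q_+]\), \(b'=[q_-]\) and \(c=[r]\) for a projection~\(r\) in~\(A\), the identity \(a+b'+c=a'+b+c\) reads \([p_+\oplus q_-\oplus r]=[p_-\oplus q_+\oplus r]\) in \(V(A)\); since, by the first step, Murray--von Neumann equivalence in \(\Mat_3(A)\) coincides with that in \(\Mat_\infty(A)\) for projections in \(\Mat_3(A)\), this is exactly the assertion that the projections \(p_+\oplus q_-\oplus r\) and \(p_-\oplus q_+\oplus r\) in \(\Mat_3(A)\) are Murray--von Neumann equivalent.

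The formal ingredients — the inductive-limit description of \(\K_0\) and the Grothendieck-group identities — are routine. The main obstacle, and the only point where the nature of the Roe \(\Cst\)\nb-algebra really enters, is the bookkeeping in the first step: one has to return to the proof of Theorem~\ref{the:Roe_ample_unique} to see that the isomorphisms \(\Mat_n(A)\cong A\) are implemented by \emph{controlled} unitaries, and then use controlledness to check that these isomorphisms, and matrix amplification generally, leave Murray--von Neumann classes unchanged. Once this is in place the remaining steps are automatic.
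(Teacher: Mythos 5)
Your argument is correct and follows essentially the same route as the paper: the inductive-limit description of \(\K_0\) coming from the approximate unit of projections, combined with the observation that the isomorphisms \(\Mat_n(\Cst_\Roe(X))\cong\Cst_\Roe(X)\) are implemented by controlled unitaries (hence by multiplier isometries), so that matrix amplification does not change Murray--von Neumann classes. You are in fact slightly more explicit than the paper about the last step, namely why the equivalence of \(p_+\oplus q_-\oplus r\) and \(p_-\oplus q_+\oplus r\) can be realised already in \(\Mat_3(\Cst_\Roe(X))\) rather than only in a larger matrix algebra.
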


\begin{proof}
  Any class in \(\K_0(\Cst_\Roe(X))\)
  is the image of a class in \(\K_0(p_\alpha \Cst_\Roe(X) p_\alpha)\)
  for some~\(\alpha\)
  by~\eqref{eq:K_Roe_indlim}.  Since
  \(p_\alpha \Cst_\Roe(X) p_\alpha\)
  is unital, it is represented by a formal difference of two
  projections in \(\Mat_k(p_\alpha \Cst_\Roe(X) p_\alpha)\)
  for some \(k\in\N_{\ge1}\).
  These projections also belong to \(\Mat_k(\Cst_\Roe(X))\),
  which is isomorphic to~\(\Cst_\Roe(X)\)
  by Corollary~\ref{cor:Roe_matrix-stable}.  Even more, the
  isomorphism \(\Mat_k(\Cst_\Roe(X))\cong\Cst_\Roe(X)\)
  is by conjugation with a unitary \(k\times1\)-matrix
  over the multiplier algebra of~\(\Cst_\Roe(X)\).
  Hence any projection in \(\Mat_k(\Cst_\Roe(X))\)
  is Murray--von Neumann equivalent to one in~\(\Cst_\Roe(X)\).
  Thus any class in \(\K_0(\Cst_\Roe(X))\)
  is represented by a formal difference of projections in
  \(\Cst_\Roe(X)\).
  Using~\eqref{eq:K_Roe_indlim} and the definition of~\(\K_0\)
  for unital algebras once again, we see that
  \([p_+]-[p_-]=[q_+]-[q_-]\)
  holds if and only if there is a projection~\(r\)
  in \(\Mat_k(p_\alpha \Cst_\Roe(X) p_\alpha)\)
  for some \(k,\alpha\)
  such that \(p_+ \oplus q_- \oplus r\)
  and \(p_- \oplus q_+ \oplus r\)
  are Murray--von Neumann equivalent in
  \(\Mat_{k+2}(p_\alpha \Cst_\Roe(X) p_\alpha)\).
  Here we may replace~\(r\)
  by an equivalent projection and so reduce to \(k=1\).
  And since any projection in~\(\Cst_\Roe(X)\)
  is equivalent to one in \(p_\alpha \Cst_\Roe(X) p_\alpha\)
  for some~\(\alpha\),
  we may allow~\(r\) to be any projection in~\(\Cst_\Roe(X)\).
\end{proof}

\section{Coarse Mayer--Vietoris sequence}
\label{sec:coarse_MV}

We now recall how to compute the \(\K\)\nb-theory
of the Roe \(\Cst\)\nb-algebra
of \(\R^d\)
or~\(\Z^d\)
using the coarse Mayer--Vietoris sequence.  This method is due to
Higson--Roe--Yu~\cite{Higson-Roe-Yu:Coarse_Mayer-Vietoris} and was
also explained by Kubota in
\cite{Kubota:Controlled_bulk-edge}*{Section 2.3.3}.  The Roe
\(\Cst\)\nb-algebra
is defined also for the half-space
\(\Z^{d-1}\times\N \subseteq \Z^d\),
and its \(\K\)\nb-theory
vanishes.  We identify the Roe \(\Cst\)\nb-algebra
of a subspace with a corner in the larger Roe \(\Cst\)\nb-algebra
and describe the ideal generated by this corner as a relative Roe
\(\Cst\)\nb-algebra.
Then we recall the Mayer--Vietoris sequence for two ideals in a
\(\Cst\)\nb-algebra.
We give an elegant proof due to Wodzicki, which reduces its exactness
to ordinary long exact sequences for \(\Cst\)\nb-algebra
extensions.  This also describes the boundary map in the
Mayer--Vietoris sequence through the boundary map for a
\(\Cst\)\nb-algebra
extension.  Hence the many tools and formulas for the boundary maps
for extensions of \(\Cst\)\nb-algebras
also apply to the boundary map in the Mayer--Vietoris sequence.

\subsection{Subspaces and corners}
\label{sec:subspace_corner}

Theorem~\ref{the:coarsely_dense} shows that \(\Cst_\Roe(Y)\)
and \(\Cst_\Roe(X)\)
are isomorphic if \(Y\subseteq X\)
is coarsely dense.  Now let \(Y\subseteq X\)
be an arbitrary closed subset, still with the restriction of the
metric from~\(X\).
We are going to relate the Roe \(\Cst\)\nb-algebras
of \(Y\)
and~\(X\).
First, we are going to show that \(\Cst_\Roe(Y)\)
is isomorphic to a corner in \(\Cst_\Roe(X)\);
this is an easy consequence of Theorem~\ref{the:Roe_ample_unique}.
For the coarse Mayer--Vietoris sequence, we also need to know that the
ideal generated by this corner is the relative Roe
\(\Cst\)\nb-algebra, defined as follows:

\begin{definition}
  \label{def:Roe_relative}
  Let \(Y\subseteq X\)
  and let \(\varrho\colon \Cont_0(X)\to\Bound(\Hils)\)
  be a representation.  An operator \(T\in\Bound(\Hils)\)
  is \emph{supported near}~\(Y\) if there is \(R>0\) with
  \[
  \supp(T) \subseteq \setgiven{(x,y)\in X \times X}
  {d(x,Y)<R\text{ and }d(y,Y)<R}.
  \]
  The \emph{relative Roe algebra} \(\Cst_\Roe(Y\subseteq X,\varrho)\)
  is the \(\Cst\)\nb-subalgebra
  of~\(\Bound(\Hils)\)
  generated by the controlled locally compact operators supported
  near~\(Y\).
\end{definition}

If \(T\)
is supported near~\(Y\)
and~\(S\)
is controlled, then \(S T\)
and \(T S\)
are again supported near~\(Y\),
and~\(T^*\)
is also supported near~\(Y\).
Thus the controlled operators supported near~\(Y\)
form a (two-sided) \Star{}ideal in the \Star{}algebra of controlled
operators.  Hence \(\Cst_\Roe(Y\subseteq X,\varrho)\)
is the closure of this \Star{}algebra in \(\Cst_\Roe(X,\varrho)\),
and it is a closed two-sided \Star{}ideal in \(\Cst_\Roe(X,\varrho)\).

A \emph{corner} in a \(\Cst\)\nb-algebra~\(A\)
is a \(\Cst\)\nb-subalgebra
of the form \(P A P\)
for a projection~\(P\)
in the multiplier algebra of~\(A\).
Any corner is a hereditary subalgebra.  It is canonically Morita
equivalent to the ideal generated by~\(P\),
which we denote by~\(A P A\)
because it is the closed linear span of~\(a_1 P a_2\)
for \(a_1,a_2\in A\).
The imprimitivity bimodule is~\(A P\)
with the obvious full Hilbert \(A P A, P A P\)-bimodule
structure, obtained by restricting the usual Hilbert \(A,A\)-bimodule
structure on~\(A\).

\begin{theorem}
  \label{the:subset_full_corner}
  Let \(Y\subseteq X\)
  be a closed subspace with the subspace metric.  Then
  \(\Cst_\Roe(X)\)
  is isomorphic to a corner in \(\Cst_\Roe(Y)\).
  The ideal generated by it is the relative Roe \(\Cst\)\nb-algebra
  \(\Cst_\Roe(Y\subseteq X)\).
  This holds both in the complex and real case.
\end{theorem}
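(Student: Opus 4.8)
The plan is to realise $\Cst_\Roe(Y)$ as a corner of $\Cst_\Roe(X)$ by working with one well-chosen ample representation of $\Cont_0(X)$, and then to identify the ideal generated by that corner with the relative Roe algebra of Definition~\ref{def:Roe_relative}. Let $\pi\colon\Cont_0(X)\to\Cont_0(Y)$ be the restriction homomorphism; it is surjective because $Y$ is closed. Fix an ample representation $\varrho^0$ of $\Cont_0(X)$ and an ample representation $\varrho_Y$ of $\Cont_0(Y)$, and set $\varrho\defeq\varrho^0\oplus(\varrho_Y\circ\pi)$ on $\Hils\defeq\Hils^0\oplus\Hils_Y$. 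This representation is ample, so by Theorem~\ref{the:Roe_ample_unique} it may be used to define $\Cst_\Roe(X)$. Extending $\varrho$ to bounded Borel functions by the Borel functional calculus, set $P\defeq\bar\varrho(1_Y)$. Since $Y$ is closed, $1_Y$ acts as a diagonal multiplication symbol, so $P$ is $0$\nb-controlled, and $P$ commutes with every $\varrho(f)$, so multiplication by $P$ on either side preserves local compactness. Hence $P$ is a projection in the multiplier algebra of $\Cst_\Roe(X,\varrho)$ and $P\,\Cst_\Roe(X,\varrho)\,P$ is a corner.

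Next I would identify this corner. Compressing $\varrho$ by $P$ yields a representation $\varrho_Y'$ of $\Cont_0(Y)$ on $P\Hils$; it is ample because the spectral measure of $\varrho_Y\circ\pi$ is carried by $Y$, so $\overline{\varrho_Y\circ\pi}(1_Y)=\Id_{\Hils_Y}$ and $\varrho_Y'$ contains $\varrho_Y$ as a direct summand. The essential point is that the metric on $Y$ is the restriction of that on $X$, so an operator of the form $P S P$ has the same propagation whether viewed on $Y$ or on $X$; combined with the fact that local compactness transfers across $P$ in both directions, this gives $P\,\Cst_\Roe(X,\varrho)\,P=\Cst_\Roe(Y,\varrho_Y')$ as concrete \(\Cst\)\nb-algebras of operators. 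A second application of Theorem~\ref{the:Roe_ample_unique} identifies the right-hand side with $\Cst_\Roe(Y)$, which proves the corner statement.

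It remains to describe the ideal generated by $P\,\Cst_\Roe(X)\,P$. One inclusion is a propagation count: if $A$ is $R_A$\nb-controlled and $B$ is $R_B$\nb-controlled, then every point in $\supp(APB)$ has its first coordinate within $R_A$ of $Y$ and its second within $R_B$ of $Y$, since $\supp P$ sits on the diagonal of $Y\times Y$; thus $APB$ is a controlled, locally compact operator supported near $Y$, and taking closed linear spans shows that the ideal generated by the corner lies inside the ideal $\Cst_\Roe(Y\subseteq X)$. For the reverse inclusion, let $T$ be controlled and locally compact with $\supp T\subseteq Y_R\times Y_R$, where $Y_R$ denotes the closed $R$\nb-neighbourhood of $Y$; then $T=Q_RTQ_R$ for the multiplier projection $Q_R\defeq\bar\varrho(1_{Y_R})$. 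Since $Y$ is coarsely dense in $Y_R$, I would construct --- by the same device as the unitary $U=\bigoplus_m U_m$ in the proof of Theorem~\ref{the:Roe_ample_unique}, partitioning $Y_R$ into Borel pieces of bounded diameter each within distance $R$ of a point of $Y$ and using the ampleness of $\varrho_Y'$ to absorb the (locally finitely many) pieces with mutually orthogonal ranges --- a controlled partial isometry $V$ in the multiplier algebra with $V^*V=Q_R$ and $VV^*\le P$. Then $VTV^*$ is controlled and locally compact and is fixed by $P$ on both sides, hence lies in $P\,\Cst_\Roe(X)\,P$, and $T=V^*(VTV^*)V$ exhibits $T$ inside the ideal generated by the corner. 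Letting $R$ grow and passing to closures completes the argument; the real and complex cases are verbatim the same.

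The step I expect to be the real obstacle is the construction of the controlled partial isometry $V$: one must compress the Hilbert space lying over the thickened set $Y_R$ isometrically into the ample Hilbert space lying over $Y$ while keeping the propagation bounded in terms of $R$ alone and keeping the images of the individual pieces orthogonal, which is precisely the combinatorial bookkeeping of Theorem~\ref{the:Roe_ample_unique} but now performed inside the ambient algebra $\Cst_\Roe(X)$. Everything else --- that $\bar\varrho(1_Y)$ and $\bar\varrho(1_{Y_R})$ really encode ``supported in $Y\times Y$'' and ``supported near $Y$'', and that controlled multipliers interact with controlled locally compact operators as expected --- is routine once the representation $\varrho$ has been fixed.
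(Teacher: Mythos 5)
Your argument is correct and follows essentially the same route as the paper: the same auxiliary ample representation $\varrho^0\oplus(\varrho_Y\circ\pi)$, the same identification of the corner via a $0$\nb-controlled multiplier projection over $Y$, and the same propagation count plus a controlled partial isometry compressing $Y_R$ onto $Y$ for the ideal. The step you flag as the real obstacle is precisely what Theorems~\ref{the:coarsely_dense} and~\ref{the:Roe_ample_unique} already supply --- the unitary $\Hils_Y\congto\Hils_{Y_R}$ built there is controlled by construction and becomes your partial isometry $V^*$ in the multiplier algebra --- so you can simply cite them instead of redoing the Borel partition bookkeeping.
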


\begin{proof}
  We prove the real case.  The complex case is analogous.  We choose
  ample representations \(\varrho_X\)
  and~\(\varrho_Y\)
  of \(\Cont_0(X)\)
  and \(\Cont_0(Y)\)
  on separable real Hilbert spaces \(\Hils_X\)
  and~\(\Hils_Y\),
  respectively.  Using the restriction map
  \(p\colon \Cont_0(X)\to\Cont_0(Y)\),
  we build another ample representation
  \(\varrho'\defeq \varrho_Y\circ p\oplus \varrho_X\)
  of~\(\Cont_0(X)\)
  on the separable real Hilbert space \(\Hils_Y\oplus\Hils_X\).
  We use~\(\varrho'\)
  to build \(\Cst_\Roe(X)\),
  which is allowed by Theorem~\ref{the:Roe_ample_unique}.  The
  projection~\(P\)
  onto the summand \(\Hils_Y\)
  is a multiplier of \(\Cst_\Roe(X)\)
  because it is \(0\)\nb-controlled.
  (It is only a multiplier because it is not locally compact.)  Since
  the map \(x\mapsto P x P\)
  on \(\Bound(\Hils_Y\oplus\Hils_X)\)
  is bounded, the corner \(P \Cst_\Roe(X) P\)
  is the closure of the space of all controlled, locally compact
  operators on~\(\Hils_Y\).
  Here we should use the representation \(\varrho_Y\circ p\)
  of~\(\Cont_0(X)\)
  to define controlled operators and local compactness.  But
  since~\(Y\)
  carries the subspace metric from~\(X\)
  and~\(p\)
  is surjective, the representations \(\varrho_Y\circ p\)
  of~\(\Cont_0(X)\)
  and~\(\varrho_Y\)
  of~\(\Cont_0(Y)\)
  define the same controlled or locally compact operators.  Hence
  \(P \Cst_\Roe(X,\varrho') P = \Cst_\Roe(Y,\varrho_Y)\).
  So~\(\Cst_\Roe(Y)\)
  is isomorphic to a corner in~\(\Cst_\Roe(X)\).

  The corner \(P \Cst_\Roe(X) P\)
  is contained in \(\Cst_\Roe(Y\subseteq X)\).
  Since the latter is an ideal, the ideal
  \(\Cst_\Roe(X) P \Cst_\Roe(X)\)
  is also contained in \(\Cst_\Roe(Y\subseteq X)\).
  For the converse inclusion, we must show that any controlled,
  locally compact operator~\(T\)
  that is supported near~\(Y\)
  belongs to \(\Cst_\Roe(X) P \Cst_\Roe(X)\).
  Let \(R>0\)
  and let~\(T\)
  be supported in the \(R\)\nb-neighbourhood
  of~\(Y\).
  This \(R\)\nb-neighbourhood
  is a closed subspace~\(Y_R\)
  of~\(X\),
  and \(Y\subseteq Y_R\)
  is coarsely dense by construction.  The restriction of~\(\varrho'\)
  to the Hilbert subspace
  \(\Hils_{Y,R} \defeq \varrho'(1_{Y_R})(\Hils_Y\oplus \Hils_X)\)
  is an ample representation of~\(\Cont_0(Y_R)\).
  Hence it defines \(\Cst_\Roe(Y_R)\)
  by Theorem~\ref{the:Roe_ample_unique}.  This \(\Cst\)\nb-algebra
  is simply the corner in \(\Cst_\Roe(X)\)
  generated by the projection onto~\(\Hils_{Y,R}\).
  Theorem~\ref{the:coarsely_dense} gives a unitary
  \(U\colon \Hils_Y \congto \Hils_{Y,R}\)
  such that \(U \Cst_\Roe(Y) U^* = \Cst_\Roe(Y_R)\).
  The unitary~\(U\)
  is built in the proof of Theorem~\ref{the:Roe_ample_unique}, and the
  construction there shows that it is controlled as an operator on
  \(\Hils_Y\oplus \Hils_X\).
  So it is a multiplier of~\(\Cst_\Roe(X)\),
  where it is no longer unitary but a partial isometry.  The operator
  \(U^* T U\)
  belongs to \(\Cst_\Roe(Y)\)
  because \(T\in\Cst_\Roe(Y_R)\).
  Since multipliers of \(\Cst_\Roe(X)\)
  are also multipliers of any ideal in~\(\Cst_\Roe(X)\),
  the operator \(T = U (U^* T U) U^*\)
  belongs to the ideal in~\(\Cst_\Roe(X)\)
  generated by \(\Cst_\Roe(Y) = P \Cst_\Roe(X) P\).
  Thus \(\Cst_\Roe(Y_R) \subseteq \Cst_\Roe(X) P \Cst_\Roe(X)\).
\end{proof}

\subsection{The coarse Mayer--Vietoris sequence}
\label{sec:coarse_MV_subsection}

\begin{proposition}[\cite{Higson-Roe-Yu:Coarse_Mayer-Vietoris}]
  \label{pro:omega-excisiv}
  Let~\(X\)
  be a proper metric space and let \(Y_1,Y_2\subseteq X\)
  be closed subspaces with \(Y_1 \cup Y_2 = X\).
  Let \(Z \defeq Y_1\cap Y_2\).
  Then
  \[
  \Cst_\Roe(Y_1\subseteq X) + \Cst_\Roe(Y_2\subseteq X) = \Cst_\Roe(X).
  \]
  We have
  \(\Cst_\Roe(Y_1\subseteq X) \cap \Cst_\Roe(Y_2\subseteq X) =
  \Cst_\Roe(Z\subseteq X)\)
  if and only if the following coarse transversality condition holds:
  for any \(R>0\)
  there is \(S(R)>0\)
  such that if \(x\in X\)
  satisfies \(d(x,Y_1)<R\)
  and \(d(x,Y_2)<R\),
  then \(d(x,Z)<S(R)\).
  The statements above hold both for real and complex Roe
  \(\Cst\)\nb-algebras.
\end{proposition}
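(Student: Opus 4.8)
The plan is to prove the three assertions in turn, treating the real and complex cases by the same arguments since no scalars enter. Write \(Y_i^R\defeq\setgiven{x\in X}{d(x,Y_i)\le R}\) and \(Z^R\) analogously, and let \(\bar\varrho\) denote the extension of~\(\varrho\) to bounded Borel functions by the Borel functional calculus used in the proof of Theorem~\ref{the:Roe_ample_unique}; every cut-off \(\bar\varrho(1_W)\) for Borel \(W\subseteq X\) is a \(0\)\nb-controlled projection, hence a multiplier of \(\Cst_\Roe(X,\varrho)\), and commutes with all \(\varrho(f)\). For the first equality, note that \(\Cst_\Roe(Y_1\subseteq X)+\Cst_\Roe(Y_2\subseteq X)\) is again an ideal, and the sum of two closed ideals of a \(\Cst\)\nb-algebra is closed, so it suffices to show that every controlled, locally compact operator~\(T\) lies in this sum. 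Since \(Y_1\cup Y_2=X\) we have \(X\setminus Y_1\subseteq Y_2\), so with \(P_1\defeq\bar\varrho(1_{Y_1})\) and \(P_2\defeq 1-P_1=\bar\varrho(1_{X\setminus Y_1})\) one gets the decomposition \(T=(P_1TP_1+P_1TP_2+P_2TP_1)+P_2TP_2\) into controlled, locally compact operators. If~\(T\) is \(R\)\nb-controlled, then each of the first three summands is supported inside \(\supp T\) intersected with a product of two sets taken from \(\{Y_1,\cl{X\setminus Y_1}\}\), at least one of which equals~\(Y_1\); since \((x,y)\in\supp T\) with \(y\in Y_1\) forces \(d(x,Y_1)\le d(x,y)\le R\) (and symmetrically), all three are supported in \(Y_1^R\times Y_1^R\), whereas \(P_2TP_2\) is supported in \(\cl{X\setminus Y_1}\times\cl{X\setminus Y_1}\subseteq Y_2\times Y_2\). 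Hence the first bracket lies in \(\Cst_\Roe(Y_1\subseteq X)\) and the last summand in \(\Cst_\Roe(Y_2\subseteq X)\).

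For the intersection, the inclusion \(\Cst_\Roe(Z\subseteq X)\subseteq\Cst_\Roe(Y_1\subseteq X)\cap\Cst_\Roe(Y_2\subseteq X)\) always holds, since \(Z\subseteq Y_i\) gives \(d(x,Y_i)\le d(x,Z)\), so an operator supported near~\(Z\) is supported near each~\(Y_i\). For the reverse inclusion under coarse transversality, fix \(T\) in the intersection and \(\varepsilon>0\), and choose controlled, locally compact operators \(S_i\) supported in \(Y_i^{R_i}\times Y_i^{R_i}\) with \(\norm{T-S_i}<\varepsilon/3\). Set \(R\defeq\max\{R_1,R_2\}\), \(Q\defeq\bar\varrho(1_{Y_2^R})\) and \(S\defeq QS_1Q\). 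Then \(QS_2Q=S_2\), so \(\norm{S-T}\le\norm{S_1-S_2}+\norm{S_2-T}<\varepsilon\); moreover \(S\) is controlled and locally compact, since compression by the \(0\)\nb-controlled multiplier~\(Q\) preserves both properties, and \(\supp S\subseteq\supp(S_1)\cap(Y_2^R\times Y_2^R)\), so every point occurring in \(\supp S\) lies within distance~\(R\) of both \(Y_1\) and~\(Y_2\). Coarse transversality, applied with some radius exceeding~\(R\), then gives \(S'>0\) with \(\supp S\subseteq Z^{S'}\times Z^{S'}\); hence \(S\in\Cst_\Roe(Z\subseteq X)\), and letting \(\varepsilon\to0\) yields \(T\in\Cst_\Roe(Z\subseteq X)\).

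It remains to show that coarse transversality is necessary, and I would argue the contrapositive. Suppose there are \(R>0\) and points \(x_n\in X\) with \(d(x_n,Y_1)<R\), \(d(x_n,Y_2)<R\) and \(d(x_n,Z)\to\infty\). The last condition forces the~\(x_n\) to leave every bounded, hence compact, subset of~\(X\), so after passing to a subsequence the open balls \(B(x_n,1)\) are pairwise disjoint. As~\(\varrho\) is ample, in particular faithful, each \(\bar\varrho(1_{B(x_n,1)})\) is a non-zero projection; pick a unit vector \(\xi_n\) in its range and set \(T\defeq\sum_n\ket{\xi_n}\bra{\xi_n}\), a \(2\)\nb-controlled projection. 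It is locally compact: for \(f\in\Cont_0(X)\) the operator \(\varrho(f)T\) is a norm limit of finite-rank operators, because the vectors \(\varrho(f)\xi_n\) lie in the mutually orthogonal ranges \(\bar\varrho(1_{B(x_n,1)})\Hils\) and satisfy \(\norm{\varrho(f)\xi_n}\le\sup_{B(x_n,1)}\abs{f}\to0\). From \(d(x_n,Y_i)<R\) we get \(\supp T\subseteq\bigcup_n B(x_n,1)\times B(x_n,1)\subseteq Y_i^{R+1}\times Y_i^{R+1}\), so \(T\in\Cst_\Roe(Y_1\subseteq X)\cap\Cst_\Roe(Y_2\subseteq X)\). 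But \(T\notin\Cst_\Roe(Z\subseteq X)\): if \(S\) were controlled, locally compact and supported in \(Z^{R'}\times Z^{R'}\), then for any~\(n\) with \(d(x_n,Z)>R'+1\) the projection \(\bar\varrho(1_{B(x_n,1)})\) annihilates~\(S\) on both sides while fixing the summand \(\ket{\xi_n}\bra{\xi_n}\) of~\(T\), so \(\norm{T-S}\ge1\). Thus the intersection strictly contains \(\Cst_\Roe(Z\subseteq X)\).

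The step I expect to be the main obstacle is the reverse inclusion in the intersection statement, specifically the device of compressing the \(Y_1\)-approximant~\(S_1\) by the spectral projection \(Q=\bar\varrho(1_{Y_2^R})\): this turns \(S_1\) into a genuinely controlled, locally compact operator that is simultaneously supported near~\(Y_1\) and near~\(Y_2\) while remaining close to~\(T\), which is exactly the configuration in which coarse transversality applies. Everything else reduces to routine bookkeeping with supports of products of controlled operators and to the standard fact that a sum of closed ideals in a \(\Cst\)\nb-algebra is closed.
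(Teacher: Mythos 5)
Your proof is correct, and it takes a genuinely different and in places more complete route than the paper's. For the equality \(\Cst_\Roe(Y_1\subseteq X)+\Cst_\Roe(Y_2\subseteq X)=\Cst_\Roe(X)\), the paper does not decompose operators at all: it invokes Theorem~\ref{the:subset_full_corner} to identify each \(\Cst_\Roe(Y_j\subseteq X)\) with the ideal generated by a corner and observes that these corners together generate everything. Your cut-and-paste decomposition \(T=(P_1TP_1+P_1TP_2+P_2TP_1)+P_2TP_2\) is the classical, more explicit argument and is fine; it does rely on the standard fact that a sum of two closed ideals is closed, which you cite. For the inclusion of the intersection into \(\Cst_\Roe(Z\subseteq X)\), the paper argues via \(I\cap J=I\cdot J\) for closed ideals, showing that the \emph{product} of a controlled operator supported near \(Y_1\) with one supported near \(Y_2\) is supported near \(Z\); you instead approximate an element of the intersection by a single controlled operator and compress it by \(\bar\varrho(1_{Y_2^R})\) so that it becomes supported near both \(Y_i\) simultaneously. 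Both devices feed the same coarse transversality hypothesis; yours trades the ideal-theoretic identity for an \(\varepsilon/3\) argument and a compression trick, and both are legitimate. Finally, the paper's proof only establishes the ``if'' half of the equivalence; your explicit witness \(T=\sum_n\ket{\xi_n}\bra{\xi_n}\) built over a sequence violating coarse transversality supplies the ``only if'' half, and the uniform estimate \(\norm{T-S}\ge1\) against every controlled, locally compact \(S\) supported near \(Z\) correctly excludes \(T\) from the closure \(\Cst_\Roe(Z\subseteq X)\).

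One degenerate case in your necessity argument needs a sentence: if \(Z=\emptyset\) while \(Y_1,Y_2\neq\emptyset\), then \(d(x_n,Z)=\infty\) does not force the \(x_n\) to escape bounded sets, so you cannot extract disjoint balls by that reasoning. But in that case coarse transversality fails outright, \(\Cst_\Roe(\emptyset\subseteq X)=0\), and any nonzero rank-one operator \(\ket{\xi}\bra{\eta}\) with \(\xi,\eta\) supported in a bounded set is controlled, locally compact, and supported near each nonempty \(Y_i\), so the intersection is nonzero and the conclusion still holds. With that remark added, your argument is complete.
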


\begin{proof}
  The proof of Theorem~\ref{the:subset_full_corner} identifies
  \(\Cst_\Roe(Y_j\subseteq X)\)
  for \(j=1,2\)
  with corners in \(\Cst_\Roe(X)\).
  The ideal generated by these corners is all of \(\Cst_\Roe(X)\)
  because \(Y_1 \cup Y_2 = X\).
  That is,
  \(\Cst_\Roe(Y_1\subseteq X) + \Cst_\Roe(Y_2\subseteq X) =
  \Cst_\Roe(X)\).

  An operator that is supported near~\(Z\)
  is also supported near~\(Y_1\) and near~\(Y_2\).  So
  \[
  \Cst_\Roe(Z\subseteq X)
  \subseteq \Cst_\Roe(Y_1\subseteq X)\cap \Cst_\Roe(Y_2\subseteq X),
  \]
  as these are closed ideals.  The other inclusion uses the coarse
  transversality assumption above, which is called
  ``\(\omega\)-excisiveness''
  in~\cite{Higson-Roe-Yu:Coarse_Mayer-Vietoris}.

  Let \(T\)
  and~\(U\)
  be locally compact operators that are \(R_T\)-
  and \(R_U\)\nb-controlled,
  respectively, and such that~\(T\)
  is supported within distance \(P_T>0\)
  of~\(Y_1\)
  and~\(U\)
  within distance \(P_U>0\)
  of~\(Y_2\).
  Let \(R \defeq R_T+R_U+P_T+P_U\).
  If \((x,y)\in\supp(T U)\),
  then there is \(z\in X\)
  with \((x,z)\in \supp(T)\) and \((z,y)\in \supp(U)\).  Then
  \begin{align*}
    d(x,Y_1)&<P_T\le R,\\
    d(x,Y_2)&\le d(x,z)+d(z,Y_2)< R_T+P_U\le R.
  \end{align*}
  Hence \(d(x,Y_1\cap Y_2)<S(R)\).
  A similar argument shows that \(d(y,Y_1\cap Y_2)<S(R)\).
  So \(T U\)
  is supported near~\(Y_1 \cap Y_2\).
  Thus
  \(\Cst_\Roe(Y_1\subseteq X)\cdot \Cst_\Roe(Y_2\subseteq X) \subseteq
  \Cst_\Roe(Z\subseteq X)\).
  This implies the claim because \(I\cap J=I\cdot J\)
  if \(I,J\)
  are closed ideals in a (real) \(\Cst\)\nb-algebra;
  the latter follows from the existence of approximate units.
\end{proof}

\begin{proposition}
  \label{pro:cstarmv}
  Let \(A\)
  be a real or complex \(\Cst\)\nb-algebra
  and let \(I,J\subset A\)
  be closed ideals with \(I+J=A\).
  Let \(\alpha_I\colon I\cap J\injto I\),
  \(\alpha_J\colon I\cap J\injto J\),
  \(\beta_I\colon I\injto A\),
  and \(\beta_J\colon J\injto A\)
  denote the inclusion maps and also the maps that they induce on
  \(\K\)\nb-theory.
  Then there is a long exact sequence in \textup{(}real or complex\textup{)}
  \(\K\)\nb-theory
  \[
  \dotsb \to \K_j(I\cap J)
  \xrightarrow{\bigl(\begin{smallmatrix} -\alpha_I\\\alpha_J \end{smallmatrix}\bigr)}
  \K_j(I)\oplus \K_j(J)
  \xrightarrow{\bigl(\begin{smallmatrix} \beta_I& \beta_J \end{smallmatrix}\bigr)}
  \K_j(A) \xrightarrow{\partial_\mathrm{MV}}
  \K_{j-1}(I\cap J) \to \dotsb.
  \]
  The boundary map~\(\partial_\mathrm{MV}\)
  is computed in~\eqref{eq:MV_boundary} below.
\end{proposition}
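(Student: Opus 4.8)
The plan is Wodzicki's: to exhibit a morphism between two \(\Cst\)\nb-algebra extensions that induces an isomorphism on the quotients, so that the two associated long exact sequences in \(\K\)\nb-theory splice together into the asserted Mayer--Vietoris sequence. The first extension is
\[
0 \to I\cap J \xrightarrow{\alpha_I} I \xrightarrow{q_1} I/(I\cap J) \to 0,
\]
which is a genuine \(\Cst\)\nb-algebra extension because \(I\cap J\) is a closed ideal of~\(I\). The second is the quotient extension
\[
0 \to J \xrightarrow{\beta_J} A \xrightarrow{q} A/J \to 0 .
\]
The inclusion \(\beta_I\colon I\injto A\) carries \(I\cap J\) into~\(J\), so it induces a \Star{}homomorphism \(\theta\colon I/(I\cap J)\to A/J\), \(x+I\cap J\mapsto x+J\). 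The kernel of \(q\circ\beta_I\colon I\to A/J\) is exactly \(I\cap J\), so~\(\theta\) is injective, and \(I+J=A\) forces \(q\circ\beta_I\) to be surjective, so~\(\theta\) is an isomorphism. Together with the inclusions \(\alpha_J\colon I\cap J\injto J\) and \(\beta_I\colon I\injto A\) this gives a commuting ladder from the first extension to the second whose rightmost vertical map is the isomorphism~\(\theta\).

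Applying (real or complex) \(\K\)\nb-theory and the long exact sequence of a \(\Cst\)\nb-algebra extension --- which is available, with natural boundary maps, in the complex and the real case alike --- turns this ladder into a morphism of long exact sequences that is an isomorphism on every third term (the \(\theta_*\)-terms). I would then invoke the standard diagram-chasing lemma (the Barratt--Whitehead lemma): two long exact sequences linked by a chain map that is an isomorphism on every third term reassemble into a single long exact sequence on the remaining terms. Here this produces exactly
\[
\dotsb \to \K_j(I\cap J)
\xrightarrow{\bigl(\begin{smallmatrix} -\alpha_I\\\alpha_J \end{smallmatrix}\bigr)}
\K_j(I)\oplus\K_j(J)
\xrightarrow{\bigl(\begin{smallmatrix} \beta_I& \beta_J \end{smallmatrix}\bigr)}
\K_j(A)
\xrightarrow{\partial_\mathrm{MV}}
\K_{j-1}(I\cap J)\to\dotsb ,
\]
with \(\partial_\mathrm{MV}=\partial_1\circ\theta_*^{-1}\circ q_*\), the composite of the quotient map \(q_*\colon\K_j(A)\to\K_j(A/J)\), the inverse of~\(\theta_*\), and the boundary map~\(\partial_1\) of the first extension. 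Since~\(\partial_1\) is nothing but the ordinary boundary map of a \(\Cst\)\nb-algebra extension, the standard formulas for the index and exponential maps of an extension apply to it, and transporting them through \(\theta_*^{-1}\) and~\(q_*\) is what yields the formula~\eqref{eq:MV_boundary}.

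The real content is confined to two elementary points: that~\(\theta\) is an isomorphism (immediate from \(I+J=A\)), and the commutativity of the ladder, which is just naturality of the boundary map of an extension. Everything else is the diagram chase behind the Barratt--Whitehead lemma; the one place where signs matter is that the minus in \(\bigl(\begin{smallmatrix} -\alpha_I\\\alpha_J \end{smallmatrix}\bigr)\) is exactly what makes \(\bigl(\begin{smallmatrix} \beta_I& \beta_J \end{smallmatrix}\bigr)\circ\bigl(\begin{smallmatrix} -\alpha_I\\\alpha_J \end{smallmatrix}\bigr)\) vanish, since \(\beta_I\alpha_I\) and \(\beta_J\alpha_J\) both equal the map on \(\K\)\nb-theory induced by the inclusion \(I\cap J\injto A\). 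I do not expect a genuine obstacle; the point that most merits care is simply checking that the long exact sequence for an extension and its naturality really are available in \emph{real} \(\K\)\nb-theory, which they are, so that one single argument covers both the real and the complex case.
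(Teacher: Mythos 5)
Your proposal is correct and follows essentially the same route as the paper: the same ladder of extensions $I\cap J\into I\prto I/(I\cap J)$ and $J\into A\prto A/J$ linked by $\alpha_J$, $\beta_I$ and the induced isomorphism on quotients, spliced into a single long exact sequence with the same formula $\partial_\mathrm{MV}=\delta\circ\beta_{I*}^{-1}\circ\pi$ for the boundary map. The only cosmetic difference is that where you invoke the Barratt--Whitehead lemma as a black box, the paper proves that splicing step directly by taking the mapping cone of the chain map between the two $\K$\nb-theory long exact sequences and quotienting out the exact subcomplex generated by the $\K_*(I/(I\cap J))$ summands.
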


\begin{proof}
  There is a commuting diagram
  \[
  \begin{tikzcd}
    I\cap J \arrow[d, "\alpha_J"] \arrow[r, rightarrowtail, "\alpha_I"] &
    I \arrow[d, "\beta_I"] \arrow[r, twoheadrightarrow, "\pi_I"] &
    I/(I\cap J) \arrow[d, "\beta_{I*}", "\cong"'] \\
    J \arrow[r, rightarrowtail, "\beta_J"] &
    A \arrow[r, twoheadrightarrow, "\pi"] &
    A/J
  \end{tikzcd}
  \]
  whose rows are \(\Cst\)\nb-algebra
  extensions.  The map between the quotients induced by~\(\beta_I\)
  is an isomorphism because \(I/(I\cap J) \cong (I+J)/J = A/J\).
  The rows in the above diagram generate \(\K\)\nb-theory
  long exact sequences, which we view as exact chain complexes.  The
  vertical maps generate a chain map between them.  Its mapping cone
  is again exact.  So we get an exact sequence
  \begin{multline*}
    \dotsb \to \K_j(I\cap J) \oplus \K_{j+1}(A/J)
    \xrightarrow{\bigl(\begin{smallmatrix} -\alpha_I&0\\\alpha_J&\delta\end{smallmatrix}\bigr)}
    \K_j(I) \oplus \K_j(J)\\
    \xrightarrow{\bigl(\begin{smallmatrix} -\pi_I&0\\\beta_I&\beta_J\end{smallmatrix}\bigr)}
    \K_j(I/(I\cap J)) \oplus \K_j(A)
    \xrightarrow{\bigl(\begin{smallmatrix} -\delta&0\\\beta_{I*}&\pi\end{smallmatrix}\bigr)}
    \K_{j-1}(I\cap J) \oplus \K_j(A/J)
    \to \dotsb.
  \end{multline*}
  Since~\(\beta_{I*}\)
  is invertible, the boundary map restricts to an injective map on the
  summands \(\K_*(I/(I\cap J))\).
  So these summands and their images~\(B\)
  under the boundary map form an exact subcomplex.  Dividing it out
  gives another exact chain complex.  The
  direct summand \(\K_{j-1}(I\cap J)\)
  in \(\K_{j-1}(I\cap J) \oplus \K_j(A/J)\)
  is complementary to~\(B\),
  and the projection to \(\K_{j-1}(I\cap J)\)
  that kills~\(B\)
  maps \(x\in \K_j(A/J)\)
  to \(\delta(\beta_{I*}^{-1}(x))\)
  because the boundary map sends
  \(\beta_{I*}^{-1}(x) \in \K_j(I/(I\cap J))\)
  to \((-\delta(\beta_{I*}^{-1}(x)),x)\).
  Hence the quotient of the above complex by the exact subcomplex
  \(\K_*(I/(I\cap J)) \oplus B\) becomes the exact chain complex
  \begin{multline*}
    \dotsb \to \K_j(I\cap J)
    \xrightarrow{\bigl(\begin{smallmatrix} -\alpha_I\\\alpha_J\end{smallmatrix}\bigr)}
    \K_j(I) \oplus \K_j(J)\\
    \xrightarrow{\bigl(\begin{smallmatrix} \beta_I&\beta_J\end{smallmatrix}\bigr)}
    \K_j(A)
    \xrightarrow{\delta(\beta_{I*})^{-1}\pi}
    \K_{j-1}(I\cap J)
    \to \dotsb.
  \end{multline*}
  This is the desired long exact sequence.  We have also computed the
  boundary map:
  \begin{equation}
    \label{eq:MV_boundary}
    \partial_\mathrm{MV} = \delta\circ (\beta_{I*})^{-1}\circ \pi\colon
    \K_j(A) \xrightarrow{\pi} \K_j(A/J)
    \xrightarrow[\cong]{\beta_{I*}^{-1}} \K_j(I/(I\cap J))
    \xrightarrow{\delta} \K_{j-1}(I\cap J),
  \end{equation}
  where~\(\delta\)
  is the boundary map for the \(\Cst\)\nb-extension
  \(I\cap J \into I \prto I/(I\cap J)\).
\end{proof}

\begin{corollary}
  \label{cor:MV_coarse}
  Let~\(X\)
  be a proper metric space.  Let \(X= Y_1 \cup Y_2\)
  be a coarsely transverse decomposition as in
  Proposition~\textup{\ref{pro:omega-excisiv}} and let
  \(Z\defeq Y_1\cap Y_2\).  Then there is a long exact sequence
  \begin{multline*}
    \dotsb \to \K_j(\Cst_\Roe(Z))
    \xrightarrow{\bigl(\begin{smallmatrix} -\alpha_1\\\alpha_2 \end{smallmatrix}\bigr)}
    \K_j(\Cst_\Roe(Y_1))\oplus \K_j(\Cst_\Roe(Y_2))
    \\ \xrightarrow{\bigl(\begin{smallmatrix} \beta_1& \beta_2 \end{smallmatrix}\bigr)}
    \K_j(\Cst_\Roe(X)) \xrightarrow{\partial_\mathrm{MV}}
    \K_{j-1}(\Cst_\Roe(Z)) \to \dotsb.
  \end{multline*}
  Here \(\alpha_1,\alpha_2,\beta_1,\beta_2\)
  are the maps on \(\K\)\nb-theory
  induced by the \Star{}homomorphisms on Roe \(\Cst\)\nb-algebras
  induced by the inclusion maps \(Z\to Y_1\),
  \(Z\to Y_2\),
  \(Y_1\to X\)
  and \(Y_2\to X\),
  respectively.  The above holds both for real and complex Roe
  \(\Cst\)\nb-algebras.
\end{corollary}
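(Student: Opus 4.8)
The plan is to apply Proposition~\ref{pro:cstarmv} to the \(\Cst\)\nb-algebra \(\Cst_\Roe(X)\) together with the two ideals coming from \(Y_1\) and~\(Y_2\), and then to rewrite the resulting long exact sequence in terms of the ordinary Roe algebras \(\Cst_\Roe(Y_1)\), \(\Cst_\Roe(Y_2)\), \(\Cst_\Roe(Z)\) by means of Theorem~\ref{the:subset_full_corner}.

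Concretely, I would put \(A\defeq\Cst_\Roe(X)\), \(I\defeq\Cst_\Roe(Y_1\subseteq X)\) and \(J\defeq\Cst_\Roe(Y_2\subseteq X)\). By Proposition~\ref{pro:omega-excisiv}, \(I\) and~\(J\) are closed ideals of~\(A\) with \(I+J=A\), and, since the decomposition \(X=Y_1\cup Y_2\) is coarsely transverse, also \(I\cap J=\Cst_\Roe(Z\subseteq X)\). Proposition~\ref{pro:cstarmv} then produces a long exact sequence relating \(\K_j(\Cst_\Roe(Z\subseteq X))\), \(\K_j(\Cst_\Roe(Y_1\subseteq X))\oplus\K_j(\Cst_\Roe(Y_2\subseteq X))\) and \(\K_j(\Cst_\Roe(X))\), with the first two arrows induced by the ideal inclusions \(\alpha_I,\alpha_J\) and \(\beta_I,\beta_J\), and with boundary map given by~\eqref{eq:MV_boundary}.

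What remains is a translation from relative to ordinary Roe algebras, compatible with all maps. By Theorem~\ref{the:subset_full_corner}, \(\Cst_\Roe(Y_j)\) is (isomorphic to) a corner in \(\Cst_\Roe(X)\) whose generated ideal is \(\Cst_\Roe(Y_j\subseteq X)\); a corner is Morita equivalent to the ideal it generates, and — since the defining projection is full in the multiplier algebra of that ideal — the inclusion of the corner into the ideal induces the isomorphism on \(\K\)\nb-theory. Thus \(\K_*(\Cst_\Roe(Y_j))\congto\K_*(\Cst_\Roe(Y_j\subseteq X))\) via the inclusion, and likewise \(\K_*(\Cst_\Roe(Z))\congto\K_*(\Cst_\Roe(Z\subseteq X))\). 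To see that these isomorphisms intertwine \(\alpha_I,\alpha_J,\beta_I,\beta_J\) with the homomorphisms \(\alpha_1,\alpha_2,\beta_1,\beta_2\) of the statement, I would choose the ample representations coherently, as in the proof of Theorem~\ref{the:subset_full_corner}: fix ample representations \(\Hils_Z\), \(\Hils_{Y_1}'\), \(\Hils_{Y_2}'\), \(\Hils_X'\) of \(\Cont_0(Z)\), \(\Cont_0(Y_1)\), \(\Cont_0(Y_2)\), \(\Cont_0(X)\), and form the ample representation of \(\Cont_0(X)\) on \(\Hils\defeq\Hils_Z\oplus\Hils_{Y_1}'\oplus\Hils_{Y_2}'\oplus\Hils_X'\), the first three summands via the restriction maps. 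Then \(\Hils_{Y_j}\defeq\Hils_Z\oplus\Hils_{Y_j}'\) carries an ample representation of \(\Cont_0(Y_j)\), one has \(\Hils_Z\subseteq\Hils_{Y_1}\cap\Hils_{Y_2}\) inside~\(\Hils\), and with the associated projections \(P_Z\le P_1,P_2\) in the multiplier algebra of~\(A\) all of \(\Cst_\Roe(Z)\), \(\Cst_\Roe(Y_j)\), \(\Cst_\Roe(Z\subseteq X)\), \(\Cst_\Roe(Y_j\subseteq X)\) and \(\Cst_\Roe(X)\) become honest subalgebras of \(\Bound(\Hils)\); the maps \(\alpha_j\) and~\(\beta_j\) of the statement are precisely the corner inclusions \(\Cst_\Roe(Z)=P_ZAP_Z\injto P_jAP_j=\Cst_\Roe(Y_j)\) and \(\Cst_\Roe(Y_j)=P_jAP_j\injto A\), while \(\alpha_I,\alpha_J,\beta_I,\beta_J\) are the inclusions of the corresponding ideals. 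Because \(P_Z\le P_j\) one has \(P_ZAP_Z\subseteq P_jAP_j\subseteq AP_jA\) and \(P_ZAP_Z\subseteq AP_ZA\subseteq AP_jA\), which are the same inclusion of subsets of~\(A\); similarly \(P_jAP_j\subseteq AP_jA\subseteq A\) agrees with \(P_jAP_j\subseteq A\). Hence every naturality square commutes on the nose, and transporting the identifications through the sequence of the previous paragraph yields the claimed exact sequence together with the stated boundary map.

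The step I expect to be the main obstacle is this last compatibility bookkeeping, rather than any genuinely new argument: one must choose the representations carefully enough that the relevant corners, the relative Roe algebras and \(\Cst_\Roe(X)\) all sit inside a single \(\Bound(\Hils)\) with the indicated inclusions, and one must invoke the standard fact that a full corner and the ideal it generates have the same \(\K\)\nb-theory via the inclusion map. All of the analytic content has already been isolated in Propositions~\ref{pro:omega-excisiv} and~\ref{pro:cstarmv} and Theorem~\ref{the:subset_full_corner}, so no further estimates or \(\K\)\nb-theory computations are needed.
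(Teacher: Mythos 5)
Your proposal follows exactly the paper's route: apply Proposition~\ref{pro:cstarmv} with \(A=\Cst_\Roe(X)\), \(I=\Cst_\Roe(Y_1\subseteq X)\), \(J=\Cst_\Roe(Y_2\subseteq X)\), identify \(I+J\) and \(I\cap J\) via Proposition~\ref{pro:omega-excisiv}, and replace relative by absolute Roe algebras via the Morita equivalence of Theorem~\ref{the:subset_full_corner}. Your extra bookkeeping with a single coherent ample representation making all the naturality squares commute on the nose is a sound (and welcome) elaboration of a point the paper leaves implicit.
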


\begin{proof}
  Let \(A\defeq \Cst_\Roe(X)\),
  \(I \defeq \Cst_\Roe(Y_1 \subseteq X)\)
  and \(J \defeq \Cst_\Roe(Y_2 \subseteq X)\).
  Then \(I+J=A\)
  and \(I\cap J = \Cst_\Roe(Z\subseteq X)\)
  by Proposition~\ref{pro:omega-excisiv}.  And the relative Roe
  \(\Cst\)\nb-algebras
  above are Morita equivalent to the absolute ones by
  Theorem~\ref{the:subset_full_corner}.  Thus
  \(\K_*(\Cst_\Roe(Y_j \subseteq X)) \cong \K_*(\Cst_\Roe(Y_j))\)
  for \(j=1,2\)
  and \(\K_*(\Cst_\Roe(Z \subseteq X)) \cong \K_*(\Cst_\Roe(Z))\).
  Plugging this into the Mayer--Vietoris sequence in
  Proposition~\ref{pro:cstarmv} gives the assertion.
\end{proof}

\subsection{Application to \texorpdfstring{$\Z^d$}{the plane}}
\label{sec:K-theory}

We apply the coarse Mayer--Vietoris sequence to the decomposition
\begin{equation}
  \label{eq:decompose_Zn}
  \Z^d = \Z^{d-1} \times \N \cup \Z^{d-1} \times (-\N)
\end{equation}
into two half-spaces, which intersect in \(\Z^{d-1} \times \{0\}\).
It is clearly coarsely transverse, so that
Corollary~\ref{cor:MV_coarse} applies to it.

\begin{proposition}[\cite{Higson-Roe-Yu:Coarse_Mayer-Vietoris}*{Proposition~1}]
  \label{vanish}
  The \(\K\)\nb-theory
  of the real and complex Roe \(\Cst\)\nb-algebras
  of \(X\times\N\) vanishes for any proper metric space~\(X\).
\end{proposition}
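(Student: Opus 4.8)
The plan is to run the Eilenberg swindle for flasque coarse spaces, so as to show that the identity map on $\K_*(\Cst_\Roe(X\times\N))$ is the zero map. First I would reduce to the case that $X$ is discrete: as in the proof of Proposition~\ref{pro:Roe_apprid}, $X$ contains a coarsely dense discrete subspace $X_0$, and then $X_0\times\N$ is coarsely dense and discrete in $X\times\N$, so $\Cst_\Roe(X\times\N)\cong\Cst_\Roe(X_0\times\N)$ by Theorem~\ref{the:coarsely_dense}. Assuming $X$ discrete, I would use the ample representation of $\Cont_0(X\times\N)$ on $\Hils=\ell^2(X\times\N)\otimes\Hils_0$ (with $\Hils_0$ separable and infinite-dimensional) acting on the first tensor factor, which computes $\Cst_\Roe(X\times\N)$ by Theorem~\ref{the:Roe_ample_unique}. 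The geometric input is the shift $t\colon X\times\N\to X\times\N$, $t(x,n)=(x,n+1)$: it is an \emph{isometric} self-embedding with coarsely dense image, and $t^{-k}(B)=\emptyset$ for $k$ large whenever $B\subseteq X\times\N$ is bounded.

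Next I would assemble the pieces. Fixing an identification $\Hils_0\cong\bigoplus_{n\in\N}\Hils_0$ yields isometries $E_n\in\Bound(\Hils)$ of propagation~$0$ with $E_n^*E_m=\delta_{nm}$ and $\sum_n E_nE_n^*=1$. Let $V_n$ be the isometry on $\Hils$ with $V_n^*V_n=1$ and $\supp V_n=\{(t^ny,y):y\in X\times\N\}$ (translation by $t^n$ in the first tensor factor); then $V_nV_n^*$ is the projection onto $\ell^2(X\times\{n,n+1,\dotsc\})\otimes\Hils_0$ and---crucially, because $t^n$ is isometric---conjugation by $V_n$ sends an $R$\nb-controlled operator to an $R$\nb-controlled operator, with \emph{no} growth of the propagation constant. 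Setting $W_n\defeq E_nV_n$ (so $W_0=E_0$), the $W_n$ are isometries with $W_n^*W_n=1$ and pairwise orthogonal ranges, each is a controlled operator and hence a multiplier of $\Cst_\Roe(X\times\N)$, and $\Ad(W_n)$ preserves $\Cst_\Roe(X\times\N)$. I would also use the elementary fact that $\Ad(w)$ for an isometry $w$ in a multiplier algebra with $w^*w=1$ induces the identity on $\K$\nb-theory (on projections, $p$ is Murray--von Neumann equivalent to $wpw^*$ through the partial isometry $wp$; on unitaries one conjugates $u\oplus1$ to $(wuw^*+1-ww^*)\oplus1$), so every $\Ad(W_n)$ acts as the identity on $\K_*(\Cst_\Roe(X\times\N))$.

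Then comes the swindle itself. The key point---and the step I expect to be the main obstacle---is that the strong sums $\Theta(a)\defeq\sum_{n\ge1}W_naW_n^*$ and $\Theta_0(a)\defeq\sum_{n\ge0}W_naW_n^*$ are again \emph{locally compact, controlled} operators, so that they define $*$\nb-endomorphisms of $\Cst_\Roe(X\times\N)$: the summands have orthogonal ranges and a common propagation bound by the previous paragraph, and for $f\in\Cont_0(X\times\N)$ of bounded support only finitely many $\varrho(f)W_n$ are nonzero because the $t^k$ eventually leave every bounded set. Orthogonality of the ranges gives $\Theta_0=\Ad(W_0)\oplus\Theta$, hence $(\Theta_0)_*=\Id+\Theta_*$ on $\K_*$. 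On the other hand, a controlled unitary multiplier $W$ with $WW_n=W_{n+1}$ for all $n$---namely $\sum_{n\ge0}W_{n+1}W_n^*$, completed to a unitary by a propagation-$0$ identification of the (both infinite-dimensional) orthogonal complements of $\sum_{n\ge0}W_nW_n^*$ and $\sum_{n\ge1}W_nW_n^*$---satisfies $\Ad(W)\circ\Theta_0=\Theta$, so $(\Theta_0)_*=\Theta_*$. Comparing the two identities gives $\Id=0$ on $\K_*(\Cst_\Roe(X\times\N))$, which is the assertion. Everything runs identically in the real and the complex case.
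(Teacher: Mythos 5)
Your proof is correct and is essentially the paper's own argument: both run the Eilenberg swindle along the shift in the $\N$\nb-direction, your $\Theta_0(a)=\sum_{n\ge0}W_naW_n^*$ being the paper's $\varphi(T)=\bigoplus_{n\ge0}S^nT(S^*)^n$ after a propagation\nb-zero identification $\Hils^\infty\cong\Hils$ (which is exactly what your $E_n$ implement), and your identity $(\Theta_0)_*=\Id+\Theta_*=\Theta_*$ being the paper's decomposition $\varphi=\iota\oplus\Ad(S)\circ\varphi$ combined with the fact that corner embeddings and inner endomorphisms by isometries act trivially on $\K$\nb-theory. The verifications you flag as the main obstacles (common propagation bound, local compactness of the infinite sum because $t^n$ eventually leaves every bounded set) are exactly the ones the paper carries out.
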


\begin{proof}
  We sketch the proof in~\cite{Higson-Roe-Yu:Coarse_Mayer-Vietoris}
  for~\(\K_0\)
  and then explain briefly why this argument also works for all other
  \(\K\)\nb-groups.
  We realise the Roe \(\Cst\)\nb-algebra
  on \(\ell^2(X\times\N,\Hils)\)
  for a separable Hilbert space~\(\Hils\),
  which may be real or complex.  The unilateral shift on
  \(\ell^2(\N)\)
  is a \(1\)\nb-controlled
  isometry.  It also defines a controlled isometry~\(S\)
  on \(\ell^2(X\times\N,\Hils)\).
  If \(n\in\N\),
  then the map \(T\mapsto S^n T (S^*)^n\)
  on \(\Cst_\Roe(X\times\N)\)
  is a \Star{}homomorphism, and it maps \(R\)\nb-controlled
  operators to \(R\)\nb-controlled
  operators for the same~\(R\).  Hence the map
  \[
  \varphi\colon \Bound(\ell^2(X\times\N,\Hils)) \to
  \Bound(\ell^2(X\times\N,\Hils^\infty)),\qquad
  T\mapsto \bigoplus_{n=0}^\infty S^n T (S^*)^n,
  \]
  maps \(R\)\nb-controlled
  operators on~\(\Hils\)
  to \(R\)\nb-controlled
  operators on \(\Hils^\infty \defeq \Hils\otimes \ell^2(\N)\).
  The matrix coefficients \((S^n T (S^*)^n)_{x,y}\)
  vanish for \(0\le x,y<n\).
  Therefore, \(\varphi(T)\)
  is locally compact if~\(T\)
  is locally compact.  So~\(\varphi\)
  restricts to a \Star{}homomorphism from the Roe \(\Cst\)\nb-algebra
  of~\(X\times\N\)
  realised on \(\ell^2(X\times\N,\Hils)\)
  to the isomorphic Roe \(\Cst\)\nb-algebra
  of~\(X\times\N\)
  realised on \(\ell^2(X\times\N,\Hils^\infty)\).
  We may identify these using a unitary operator
  \(\Hils \cong \Hils^\infty\).
  We have \(S\varphi(T)S^* = \bigoplus_{n=1}^\infty S^n T (S^*)^n\).
  So~\(\varphi\)
  is equal to the direct sum of the canonical inclusion~\(\iota\)
  induced by the embedding \(\Hils\to \Hils^\infty\),
  \(\xi\mapsto \xi\otimes \delta_0\),
  and the \Star{}homomorphism \(T\mapsto S\varphi(T) S^*\).

  In particular, if \(P\in\Cst_\Roe(X\times\N)\)
  is a projection, then \(\varphi(P)\)
  is another projection in \(\Cst_\Roe(X\times\N)\).
  And \(\varphi(P)\)
  is Murray--von Neumann equivalent to \(\iota(P) \oplus \varphi(P)\).
  Thus \(\iota(P)\)
  is stably equivalent to~\(0\).
  That is, the inclusion~\(\iota\)
  induces the zero map on~\(\K_0\).
  We may also identify \(\Hils^\infty \cong \Hils^2\)
  so that the inclusion~\(\iota\) becomes the corner embedding
  \[
  \Cst_\Roe(X\times\N) \to \Mat_2(\Cst_\Roe(X\times\N)),\qquad
  T\mapsto \begin{pmatrix} T&0\\0&0 \end{pmatrix}.
  \]
  This induces an isomorphism on~\(\K_0\).
  So the zero map is an isomorphism on \(\K_0(\Cst_\Roe(X\times\N))\),
  which forces this group to vanish.  The argument above works for any
  functor on the category of \(\Cst\)\nb-algebras
  and \Star{}homomorphisms that is matrix-stable because inner
  endomorphisms induced by isometries act by the identity on all such
  functors (see \cite{Cuntz-Meyer-Rosenberg}*{Proposition 3.16}).  In
  particular, the proof above works for all real and complex
  \(\K\)\nb-groups.
\end{proof}

The proof above breaks down for the uniform Roe \(\Cst\)\nb-algebra,
and indeed the result is wrong in that case.

\begin{corollary}
  \label{cor:K_Roe_Zd}
  Let \(d\in\N\).
  Then the boundary map in the coarse Mayer--Vietoris sequence for the
  decomposition~\eqref{eq:decompose_Zn} is an isomorphism.  So
  \(\K_{i+d}(\Cst_\Roe(\Z^d)_\mathbb{F}) \cong \K_i(\mathbb{F})\)
  for \(\mathbb{F} \in \{\R,\C\}\).
\end{corollary}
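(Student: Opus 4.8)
The plan is to combine the coarse Mayer--Vietoris sequence of Corollary~\ref{cor:MV_coarse} with the vanishing result of Proposition~\ref{vanish}. Apply Corollary~\ref{cor:MV_coarse} to the decomposition~\eqref{eq:decompose_Zn}, which is coarsely transverse as noted above. Here $Y_1 = \Z^{d-1}\times\N$, $Y_2 = \Z^{d-1}\times(-\N)$, and $Z = Y_1\cap Y_2 = \Z^{d-1}\times\{0\}$. The first half-space is literally of the form $X\times\N$ with $X=\Z^{d-1}$, and the second is isometric to it via $(x,n)\mapsto(x,-n)$; so Proposition~\ref{vanish} gives $\K_*(\Cst_\Roe(Y_1)_\mathbb{F}) = \K_*(\Cst_\Roe(Y_2)_\mathbb{F}) = 0$, both in the real and complex cases.

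Feeding these vanishing groups into the long exact sequence of Corollary~\ref{cor:MV_coarse}, the middle term $\K_j(\Cst_\Roe(Y_1))\oplus\K_j(\Cst_\Roe(Y_2))$ vanishes for every $j$, so the sequence breaks into short pieces $0 \to \K_j(\Cst_\Roe(\Z^d)) \xrightarrow{\partial_\mathrm{MV}} \K_{j-1}(\Cst_\Roe(Z)) \to 0$. Hence $\partial_\mathrm{MV}$ is an isomorphism, which is the first assertion. Identifying $Z = \Z^{d-1}\times\{0\}$ with $\Z^{d-1}$ isometrically (by dropping the last coordinate), this reads $\K_j(\Cst_\Roe(\Z^d)_\mathbb{F}) \cong \K_{j-1}(\Cst_\Roe(\Z^{d-1})_\mathbb{F})$.

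Iterating this dimension shift $d$ times reduces to $\Z^0 = \{\pt\}$, a compact space. By the observation at the start of the proof of Theorem~\ref{the:Roe_ample_unique}, $\Cst_\Roe(\pt)_\mathbb{F} = \Comp(\Hils)$ for a separable real or complex Hilbert space~$\Hils$, and stability of $\K$-theory gives $\K_i(\Comp(\Hils)) \cong \K_i(\mathbb{F})$. Therefore $\K_{i+d}(\Cst_\Roe(\Z^d)_\mathbb{F}) \cong \K_i(\mathbb{F})$ for $\mathbb{F}\in\{\R,\C\}$, as claimed.

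There is no substantial obstacle here: all the real work has been done in Proposition~\ref{vanish} and in the construction of the coarse Mayer--Vietoris sequence. The only points requiring a moment's care are that the coarse transversality hypothesis of Corollary~\ref{cor:MV_coarse} genuinely holds for the two closed half-spaces (a point within bounded distance of both lies within comparable distance of the separating hyperplane, since one of its first $d-1$ coordinates controls nothing and its last coordinate is squeezed near~$0$), and that the ``negative'' half-space $\Z^{d-1}\times(-\N)$ is still covered by Proposition~\ref{vanish} after the obvious isometry.
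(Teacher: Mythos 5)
Your proposal is correct and follows exactly the paper's argument: feed the vanishing of $\K_*(\Cst_\Roe(\Z^{d-1}\times\N))$ from Proposition~\ref{vanish} into the coarse Mayer--Vietoris sequence of Corollary~\ref{cor:MV_coarse}, conclude that $\partial_\mathrm{MV}$ is an isomorphism, and iterate down to a point, where the Roe $\Cst$\nobreakdash-algebra is the compact operators. The extra checks you record (coarse transversality of the two half-spaces and the isometry handling the negative half-space) are exactly the points the paper glosses over with ``clearly''.
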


\begin{proof}
  Apply the Mayer--Vietoris long exact sequence of
  Corollary~\ref{cor:MV_coarse} to the coarsely transverse
  decomposition~\eqref{eq:decompose_Zn}.  Plug in that the
  \(\K\)\nb-theory
  vanishes for the two half-spaces (Proposition~\ref{vanish}).
  Hence the boundary map is an
  isomorphism.  Now an induction argument identifies
  \(\K_{i+d}(\Cst_\Roe(\Z^d)_\mathbb{F})\)
  with \(\K_i(\Cst_\Roe(\{0\})_\mathbb{F})\).
  Finally, \(\Cst_\Roe(\{0\})_\mathbb{F}\)
  is isomorphic to the \(\Cst\)\nb-algebra
  of compact operators on a real or complex Hilbert space.  This gives
  the statement because \(\K\)\nb-theory is \(\Cst\)\nb-stable.
\end{proof}

The well known \(K\)\nb-theory computations for \(\R\) and~\(\C\) give
\begin{align*}
  \K_i(\Cst_\Roe(\Z^d)_\C)&\cong
  \begin{cases} \Z &\text{if }i-d\equiv 0 \mod 2,\\
    0 &\text{if }i-d \equiv 1\mod 2.
  \end{cases}\\
  \K_i(\Cst_\Roe(\Z^d)_\R)&\cong
  \begin{cases}
    \Z &\text{if }i-d\equiv 0 \text{ or } 4 \mod 8,\\
    \Z/2&\text{if }i-d\equiv 1 \text{ or } 2 \mod 8,\\
    0 &\text{if } i-d\equiv 3\text{, }5\text{, }6\text{ or }7 \mod 8.
  \end{cases}
\end{align*}

In contrast, the \(\K\)\nb-theory
of the uniform \(\Cst\)\nb-Roe
algebra is far more complicated.  The \(\K_0\)\nb-group
of the complex uniform Roe \(\Cst\)\nb-algebra
of~\(\Z^d\)
is an uncountable Abelian group for all \(d>1\),
see \cite{Spakula:Thesis}*{Example II.3.4}.

When we consider Hamiltonians with symmetries, then we should tensor
the real Roe \(\Cst\)\nb-algebra
of~\(\Z^d\)
with a real or complex Clifford algebra.  This gives a
\(\Z/2\)\nb-graded
\(\Cst\)\nb-algebra.
Up to Morita equivalence, there are ten
different real or complex Clifford algebras.  So we get ten
different observable algebras in each dimension.  The resulting
real or complex \(\K\)\nb-groups
agree with those in Kitaev's periodic
table~\cite{Kitaev:Periodic_table}.  Hence the latter agrees with
the \(\K\)\nb-theory
of the Roe \(\Cst\)\nb-algebra.
We interpret it as saying that Kitaev's table gives only the
\emph{strong} topological phases.

The real and complex \(\K\)\nb-groups
of the point form a graded commutative, graded ring in a
natural way, and the \(\K\)\nb-theory
of any real or complex \(\Cst\)\nb-algebra
is a graded module over this ring.  The boundary map for an extension
of real or complex \(\Cst\)\nb-algebras
automatically preserves this module structure.  In the
complex case, the relevant ring is the ring of Laurent
polynomials \(\Z[\beta,\beta^{-1}]\)
in \(\beta\in \K_2(\C)\)
that describes Bott periodicity.  That a map on \(\K\)\nb-theory
is a \(\K_*(\C)\)-module
homomorphism only says that it is obtained by the maps on \(\K_0\)
and~\(\K_1\)
and Bott periodicity.  In other words, it is a homomorphism of
\(\Z/2\)\nb-graded groups.
In the real case, the relevant ring is more complicated, and so the
module structure contains more useful information.
One way to get the \(\K_*(\R)\)-module structure on \(\K_*(A)\)
for a real \(\Cst\)\nb-algebra~\(A\)
is to identify~\(\K_j(A)\)
with the bivariant Kasparov groups
\(\K_j(A) \cong \KK_0(\R,A \otimes \Cliff_j)\).
The exterior product in Kasparov theory provides both the graded
commutative ring structure on
\(\bigoplus_{j\in\Z} \KK_0(\R,\Cliff_j)\)
and the module structure on
\(\bigoplus_{j\in\Z} \KK_0(\R,A\otimes \Cliff_j)\).
These structures are compatible with Kasparov products, and the
boundary map in an extension may be written as such a Kasparov
product.

The isomorphism
\(\K_{*+d}(\Cst_\Roe(\Z^d)_\R) \cong \K_*(\R)\)
of \(\Z\)\nb-graded groups
in Corollary~\ref{cor:K_Roe_Zd} is a \(\K_*(\R)\)-module
isomorphism.  So \(\K_*(\Cst_\Roe(\Z^d)_\R)\)
is a free \(\K_*(\R)\)-module
of rank~\(1\),
shifted in degree by~\(d\).
And the boundary map~\(\partial_\mathrm{MV}\)
is a module isomorphism.  Thus it is determined by a single sign,
describing whether the ``standard'' generator of
\(\K_d(\Cst_\Roe(\Z^d)_\R)\)
goes to the ``standard'' generator of
\(\K_{d-1}(\Cst_\Roe(\Z^{d-1})_\R)\)
or its negative.  This sign is, in fact, a matter of convention: it
changes when we change the role of the left and right half-spaces in
the Mayer--Vietoris sequence.  So there is not much need to
``compute'' the boundary map for the Roe \(\Cst\)\nb-algebras
because the \(\K\)\nb-theory
groups in question are so small, even in the real case.

The boundary map~\(\partial_\mathrm{MV}\)
is the incarnation of the bulk--edge correspondence in our
Roe \(\Cst\)\nb-algebra
context.  It is shown by Kubota~\cite{Kubota:Controlled_bulk-edge}
that the boundary maps in the Toeplitz extension, which is used by
many authors to describe the bulk--edge correspondence, and the
coarse Mayer--Vietoris sequence are compatible.

\begin{proposition}
  \label{pro:low_dimensional_killed_Roe}
  Let \(\varphi\colon \Z^{d-1}\to \Z^d\)
  be an injective group homomorphism.  Then the induced map
  \(\varphi_*\colon \Cst_\Roe(\Z^{d-1}) \to \Cst_\Roe(\Z^d)\)
  induces the zero map in \(\K\)\nb-theory, both in the real and
  complex cases.
\end{proposition}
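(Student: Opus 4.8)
The plan is to factor $\varphi_*$ through the Roe $\Cst$-algebra of a half-space and apply Proposition~\ref{vanish}. Since $\varphi$ is injective, $\varphi(\Z^{d-1})$ is a rank-$(d-1)$ subgroup of $\Z^d$, hence a discrete subgroup of $\R^d$; by the structure theory of discrete subgroups of $\R^d$ it is therefore a cocompact lattice in its real linear span $V$, and $\dim V=d-1$. In particular $\varphi(\Z^{d-1})$ is coarsely dense in the hyperplane $V$. Fix one of the two closed half-spaces $\overline H$ of $\R^d$ with boundary $V$, so that $\varphi(\Z^{d-1})\subseteq V\subseteq\overline H\subseteq\R^d$ is a chain of closed subspace inclusions, while $\Z^d$ is coarsely dense in $\R^d$.

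First I would collect the relevant $\Cst$-algebra identifications. By Theorem~\ref{the:coarsely_dense}, $\Cst_\Roe(\Z^d)\cong\Cst_\Roe(\R^d)$; and since $\varphi(\Z^{d-1})$ is a lattice in $V$, which is isometric to $\R^{d-1}$, the same theorem gives $\Cst_\Roe(\varphi(\Z^{d-1}))\cong\Cst_\Roe(V)\cong\Cst_\Roe(\R^{d-1})\cong\Cst_\Roe(\Z^{d-1})$. An affine isometry of $\R^d$ carries $\overline H$ onto $\R^{d-1}\times[0,\infty)$, inside which $\R^{d-1}\times\N$ is coarsely dense; hence Theorem~\ref{the:coarsely_dense} gives $\Cst_\Roe(\overline H)\cong\Cst_\Roe(\R^{d-1}\times\N)$, and Proposition~\ref{vanish} (with $X=\R^{d-1}$) then shows $\K_*(\Cst_\Roe(\overline H))=0$ in both the real and complex cases.

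Next I would check that, under these identifications, $\varphi_*$ equals the composite of the \Star{}homomorphisms $\Cst_\Roe(\varphi(\Z^{d-1}))\hookrightarrow\Cst_\Roe(\overline H)\hookrightarrow\Cst_\Roe(\R^d)\cong\Cst_\Roe(\Z^d)$ that Theorem~\ref{the:subset_full_corner} associates to the inclusions $\varphi(\Z^{d-1})\subseteq\overline H$ and $\overline H\subseteq\R^d$. Here ``the induced map $\varphi_*$'' is read as the \Star{}homomorphism obtained by corestricting $\varphi$ to the coarse equivalence $\Z^{d-1}\congto\varphi(\Z^{d-1})$ — a bi-Lipschitz bijection onto a lattice, in particular a coarse equivalence in the sense of Definition~\ref{def:coarse_maps} — followed by the corner inclusion $\Cst_\Roe(\varphi(\Z^{d-1}))\hookrightarrow\Cst_\Roe(\Z^d)$. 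Granting this identification, the composite factors through $\K_*(\Cst_\Roe(\overline H))=0$, so $\varphi_*=0$ on $\K$-theory, as claimed.

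The main obstacle is exactly this identification: it is bookkeeping rather than new mathematics, but it must be done with care. One fixes a single ample representation of $\Cont_0(\R^d)$, restricts it along the surjective restriction maps $\Cont_0(\R^d)\to\Cont_0(\overline H)\to\Cont_0(\varphi(\Z^{d-1}))$ — permissible because $V$ and $\overline H$ carry the subspace metric from $\R^d$ — and adjoins ample representations of the three spaces as further direct summands, exactly as in the proof of Theorem~\ref{the:subset_full_corner}. With this one coherent choice, $\Cst_\Roe(\varphi(\Z^{d-1}))$ and $\Cst_\Roe(\overline H)$ are the corners cut out by commuting $0$-controlled projections $P_0\le P_1$, the two arrows above become genuine inclusions of $\Cst$-subalgebras, and the factorization of $\varphi_*$ is tautological up to the unitary conjugations of Theorem~\ref{the:Roe_ample_unique} at the two ends; no analytic input beyond Proposition~\ref{vanish} and Theorems~\ref{the:subset_full_corner} and~\ref{the:coarsely_dense} enters. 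Alternatively one can bypass the geometry: $\varphi_*$ is induced by a \Star{}homomorphism, hence $\K_*(\mathbb{F})$-linear, and by Corollary~\ref{cor:K_Roe_Zd} the source $\K_*(\Cst_\Roe(\Z^{d-1})_\mathbb{F})$ is a free $\K_*(\mathbb{F})$-module of rank one on a generator in degree $d-1$, whose image lies in $\K_{d-1}(\Cst_\Roe(\Z^d)_\mathbb{F})\cong\K_{-1}(\mathbb{F})=0$, so $\varphi_*=0$.
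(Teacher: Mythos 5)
Your main argument is correct and is essentially the paper's: both proofs kill \(\varphi_*\) by factoring it through the Roe \(\Cst\)\nb-algebra of a half-space and invoking Proposition~\ref{vanish}. The paper does this without leaving \(\Z^d\): since \(\varphi(\Z^{d-1})\) has rank \(d-1\), one can pick \(x\in\Z^d\) outside its rational span, so that \((a,b)\mapsto\varphi(a)+b\cdot x\) embeds \(\Z^{d-1}\times\N\) coarsely into \(\Z^d\) extending \(\varphi\), and the factorisation is immediate. Your detour through \(\R^d\) (the image is a cocompact lattice in a hyperplane \(V\), then pass to a closed half-space bounded by \(V\)) is also fine, at the cost of the structure theory of discrete subgroups of \(\R^d\) and somewhat more bookkeeping with Theorems~\ref{the:coarsely_dense} and~\ref{the:subset_full_corner}; the coherence issue you flag is real but is handled exactly as you describe. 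Your closing alternative is a genuinely different, shorter route that the paper does not use here: \(\varphi_*\) is \(\K_*(\mathbb{F})\)-linear, the source is a free rank-one \(\K_*(\mathbb{F})\)-module on a generator \(g\) in degree \(d-1\), and \(\varphi_*(g)\) lands in \(\K_{d-1}(\Cst_\Roe(\Z^d)_\mathbb{F})\cong\K_{-1}(\mathbb{F})=0\). Be aware that the module structure is doing real work in the real case: a bare degree count does not suffice, since for instance in degree \(d+1\) both source and target are \(\Z/2\), and only \(\K_*(\R)\)-linearity (the source there being the product of \(g\) with the square of the generator of \(\K_1(\R)\)) forces the image to vanish. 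This shortcut leans on the paper's assertion that the isomorphism of Corollary~\ref{cor:K_Roe_Zd} is a \(\K_*(\R)\)-module isomorphism, which is proved by the Mayer--Vietoris induction and not by this proposition, so there is no circularity.
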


\begin{proof}
  Since~\(\varphi\)
  is an injective group homomorphism, it is a coarse equivalence
  from~\(\Z^{d-1}\)
  onto a subspace of~\(\Z^d\).
  This explains the definition of
  \(\varphi_*\colon \Cst_\Roe(\Z^{d-1}) \to \Cst_\Roe(\Z^d)\).
  There is \(x\in\Z^d\)
  so that the map \(\Z^{d-1}\times\Z \to \Z^d\),
  \((a,b) \mapsto \varphi(a)+b\cdot x\),
  is injective.  So the map
  \(\varphi_*\colon \Cst_\Roe(\Z^{d-1}) \to \Cst_\Roe(\Z^d)\)
  factors through \(\Cst_\Roe(\Z^{d-1}\times\N)\).
  Since the \(\K\)\nb-theory
  of \(\Cst_\Roe(\Z^{d-1}\times\N)\)
  vanishes by Proposition~\ref{vanish},
  the map~\(\varphi\)
  induces the zero map on \(\K\)\nb-theory.
\end{proof}

\section{Comparison with the periodic case}
\label{sec:weak_phases}

Let \(\mathbb{F} \in \{\R,\C\}\).
The observable algebra \(\Cst(\Z^d)_\mathbb{F}\)
or a matrix algebra over it describes periodic observables in the
limiting case of no disorder, in the tight-binding approximation.
This is contained in the corresponding Roe \(\Cst\)\nb-algebra
\(\Cst_\Roe(\Z^d)_\mathbb{F}\).
In this section, we recall how to compute the \(\K\)\nb-theory
of \(\Cst_\Roe(\Z^d)_\mathbb{F}\)
and we describe the map in \(\K\)\nb-theory
induced by the inclusion
\(\Cst(\Z^d)_\mathbb{F} \injto \Cst_\Roe(\Z^d)_\mathbb{F}\).
In particular, we show that this map is split surjective and that its
kernel is generated by those elements that come from the
\(\K\)\nb-theory
of \(\Cst(\Z^{d-1})_\mathbb{F}\)
for a coordinate embedding \(\Z^{d-1} \to \Z^d\).
So its kernel consists of those topological phases that are obtained
by stacking lower-dimensional topological insulators in a coordinate
direction.  These are called ``weak topological phases''
in~\cite{Fu-Kane-Mele:Insulators}.  In the end, we argue that stable
homotopy instead of homotopy is the physically reasonable equivalence
relation on Hamiltonians.

The following arguments are easier and more standard in the complex
case.  Hence we only discuss the real case.  It is convenient to
replace real \(\Cst\)\nb-algebras by ``real'' ones, that is, complex
\(\Cst\)\nb-algebras equipped with a real involution.  We first
recall some basic facts and definitions about ``real'' and real
\(\Cst\)\nb-algebras and then describe the relevant ``real''
\(d\)\nb-torus.

A real \(\Cst\)\nb-algebra~\(A\)
corresponds to the ``real'' \(\Cst\)\nb-algebra
\(A\otimes_\R \C\)
with the real involution
\(\conj{a\otimes z} \defeq a\otimes \conj{z}\).
A ``real'' \(\Cst\)\nb-algebra~\(A\)
corresponds to the real \(\Cst\)\nb-algebra
\[
A_\R \defeq \setgiven{a\in A}{\conj{a}=a}.
\]
A ``real'' locally compact space~\(X\)
is a locally compact space with an involutive homeomorphism
\(X\to X\),
\(x\mapsto \conj{x}\).
Then we turn \(\Cont_0(X)\)
into a ``real'' \(\Cst\)\nb-algebra
using the real involution \(\conj{f}(x) \defeq \conj{f(\conj{x})}\)
for all \(x\in X\), \(f\in\Cont_0(X)\).  So
\[
\Cont_0(X)_\R = \setgiven*{f\in\Cont_0(X)}
{f(\conj{x}) = \conj{f(x)}\text{ for all }x\in X}.
\]
For a ``real'' \(\Cst\)\nb-algebra~\(A\), we define
\[
\KR_*(A) \defeq \K_*(A_\R).
\]
For a ``real'' locally compact space~\(X\), we let
\[
\KR^*(X)
\defeq \KR_{-*}(\Cont_0(X))
= \K_{-*}(\Cont_0(X)_\R).
\]
Note the grading convention here, which is analogous to the numbering
convention when a chain complex is treated as a cochain complex.

From now on, \(\Cst(\Z^d)\)
denotes the ``real'' \(\Cst\)\nb-algebra
that corresponds to the real \(\Cst\)\nb-algebra
\(\Cst(\Z^d)_\R\).
That is, the real involution acts on \(f\colon \Z^d\to\C\)
by pointwise complex conjugation.  We give the \(d\)\nb-torus
\(\T^d\subseteq \C^d\)
the real involution by complex conjugation.  So \(\Cont(\T^d)_\R\)
is the closed \(\R\)\nb-linear
span of the functions \(z^k \defeq z_1^{k_1} \dotsm z_d^{k_d}\)
on~\(\T^d\)
for \(k_1,\dotsc,k_d\in\Z\).
This is the unique real structure on~\(\T^d\)
for which the Fourier isomorphism \(\Cst(\Z^d) \cong \Cont(\T^d)\)
is an isomorphism of ``real'' \(\Cst\)\nb-algebras.  Thus
\[
\K_*(\Cst(\Z^d)_\R) \cong \KR_*(\Cont(\T^d)) = \KR^{-*}(\T^d).
\]
We shall also use the ``real'' manifolds~\(\R^{p,q}\)
for \(p,q\in\N\);
this is~\(\R^{p+q}\)
with the real involution \(\conj{(x,y)} \defeq (x,-y)\)
for \(x\in\R^p\),
\(y\in\R^q\).
We may also realise this as
\(\R^p \times (\ima \R)^q \subseteq \C^{p+q}\)
with complex conjugation as real involution.

\begin{proposition}
  \label{pro:Cst_Zd_in_KK}
  The ``real'' \(\Cst\)\nb-algebra
  \(\Cont(\T)\)
  is \(\KK\)\nb-equivalent
  to \(\C\oplus \Cont_0(\R^{0,1})\).
  And \(\Cont(\T^d)\)
  is \(\KK\)\nb-equivalent
  to a direct sum of copies of \(\Cont_0(\R^{0,j})\)
  for \(j=0,\dotsc,d\),
  where the summand \(\Cont_0(\R^{0,j})\)
  appears \(\binom{d}{j}\) times.
  The \(\K\)\nb-theory
  of \(\Cst(\Z^d)_\mathbb{F}\)
  is a free \(\K_*(\mathbb{F})\)-module
  of rank~\(2^d\),
  with \(\binom{d}{j}\) generators of degree \(-j \bmod 8\).
\end{proposition}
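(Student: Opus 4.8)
The plan is to reduce everything to the one\nb-dimensional case and then take tensor products.

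\smallskip

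\noindent\emph{The circle.}
Consider the real point \(1\in\T\). Since it is fixed by the involution, evaluation there is a \Star{}homomorphism of ``real'' \(\Cst\)\nb-algebras \(\ev_1\colon\Cont(\T)\prto\C\), and it is split by the unital inclusion \(\C\injto\Cont(\T)\) of the constant functions. Its kernel is \(\Cont_0(\T\setminus\{1\})\). Removing one of the two fixed points \(\pm1\) from \(\T\) leaves an open arc that still contains the other fixed point, so \(\T\setminus\{1\}\) is equivariantly homeomorphic to \(\R\) with the involution \(t\mapsto -t\), that is, to \(\R^{0,1}\). Hence there is a split extension of ``real'' \(\Cst\)\nb-algebras
\[
\Cont_0(\R^{0,1})\into\Cont(\T)\prto\C,
\]
and a \Star{}homomorphism\nb-split extension induces a direct sum decomposition in \(\KK\); thus \(\Cont(\T)\) is \(\KK\)\nb-equivalent to \(\C\oplus\Cont_0(\R^{0,1})\).

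\smallskip

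\noindent\emph{The torus.}
Write \(\Cont(\T^d)\cong\Cont(\T)^{\otimes d}\) as ``real'' \(\Cst\)\nb-algebras; the involutions are compatible with the tensor product. All algebras occurring here are nuclear, so \(\KK\)\nb-equivalences are preserved under minimal tensor products, and the tensor product distributes over finite direct sums in \(\KK\). Expanding \((\C\oplus\Cont_0(\R^{0,1}))^{\otimes d}\) gives a direct sum indexed by the subsets \(S\subseteq\{1,\dots,d\}\), the summand for \(S\) being \(\Cont_0(\R^{0,1})^{\otimes\abs{S}}\cong\Cont_0(\R^{0,\abs{S}})\) because \((\R^{0,1})^j\cong\R^{0,j}\) as ``real'' spaces. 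Grouping by cardinality yields the asserted \(\KK\)\nb-equivalence, with \(\binom{d}{j}\) summands \(\Cont_0(\R^{0,j})\).

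\smallskip

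\noindent\emph{\(\K\)\nb-theory.}
Taking \(\K\)\nb-theory of the underlying real \(\Cst\)\nb-algebras in this decomposition gives \(\K_*(\Cst(\Z^d)_\mathbb{F})\cong\bigoplus_{j=0}^d\binom{d}{j}\,\K_*(\Cont_0(\R^{0,j})_\mathbb{F})\). It remains to identify each \(\K_*(\Cont_0(\R^{0,j})_\mathbb{F})\) as a free \(\K_*(\mathbb{F})\)-module of rank one whose generator sits in degree \(-j\bmod 8\). For \(\mathbb{F}=\C\) the real involution is irrelevant, \(\Cont_0(\R^{0,j})\) forgets to \(\Cont_0(\R^j)\), and this is ordinary Bott periodicity. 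For \(\mathbb{F}=\R\) one uses ``real'' Bott periodicity \(\Cont_0(\R^{1,1})\cong_{\KK}\C\): it shows that \(\Cont_0(\R^{0,1})\) is invertible in \(\KK\) with inverse \(\Cont_0(\R^{1,0})\cong\Cont_0(\R)\otimes_\R\R\), so that tensoring with \(\Cont_0(\R^{0,j})\) shifts the \(\K\)\nb-theory degree; equivalently, one invokes the standard computation of \(\KR^*(\R^{p,q})\). Since \(\sum_{j=0}^d\binom{d}{j}=2^d\), this gives the rank and the prescribed degrees of the generators.

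\smallskip

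The step I expect to require the most care is the interplay between \(\KK\)\nb-equivalence and tensor products, together with distributivity over direct sums, in the ``real''/\(\Cst\)\nb-algebraic setting -- this is standard but must be quoted precisely -- and the bookkeeping of the degree shift in the final identification of \(\K_*(\Cont_0(\R^{0,j})_\mathbb{F})\); by contrast the topological input (the equivariant homeomorphism \(\T\setminus\{1\}\cong\R^{0,1}\) and the splitting of the evaluation map) is elementary.
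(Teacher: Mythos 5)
Your proposal is correct and follows essentially the same route as the paper: split the evaluation-at-$1$ extension $\Cont_0(\R^{0,1})\into\Cont(\T)\prto\C$ using the constant functions and split-exactness of (``real'') $\KK$, take the $d$\nobreakdash-fold tensor power using that $\otimes$ descends to $\KK$ and distributes over direct sums with $\Cont_0(\R^{0,j})\otimes\Cont_0(\R^{0,j'})\cong\Cont_0(\R^{0,j+j'})$, and finish with Kasparov's ``real'' Bott periodicity for $\Cont_0(\R^{p,q})$. The only cosmetic difference is that you phrase the degree shift via invertibility of $\Cont_0(\R^{0,1})$ in $\KK$ rather than quoting \cite{Kasparov:Operator_K}*{Theorem~7} directly.
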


\begin{proof}
  The points \(\pm 1\in\T\)
  are real, that is, fixed by the real involution.  The complement
  \(\T\setminus\{1\}\)
  is diffeomorphic as a ``real'' manifold to~\(\R^{0,1}\),
  say, by stereographic projection at~\(1\).
  Hence we get an extension of ``real'' \(\Cst\)\nb-algebras
  \[
  \Cont_0(\R^{0,1}) \into \Cont(\T)
  \prto \C,
  \]
  where the quotient map is evaluation at~\(1\).
  This extension splits by embedding~\(\C\)
  as constant functions in \(\Cont(\T)\).
  Since Kasparov theory is split-exact, also for ``real''
  \(\Cst\)\nb-algebras,
  \(\Cont(\T)\)
  is \(\KK\)\nb-equivalent to \(\Cont_0(\R^{0,1}) \oplus \C\).

  We may get \(\Cst(\Z^d)\)
  by tensoring \(d\)~copies
  of \(\Cst(\Z)\).
  The tensor product of \(\Cst\)\nb-algebras
  descends to a bifunctor in \(\KK\)\nb-theory,
  also in the ``real'' case.  So \(\Cst(\Z^d)\)
  is \(\KK\)\nb-equivalent
  to the \(d\)\nb-fold
  tensor power of \(\C \oplus \Cont_0(\R^{0,1})\).
  The tensor product of \(\Cst\)\nb-algebras
  is additive in each variable, and
  \(\Cont_0(\R^{p,q})\otimes \Cont_0(\R^{r,s}) \cong
  \Cont_0(\R^{p+r,q+s})\).
  A variant of the binomial formula now gives.
  \[
  \Cst(\Z^d)
  \sim_\KK (\R \oplus \Cont_0(\R^{0,1}))^{\otimes d}
  \cong \bigoplus_{j=0}^d \binom{d}{j} \Cont_0(\R^{0,j}).
  \]
  A Bott periodicity theorem by Kasparov shows that
  \(\Cont_0(\R^{p,q})\)
  is \(\KK\)\nb-equivalent
  to~\(\Cont_0(\R^{0,0})\)
  with a dimension shift of \(p-q\),
  see \cite{Kasparov:Operator_K}*{Theorem~7}.  This implies the claim
  about \(\K\)\nb-theory.
\end{proof}

\begin{proposition}
  \label{pro:low_dimensional_killed}
  Let \(\varphi\colon \Z^{d-1}\to \Z^d\)
  be an injective group homomorphism.  It induces an injective
  \Star{}homomorphism
  \(\varphi_*\colon \Cst(\Z^{d-1})_\mathbb{F} \to
  \Cst(\Z^d)_\mathbb{F}\)
  and a grading-preserving \(\K_*(\mathbb{F})\)-module
  homomorphism
  \(\K_*(\varphi_*)\colon \K_*(\Cst(\Z^{d-1})_\mathbb{F}) \to
  \K_*(\Cst(\Z^d)_\mathbb{F})\).
  The map
  \(\K_*(\Cst(\Z^d)_\mathbb{F}) \to \K_*(\Cst_\Roe(\Z^d)_\mathbb{F})\)
  vanishes on the image of~\(\K_*(\varphi_*)\).
\end{proposition}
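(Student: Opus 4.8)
The plan is to reduce the last assertion to the vanishing of the \(\K\)-theory of the Roe \(\Cst\)-algebra of a half-space (Proposition~\ref{vanish}). The first two claims are routine and I would only remark on them: under the Fourier isomorphism \(\varphi_*\) is pullback along the surjection \(\T^d\to\T^{d-1}\) dual to the injection~\(\varphi\), hence injective; and, being an ordinary \(\Star\)homomorphism, \(\varphi_*\) preserves the Clifford grading and therefore induces a grading-preserving \(\K_*(\mathbb{F})\)-module homomorphism on \(\K\)-theory. So the substance is to show that the composite of \(\K_*(\varphi_*)\) with \(\K_*(\Cst(\Z^d)_\mathbb{F}) \to \K_*(\Cst_\Roe(\Z^d)_\mathbb{F})\) vanishes.

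I would realise \(\Cst_\Roe(\Z^d)\) on \(\ell^2(\Z^d,\Hils)\) and recall that, up to the stabilisation isomorphism \(\K_*(\Cst(\Z^d)_\mathbb{F}) \cong \K_*(\Cst(\Z^d)_\mathbb{F}\otimes\Comp(\Hils))\), the comparison map in question is induced by \(\iota_d\colon \Cst(\Z^d)_\mathbb{F} \to \Cst_\Roe(\Z^d)_\mathbb{F}\), \(T\mapsto T\otimes e\), where \(T\) acts on \(\ell^2(\Z^d)\) by convolution and \(e\in\Comp(\Hils)\) is a fixed rank-one projection; this is just the inclusion of the translation-invariant subalgebra \(\Cst(\Z^d)_\mathbb{F}\otimes\Comp(\Hils)\). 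So it suffices to prove that \(\iota_d\circ\varphi_*\) induces the zero map on \(\K\)-theory.

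The key input is a support estimate: because \(\varphi\) is a group homomorphism, the matrix coefficient \([\iota_d(\varphi_* a)]_{u,v}\in\Bound(\Hils)\) vanishes whenever \(u-v\notin\varphi(\Z^{d-1})\), for every \(a\in\Cst(\Z^{d-1})\). Since \(\varphi\otimes\R\colon\R^{d-1}\to\R^d\) is injective with hyperplane image, I would fix a nonzero linear functional \(\lambda\colon\R^d\to\R\) vanishing on \(\varphi(\Z^{d-1})\); then \(u-v\in\varphi(\Z^{d-1})\) forces \(\lambda(u)=\lambda(v)\). Splitting \(\Z^d = Y_+\sqcup Y_-\) with \(Y_+=\setgiven{u\in\Z^d}{\lambda(u)\ge0}\), the projections \(P_\pm\) onto \(\ell^2(Y_\pm,\Hils)\) are \(0\)-controlled, hence multipliers of \(\Cst_\Roe(\Z^d)\) with \(P_++P_-=1\), and the support estimate gives \(P_+\iota_d(\varphi_* a)P_- = 0 = P_-\iota_d(\varphi_* a)P_+\). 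Therefore \(\iota_d(\varphi_* a) = P_+\iota_d(\varphi_* a)P_+ + P_-\iota_d(\varphi_* a)P_-\) for all \(a\), so \(\iota_d\circ\varphi_*\) factors as \(\Cst(\Z^{d-1})_\mathbb{F}\to\Cst_\Roe(Y_+)_\mathbb{F}\oplus\Cst_\Roe(Y_-)_\mathbb{F}\to\Cst_\Roe(\Z^d)_\mathbb{F}\), where the first map is \(a\mapsto(P_+\iota_d(\varphi_* a)P_+,P_-\iota_d(\varphi_* a)P_-)\), using the identification of \(P_\pm\Cst_\Roe(\Z^d)P_\pm\) with \(\Cst_\Roe(Y_\pm)\) from the proof of Theorem~\ref{the:subset_full_corner}, and the second is the block-diagonal corner inclusion, a \(\Star\)homomorphism since the two corners are orthogonal.

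Finally I would observe that \(Y_+\) is coarsely dense in the closed half-space \(\setgiven{x\in\R^d}{\lambda(x)\ge0}\) and \(Y_-\) in \(\setgiven{x}{\lambda(x)\le0}\), while \(\Z^{d-1}\times\N\) is coarsely dense in \(\R^{d-1}\times[0,\infty)\); hence Theorems~\ref{the:coarsely_dense} and~\ref{the:coarse_equivalence_Roe} give \(\Cst_\Roe(Y_\pm)_\mathbb{F}\cong\Cst_\Roe(\Z^{d-1}\times\N)_\mathbb{F}\), whose \(\K\)-theory vanishes by Proposition~\ref{vanish}. Thus \(\iota_d\circ\varphi_*\) factors through a \(\Cst\)-algebra with trivial \(\K\)-theory and the claim follows, uniformly in \(\mathbb{F}\in\{\R,\C\}\). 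The step I expect to need the most care is this factorisation through the two half-space corners — verifying that translation-invariance of \(\iota_d(\varphi_* a)\) in the \(\varphi(\Z^{d-1})\)-directions genuinely prevents it from connecting \(Y_+\) and \(Y_-\), together with the bookkeeping of the corner identifications. It is worth stressing that one cannot shortcut this via a commuting square with the Roe-algebra pushforward of Proposition~\ref{pro:low_dimensional_killed_Roe}: \(\iota_d\circ\varphi_*\) spreads over all cosets of \(\varphi(\Z^{d-1})\) in \(\Z^d\) rather than sitting on a single one, so that square does not commute and the half-space splitting is really needed.
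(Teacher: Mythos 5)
Your argument is correct, and it does \emph{not} follow the paper's route --- in fact it repairs a genuine weak point in it. The paper's own proof is two lines: it asserts that the canonical map \(\Cred(G)\injto\Cst_\Roe(G)\) is natural for injective group homomorphisms, obtains a commuting square with vertical arrows \(\varphi_*\), and then quotes Proposition~\ref{pro:low_dimensional_killed_Roe} for the right-hand column. Your closing caveat about that square is substantively right. If the right-hand vertical arrow is the one actually treated in Proposition~\ref{pro:low_dimensional_killed_Roe} --- the corner embedding coming from the coarse equivalence of \(\Z^{d-1}\) onto the subspace \(\varphi(\Z^{d-1})\) --- then the composite right-then-down is \(P\,\iota_d(\varphi_*a)\,P\) with \(P\) the projection onto \(\ell^2(\varphi(\Z^{d-1}),\Hils)\), whereas down-then-right is \(\iota_d(\varphi_*a)\) itself, which acts by the same convolution on \emph{every} coset of \(\varphi(\Z^{d-1})\); the square does not commute, and the discrepancy is an infinite, \(\Z\)-indexed multiplicity --- exactly the kind of difference that Proposition~\ref{vanish} shows one cannot ignore in \(\K\)-theory. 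One can instead make the square commute by taking the right-hand arrow to be the crossed-product functoriality map \(\ell^\infty(\Z^{d-1},\Comp(\Hils))\rtimes\Z^{d-1}\to\ell^\infty(\Z^d,\Comp(\Hils))\rtimes\Z^d\) (at least when \(\varphi(\Z^{d-1})\) is a direct summand), but that map is not covered by the coarse map \(\varphi\) and its image is not supported near any proper subspace, so the factorisation through \(\Cst_\Roe(\Z^{d-1}\times\N)\) in Proposition~\ref{pro:low_dimensional_killed_Roe} does not apply to it. Either way the paper's deduction is incomplete as stated.

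Your direct argument closes this gap correctly: the matrix coefficients of \(\iota_d(\varphi_*a)\) are supported on \(\setgiven{(u,v)}{u-v\in\varphi(\Z^{d-1})}\subseteq\setgiven{(u,v)}{\lambda(u)=\lambda(v)}\) for a functional \(\lambda\) annihilating \(\varphi(\Z^{d-1})\), so \(\iota_d\circ\varphi_*\) commutes with the \(0\)-controlled projections \(P_\pm\) and factors as a \Star{}homomorphism through \(\Cst_\Roe(Y_+)\oplus\Cst_\Roe(Y_-)\); each \(Y_\pm\) is coarsely dense in a closed affine half-space, hence coarsely equivalent to \(\Z^{d-1}\times\N\), whose Roe \(\Cst\)\nb-algebra has vanishing \(\K\)-theory by Proposition~\ref{vanish}. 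The corner identifications \(P_\pm\Cst_\Roe(\Z^d)P_\pm\cong\Cst_\Roe(Y_\pm)\) and the coarse-density claims all check out, and the argument is uniform in \(\mathbb{F}\). What your approach costs is the explicit half-space bookkeeping; what it buys is a proof that does not rest on any naturality of the comparison map \(\Cst(\Z^d)_\mathbb{F}\to\Cst_\Roe(\Z^d)_\mathbb{F}\), which is exactly the point at which the published argument is too quick.
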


\begin{proof}
  Proposition~\ref{pro:low_dimensional_killed_Roe} shows
  that~\(\varphi\)
  induces the zero map on the \(\K\)\nb-theory
  of the Roe \(\Cst\)\nb-algebra.
  The canonical map \(\Cred(G) \injto \Cst_\Roe(G)\)
  for a group~\(G\)
  is a natural transformation with respect to injective group
  homomorphisms.  So there is a commuting square
  \[
  \begin{tikzcd}
    \Cst(\Z^{d-1})_\mathbb{F}
    \arrow[r, hookrightarrow] \arrow[d, hookrightarrow, "\varphi_*"]&
    \Cst_\Roe(\Z^{d-1})_\mathbb{F} \arrow[d, hookrightarrow, "\varphi_*"]\\
    \Cst(\Z^d)_\mathbb{F}
    \arrow[r, hookrightarrow]&
    \Cst_\Roe(\Z^d)_\mathbb{F}
  \end{tikzcd}
  \]
  This implies the statement.
\end{proof}

The coordinate embeddings
\[
\iota_k\colon \Z^{d-1}\to \Z^d,\qquad
(x_1,\dotsc,x_{d-1}) \mapsto
(x_1,\dotsc,x_{k-1},0,x_k,\dotsc,x_{d-1}),
\]
are injective group homomorphisms and induce injective
\Star{}homomorphisms
\[
\iota_k\colon \Cst(\Z^{d-1}) \to \Cst(\Z^d).
\]
The Fourier transform maps \(\iota_k(\Cst(\Z^{d-1}))\)
onto the \(\Cst\)\nb-subalgebra
of \(\Cont(\T^d)\)
consisting of all functions that are constant equal to~\(1\)
in the \(k\)th
coordinate direction.  We have seen that
\(\K_*(\Cst(\Z^d)_\mathbb{F})\)
is a free \(\K_*(\mathbb{F})\)-module
of rank~\(2^d\)
(with generators in different degrees).  Now compute
\(\K_*(\Cst(\Z^d)_\mathbb{F})\)
as in Proposition~\ref{pro:Cst_Zd_in_KK}.  The inclusion of functions
that are constant in the \(k\)th
direction corresponds in \(\K\)\nb-theory
to the inclusion of those~\(2^{d-1}\)
of the~\(2^d\)
free \(\K_*(\R)\)-module
summands in \(\K_*(\Cst(\Z^d)_\mathbb{F})\)
where we take the summand~\(\R\)
in the \(k\)th
factor.  The summands in the image of~\(\K_*(\iota_k)\)
correspond to topological insulators that are built by stacking
copies of a \(d-1\)-dimensional
insulator in the \(k\)th
direction.  Such topological insulators are considered weak by
Fu--Kane--Mele~\cite{Fu-Kane-Mele:Insulators}.  So the map
\(\K_*(\Cst(\Z^d)_\mathbb{F}) \to \K_*(\Cst_\Roe(\Z^d)_\mathbb{F})\)
kills the \(\K\)\nb-theory classes of weak topological insulators.

If~\(k\)
varies, then all but one of the~\(2^d\)
summands \(\K_{*-j}(\mathbb{F})\)
in \(\K_*(\Cst(\Z^d)_\mathbb{F})\)
are in the image of \(\K_*(\iota_k)\)
for some \(k\in\{1,\dotsc,d\}\).
All these summands are mapped to~\(0\)
in \(\K_*(\Cst_\Roe(\Z^d)_\mathbb{F})\)
by Proposition~\ref{pro:low_dimensional_killed}.  The remaining
summand is the \(\K\)\nb-theory
of the ideal \(\Cont_0(\R^{0,d}) \idealin \Cont(\T^d)\).
Here we identify~\(\R^{0,d}\)
with an open subset of~\(\T^d\)
using the stereographic projection in each variable, compare the proof
of Proposition~\ref{pro:Cst_Zd_in_KK}.  Its ``real'' or complex
\(\K\)\nb-theory
is identified with \(\K_{*-d}(\R)\)
or \(\K_{*-d}(\C)\)
by Bott periodicity.  Kasparov proves Bott periodicity isomorphisms
\(\KR_*(\Cont_0(\R^{p,q})) \cong \K_{*+p-q}(\R)\)
using a canonical generator~\(\alpha_{p,q}\)
for the \(\K\)\nb-homology group
\[
\KK_{q-p}^\R(\Cont_0(\R^{p,q}),\C)
\cong \KK_0^\R(\Cont_0(\R^{p,q}, \Cliff_{p,q}),\C);
\]
here~\(\Cliff_{p,q}\)
is the Clifford algebra with \(p+q\)~anti-commuting,
odd, self-adjoint generators \(\gamma_1,\dotsc,\gamma_{p+q}\)
with \(\conj{\gamma_i} = \gamma_i\)
for \(1\le i\le p\)
and \(\conj{\gamma_i} = -\gamma_i\)
for \(p+1\le i \le p+q\).
And we write~\(\KK^\R\)
to highlight that the entries are treated as ``real''
\(\Cst\)\nb-algebras.

The Bott generators~\(\alpha_{p,0}\)
are generalised by Kasparov in \cite{Kasparov:Novikov}*{Definition
  and Lemma 4.2} to build a ``fundamental class''
\[
\alpha_X \in \KK^\R_0(\Cont_0(X,\Cliff X),\C)
\]
for any complete Riemannian manifold~\(X\)
(without boundary).  Here~\(\Cliff X\)
is the bundle of ``real'' \(\Cst\)\nb-algebras
over~\(X\)
whose fibre at \(x\in X\)
is the \(\Z/2\)\nb-graded
``real'' Clifford algebra of the cotangent space~\(T_x^* X\)
for the positive definite quadratic form induced by the Riemannian
metric, and \(\Cont_0(X,\Cliff X)\)
means the \(\Z/2\)\nb-graded
``real'' \(\Cst\)\nb-algebra
of \(\Cont_0\)\nb-sections
of this Clifford algebra bundle.  We now adapt Kasparov's fundamental
class to the case where~\(X\)
is a ``real'' complete Riemannian manifold, in such a way that the
fundamental class for~\(\R^{p,q}\)
is the generator~\(\alpha_{p,q}\)
of Bott periodicity from~\cite{Kasparov:Operator_K}.  The only changes
are in the real structure.  In particular, all the analysis needed to
produce cycles for Kasparov theory is already done
in~\cite{Kasparov:Novikov}.

Recall that the real involution on \(\Cont_0(X)\)
is defined by \(\conj{f}(x) \defeq \conj{f(\conj{x})}\)
for \(f\in\Cont_0(X)\).
There is a unique conjugate-linear involution on the space of
complex \(1\)\nb-forms on~\(X\)
such that \(\conj{\diff f} = \diff{\conj{f}}\)
for all smooth \(f\in\Cont_0(X)\).
There is a unique conjugate-linear involution on
\(\Cont_0(X,\Cliff X)\) with
\[
\conj{\omega_1 \dotsm \omega_m}
= \conj{\omega_1} \dotsm \conj{\omega_m}
\]
for all sections \(\omega_1,\dotsc,\omega_m\)
of \(T^* X \otimes\C\).
This involution is also compatible with the multiplication and the
\(\Z/2\)\nb-grading.  So it turns \(\Cont_0(X,\Cliff X)\)
into a \(\Z/2\)\nb-graded ``real'' \(\Cst\)\nb-algebra.

Let~\(L^2(\Lambda^*(X))\)
be the Hilbert space of square-integrable complex differential forms
on~\(X\).
This is the underlying Hilbert space of Kasparov's fundamental
class.  It is \(\Z/2\)\nb-graded
so that sections of~\(\Lambda^{2\ell}(X)\)
are even and sections of~\(\Lambda^{2\ell+1}(X)\)
are odd.  There is a unique conjugate-linear, isometric involution
on \(L^2(\Lambda^*(X))\) with
\[
\conj{\omega_1 \wedge \dotsb \wedge \omega_\ell} =
\conj{\omega_1} \wedge \dotsb \wedge \conj{\omega_\ell}
\]
for all complex \(1\)\nb-forms \(\omega_1,\dotsc,\omega_\ell\).
It commutes with the \(\Z/2\)\nb-grading,
so that \(L^2(\Lambda^*(X))\)
becomes a \(\Z/2\)\nb-graded
``real'' Hilbert space.

Given a complex \(1\)\nb-form~\(\omega\)
and a differential form~\(\eta\),
let \(\lambda_\omega(\eta) \defeq \omega \wedge \eta\).
These operators satisfy the relations
\begin{equation}
  \label{eq:lambda_and_star}
  \lambda_\omega \lambda_\eta + \lambda_\eta \lambda_\omega=0,\qquad
  \lambda_\omega^* \lambda_\eta + \lambda_\eta \lambda_\omega^*
  = \braket{\omega}{\eta}
\end{equation}
for all complex \(1\)\nb-forms \(\omega,\eta\),
where \(\braket{\omega}{\eta}\in\Cont_0(X)\)
denotes the pointwise inner product, which acts on
\(L^2(\Lambda^*(X))\)
by pointwise multiplication.
The representation of \(\Cont_0(X,\Cliff X)\)
on \(L^2(\Lambda^*(X))\)
is defined by letting a complex \(1\)\nb-form~\(\omega\),
viewed as an element of \(\Cont_0(X,\Cliff X)\),
act by \(\lambda_\omega + \lambda_{\omega^*}^*\).
Here~\(\omega^*\)
is the adjoint of~\(\omega\)
in the \(\Cst\)\nb-algebra \(\Cont_0(X,\Cliff X)\),
that is,
\(\omega^*(x) = \omega(x)^*\)
for all \(x\in X\),
where \(\omega(x)^* \in T^*_x X \otimes \C\)
is the pointwise complex conjugation in the second tensor
factor~\(\C\).
This defines a \Star{}representation of \(\Cont_0(X,\Cliff X)\)
by~\eqref{eq:lambda_and_star}.  It is grading-preserving and real as
well.

Let~\(d\) be the de Rham differential, defined on smooth
sections of~\(\Lambda^*(X)\)
with compact support, and let~\(d^*\)
be its adjoint.  The unbounded operator \(\mathcal{D} \defeq d+d^*\)
is essentially self-adjoint because~\(X\)
is complete.  So
\[
F \defeq (1+\mathcal{D}^2)^{-1/2} \mathcal{D}
\]
is a well defined self-adjoint operator.  The operator~\(d\) is odd
and real.  This is inherited by~\(\mathcal{D}\)
and~\(F\).
Kasparov shows that \((1-F^2)\cdot a\)
and \([F,a]\)
are compact for all \(a\in \Cont_0(X,\Cliff X)\).
Thus \(\alpha_X \defeq (L^2(\Lambda^*(X)),F)\)
is a cycle for the ``real'' Kasparov group
\(\KK^\R_0(\Cont_0(X,\Cliff X),\C)\).
We call this the \emph{fundamental class} of the ``real''
manifold~\(X\).
(Kasparov calls it ``Dirac element'' instead.)

In particular, the fundamental class of the ``real''
manifold~\(\R^{p,q}\)
becomes the Bott periodicity generator~\(\alpha_{p,q}\)
from~\cite{Kasparov:Operator_K} when we trivialise the Clifford
algebra bundle on~\(\R^{p,q}\)
in the obvious way.  So
\(\alpha_{\R^{p,q}}\in \KK^\R_0(\Cont_0(\R^{p,q}) \otimes
\Cliff_{p,q},\C)\)
is invertible.

We give~\(\T^d\)
the \(\T^d\)\nb-invariant
Riemannian metric to build its fundamental class.
The torus~\(\T^d\)
is parallelisable as a ``real'' manifold: its tangent bundle is
isomorphic to \(\T^d \times \R^{0,d}\).
This induces an isomorphism
\(\Cont(\T^d, \Cliff \T^d) \cong \Cont(\T^d) \otimes \Cliff_{0,d}\).
So the fundamental class~\(\alpha_{\T^d}\)
also gives an element in \(\KK^\R_d(\Cont(\T^d),\C)\).

Let~\(\Hils[L]\)
be a separable ``real'' Hilbert space and build \(\Cst_\Roe(\Z^d)\)
on the ``real'' Hilbert space \(\ell^2(\Z^d,\Hils[L])\).
There is an obvious embedding
\(\Cst(\Z^d) \otimes \Comp(\Hils[L]) \subseteq \Cst_\Roe(\Z^d)\).  Let
\[
\alpha_{\T^d}^{\Hils[L]} \in
\KK^\R_0(\Cont(\T^d) \otimes \Cliff_{0,d} \otimes \Comp(\Hils[L]), \C)
\cong
\KK^\R_0(\Cst(\Z^d) \otimes \Cliff_{0,d} \otimes \Comp(\Hils[L]), \C)
\]
be the exterior product of the fundamental class~\(\alpha_{\T^d}\)
and the Morita equivalence \(\Comp(\Hils[L]) \sim \C\).
This is the Kasparov cycle with underlying \(\Z/2\)\nb-graded
``real'' Hilbert space
\(L^2(\T^d, \Lambda^*(\C^d)) \otimes \Hils[L]\)
with the operator \(\tilde{F} \defeq F\otimes 1\)
with~\(F\)
as above for the manifold \(X=\T^d\).
So~\(\tilde{F}\)
is an odd, self-adjoint, real bounded operator with
\begin{equation}
  \label{eq:Kasparov_cycle}
  [\tilde{F},T],(1-\tilde{F}^2)\cdot T\in
  \Comp(\ell^2(\Z) \otimes \Lambda^*(\C^d) \otimes \Hils[L])
\end{equation}
for all
\(T\in\Cst(\Z^d) \otimes \Cliff_{0,d} \otimes \Comp(\Hils[L])\)
(the commutator is the graded one).

\begin{theorem}
  \label{the:fundamental_class}
  Equation~\eqref{eq:Kasparov_cycle} still holds for
  \(T\in\Cst_\Roe(\Z^d) \otimes \Cliff_{0,d}\).
  This gives
  \[
  \alpha'_{\T^d}
  \defeq [(L^2(\T^d, \Lambda^*(\C^d)) \otimes \Hils[L], \tilde{F})]
  \in \KK^\R_0(\Cst_\Roe(\Z^d) \otimes \Cliff_{0,d}, \C).
  \]
  The following diagram in~\(\KK^\R\) commutes:
  \[
  \begin{tikzcd}[column sep=2.3em]
    % make columns slightly narrower than normal to avoid an overfull box
    \Cont_0(\R^{0,d}, \Cliff_{0,d}) \arrow[r, "\textup{incl.}"]
    \arrow[dr, "\alpha_{\R^{0,d}}", "\cong"'] &
    \Cont(\T^d, \Cliff_{0,d}) \arrow[r, "\textup{Fourier}"]
    \arrow[d, "\alpha_{\T^d}"] &
    \Cst(\Z^d) \otimes \Cliff_{0,d} \arrow[r, "\textup{incl.}"] &
    \Cst_\Roe(\Z^d) \otimes \Cliff_{0,d} \arrow[dll, "\alpha'_{\T^d}"]
    \\ &\C
  \end{tikzcd}
  \]
\end{theorem}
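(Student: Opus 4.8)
The plan is to check that the Kasparov cycle underlying $\alpha_{\T^d}^{\Hils[L]}$ survives the enlargement of its representation from $\Cst(\Z^d)\otimes\Cliff_{0,d}\otimes\Comp(\Hils[L])$ to $\Cst_\Roe(\Z^d)\otimes\Cliff_{0,d}$, and then to obtain commutativity of the diagram mostly by inspection of the construction. First I would make $\tilde F$ explicit. Transporting $\ell^2(\Z^d)\congto L^2(\T^d)$ by Fourier series, $\delta_x\mapsto\Euler^{2\pi\ima x\cdot\theta}$, the ``real'' Hilbert space $L^2(\T^d,\Lambda^*(\C^d))\otimes\Hils[L]$ becomes $\ell^2(\Z^d,\Hils[L])\otimes\Lambda^*(\C^d)$ --- namely $\ell^2(\Z^d,\Hils[L])$, on which $\Cst_\Roe(\Z^d)$ already acts, tensored with the fixed finite-dimensional space $\Lambda^*(\C^d)$, on which $\Cliff_{0,d}$ acts; this is how $\Cst_\Roe(\Z^d)\otimes\Cliff_{0,d}$ is meant to act. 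Since $\mathcal D=d+d^*$ is translation-invariant of first order, it is diagonal in this $\Z^d$-decomposition and acts only on $\Lambda^*(\C^d)$: on the summand indexed by $x$ it is $c(x)\defeq 2\pi\ima\sum_{j=1}^d x_j c_j$, where $c_j\in\Bound(\Lambda^*(\C^d))$ is Clifford multiplication by the $j$th basis covector, $c_j=\lambda_j-\lambda_j^*$ in the notation of the fundamental-class construction. As $c_jc_k+c_kc_j=-2\delta_{jk}$, the operator $\mathcal D^2$ acts on the $x$th summand by the scalar $4\pi^2\norm{x}^2$, so
\[
\tilde F=\bigoplus_{x\in\Z^d}F(x),\qquad F(x)\defeq\bigl(1+4\pi^2\norm{x}^2\bigr)^{-1/2}c(x)\in\Bound(\Lambda^*(\C^d)),
\]
again acting only on $\Lambda^*(\C^d)$; in particular $\tilde F$ is odd, self-adjoint, real, of norm $<1$, and $1-\tilde F^2=\bigoplus_x(1+4\pi^2\norm{x}^2)^{-1}$. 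I would also use that $\mathcal D$, hence $F$, graded-commutes with the $\Cliff_{0,d}$-action on $\Lambda^*(\C^d)$ --- this is built into Kasparov's $\alpha_{\T^d}\in\KK^\R_0(\Cont(\T^d)\otimes\Cliff_{0,d},\C)$ and amounts to the $c_j=\lambda_j-\lambda_j^*$ anticommuting with the representing operators $\lambda_j+\lambda_j^*$ of $\Cliff_{0,d}$ --- so that $F(x)c=(-1)^{\abs c}cF(x)$ for homogeneous $c\in\Cliff_{0,d}$.

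The operators $T$ for which $[\tilde F,T]$ and $(1-\tilde F^2)T$ are compact form a norm-closed subspace, so it is enough to verify~\eqref{eq:Kasparov_cycle} on the dense set of $T=T_0\otimes c$ with $T_0\in\Cst_\Roe(\Z^d)$ controlled and locally compact and $c\in\Cliff_{0,d}$ homogeneous. For such $T$ I would read off the matrix coefficients with respect to the $\Z^d$-decomposition. Since $\tilde F$ is block-diagonal and acts only on $\Lambda^*(\C^d)$, while $(T_0)_{x,y}\in\Comp(\Hils[L])$ acts only on $\Hils[L]$ and vanishes for $\norm{x-y}>R$ (the propagation of $T_0$), a short computation --- using the graded commutation of $F(x)$ with $c$ in the first line --- gives, inside $\Comp(\Hils[L])\otimes\Bound(\Lambda^*(\C^d))=\Comp(\Hils[L]\otimes\Lambda^*(\C^d))$,
\[
\bigl([\tilde F,T]\bigr)_{x,y}=(-1)^{\abs c}\,(T_0)_{x,y}\otimes c\,\bigl(F(x)-F(y)\bigr),\qquad
\bigl((1-\tilde F^2)T\bigr)_{x,y}=\frac{1}{1+4\pi^2\norm{x}^2}\,(T_0)_{x,y}\otimes c .
\]

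The analytic input is a decay estimate: the map $\R^d\ni\xi\mapsto(1+4\pi^2\norm{\xi}^2)^{-1/2}\sum_j\xi_j c_j$ is smooth with gradient of order $O(1/\norm{\xi})$, whence $\norm{F(x)-F(y)}\to0$ as $\norm{x}\to\infty$ uniformly over the finitely many $y$ with $\norm{x-y}\le R$; and $(1+4\pi^2\norm{x}^2)^{-1}\to0$. Thus both operators above have matrix coefficients that are compact, $R$\nb-controlled, and norm-convergent to $0$ as $\norm{x}\to\infty$, uniformly in $y$. I would finish with the elementary lemma that an operator on $\ell^2(\Z^d,\Hils)$ with such coefficients is the norm limit of its truncations to $\norm{x}\le N$ --- each of which is compact, being a finite sum of compact block rows --- hence compact; the tail is estimated row by row with the Cauchy--Schwarz inequality. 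This yields~\eqref{eq:Kasparov_cycle} over $\Cst_\Roe(\Z^d)\otimes\Cliff_{0,d}$, and together with the properties of $\tilde F$ above it produces the ``real'' Kasparov cycle defining $\alpha'_{\T^d}$. This step is the one I expect to be the real obstacle: neither $1-\tilde F^2$ nor $[\tilde F,T]$ is compact on its own --- the infinite multiplicity $\Hils[L]$ rules that out --- so one must couple the decay of the ``symbol'' factor to the compactness of the coefficients of $T$ in a single estimate.

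Finally I would check the diagram. The two right-hand regions commute essentially by construction: if the inclusion $\Cst(\Z^d)\otimes\Cliff_{0,d}\to\Cst_\Roe(\Z^d)\otimes\Cliff_{0,d}$ is read as the corner embedding $a\otimes c\mapsto(a\otimes e)\otimes c$ for a fixed rank-one projection $e\in\Comp(\Hils[L])$, followed by $\Cst(\Z^d)\otimes\Comp(\Hils[L])\subseteq\Cst_\Roe(\Z^d)$, then restricting the representation underlying $\alpha'_{\T^d}$ along it leaves $\tilde F$ unchanged, splits off the degenerate cycle carried by $\ell^2(\Z^d)\otimes(1-e)\Hils[L]\otimes\Lambda^*(\C^d)$, and leaves on the remaining $\ell^2(\Z^d)\otimes\Lambda^*(\C^d)\cong L^2(\T^d,\Lambda^*(\C^d))$ exactly the cycle $(L^2(\T^d,\Lambda^*(\C^d)),F)$ transported along the Fourier isomorphism $\Cst(\Z^d)\cong\Cont(\T^d)$, which is $\alpha_{\T^d}$; equivalently, $\alpha'_{\T^d}$ restricts on $\Cst(\Z^d)\otimes\Cliff_{0,d}\otimes\Comp(\Hils[L])$ to $\alpha_{\T^d}^{\Hils[L]}=\alpha_{\T^d}\otimes[\,\Comp(\Hils[L])\sim\C\,]$. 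For the left triangle I would invoke the locality of Kasparov's fundamental class: the inclusion identifies $\Cont_0(\R^{0,d},\Cliff_{0,d})$ with the ideal of $\Cont(\T^d,\Cliff_{0,d})$ of sections vanishing wherever some coordinate equals $1$, that is, with the sections over the open ``real'' submanifold $(\T\setminus\{1\})^d\cong\R^{0,d}$, and the restriction of the Dirac class $\alpha_X$ to an open subset $U\subseteq X$ is $\alpha_U$ --- the two cycles differ by one that is degenerate after multiplication by $\Cont_0(U)$, and since the $\KK^\R$\nb-class of $d+d^*$ is independent of the choice of complete metric, the non-isometry of the stereographic projection is immaterial; see~\cite{Kasparov:Novikov}. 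As the fundamental class of $\R^{0,d}$ is by construction the Bott generator $\alpha_{\R^{0,d}}$ of~\cite{Kasparov:Operator_K}, the triangle, and with it the whole diagram, commutes.
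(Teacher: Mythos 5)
Your proposal is correct and follows the same overall strategy as the paper: diagonalise \(\mathcal{D}=d+d^*\) over the \(\Z^d\)-decomposition of \(L^2(\Lambda^*(\T^d))\otimes\Hils[L]\), verify the Fredholm-module conditions on the dense set of controlled, locally compact \(T\), obtain the right-hand part of the diagram by construction (restriction of the representation plus the Morita equivalence \(\Comp(\Hils[L])\sim\C\)), and reduce the left triangle to the compatibility of the fundamental class with restriction to open ``real'' submanifolds, which is exactly the paper's Proposition~\ref{pro:fundamental_class_open_restrict}. The one place where you genuinely diverge is the compactness of the graded commutator \([\tilde{F},T]\): the paper shows that the commutator with the \emph{unbounded} operator \(A+A^*\) is bounded on the dense subalgebra of controlled operators with summable rows and columns --- so that \(A+A^*\) defines a spectral triple over \(\Cst_\Roe(\Z^d)\otimes\Cliff_{0,d}\) --- and then invokes the general passage from spectral triples to Fredholm modules; you instead work directly with the bounded transform, using the gradient estimate \(\norm{F(x)-F(y)}=O(1/\norm{x})\) for \(\norm{x-y}\le R\) together with a truncation and Schur-test argument for band matrices with compact, norm-decaying blocks. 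Your route is more self-contained, since it avoids the unstated general lemma that bounded commutators with \(\mathcal{D}\) yield compact commutators with \(\mathcal{D}(1+\mathcal{D}^2)^{-1/2}\), at the cost of the explicit symbol estimate; the paper's route is shorter but leans on standard machinery. Both are sound, your block-matrix formulas for \([\tilde{F},T]\) and \((1-\tilde{F}^2)T\) agree with the paper's up to the Fourier normalisation you chose, and your identification of the genuine analytic point --- that neither \(1-\tilde{F}^2\) nor \([\tilde{F},T]\) is compact by itself because of the infinite multiplicity \(\Hils[L]\), so the decay of the symbol must be coupled to the local compactness of \(T\) --- matches the paper's emphasis.
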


\begin{corollary}
  \label{cor:K_periodic_split-injective}
  The inclusion \(\Cont_0(\R^{0,d}) \to \Cst_\Roe(\Z^d)\)
  induces a split injective map
  \(\KR_*(\Cont_0(\R^{0,d})) \to \KR_*(\Cst_\Roe(\Z^d))\).
  The map
  \(\K_{*+d}(\Cst_\Roe(\Z^d)_\mathbb{F}) \to \K_*(\mathbb{F})\)
  induced by~\(\alpha'_{\T^d}\)
  is an isomorphism.  Analogous statements hold in complex
  \(\K\)\nb-theory.
\end{corollary}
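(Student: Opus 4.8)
The plan is to read the corollary off the commuting diagram in Theorem~\ref{the:fundamental_class}.  By commutativity of that diagram, the composition in~\(\KK^\R\)
\[
\Cont_0(\R^{0,d})\otimes\Cliff_{0,d}
\xrightarrow{\textup{incl.}} \Cst_\Roe(\Z^d)\otimes\Cliff_{0,d}
\xrightarrow{\alpha'_{\T^d}} \C
\]
equals the Bott periodicity generator \(\alpha_{\R^{0,d}}\), which is invertible by Kasparov's theorem.  Hence the inclusion \(j\colon \Cont_0(\R^{0,d})\otimes\Cliff_{0,d} \injto \Cst_\Roe(\Z^d)\otimes\Cliff_{0,d}\) is a split monomorphism in~\(\KK^\R\), with left inverse \(\alpha_{\R^{0,d}}^{-1}\circ\alpha'_{\T^d}\), and \(\alpha'_{\T^d}\) is a split epimorphism in~\(\KK^\R\).

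For the first assertion, apply the functor \(\KR_*\): it turns \(j\) into a split injective map of \(\KR_*\)\nb-groups.  To remove the Clifford factor, I would use the natural isomorphism \(\KR_*(A\otimes\Cliff_{0,d}) \cong \KR_{*+d}(A)\) coming from Clifford periodicity.  Since \(j\) is the tensor product of the inclusion \(\Cont_0(\R^{0,d})\injto\Cst_\Roe(\Z^d)\) with \(\Id_{\Cliff_{0,d}}\), naturality transports the split injectivity of \(\KR_*(j)\) to split injectivity of \(\KR_*(\Cont_0(\R^{0,d})) \to \KR_*(\Cst_\Roe(\Z^d))\).  The complex case is identical, with \(\Cliff_{0,d}\) replaced by the complex Clifford algebra and the degree shift read modulo~\(2\); there the groups in sight are only \(0\) and~\(\Z\).

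For the second assertion, the composition above shows that the map \(\K_{*+d}(\Cst_\Roe(\Z^d)_\mathbb{F}) \to \K_*(\mathbb{F})\) induced by \(\alpha'_{\T^d}\), precomposed with the map induced by \(j\), is the isomorphism induced by \(\alpha_{\R^{0,d}}\); hence it is surjective.  By Corollary~\ref{cor:K_Roe_Zd} its source is, as a \(\Z\)\nb-graded abelian group, abstractly isomorphic to \(\K_*(\mathbb{F})\).  Every graded component of \(\K_*(\R)\) is one of \(0\), \(\Z\), \(\Z/2\), and every graded component of \(\K_*(\C)\) is \(0\) or \(\Z\); all of these are Hopfian, so the degree\nb-preserving surjective endomorphism above restricts to a surjective endomorphism of each graded component and is therefore an isomorphism.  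Thus the map induced by \(\alpha'_{\T^d}\) is an isomorphism.

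I do not expect a genuine obstacle, since all of the analysis sits in Theorem~\ref{the:fundamental_class} and what remains is bookkeeping.  The two points that deserve care are the passage from the \(\Cliff_{0,d}\)\nb-stabilised statement to the untwisted one, which rests on the naturality of Clifford periodicity, and the concluding Hopfian argument, which uses the explicit \(\K\)\nb-groups supplied by Corollary~\ref{cor:K_Roe_Zd}; both are routine once the diagram of Theorem~\ref{the:fundamental_class} is in hand.
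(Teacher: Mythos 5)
Your argument is correct and follows the paper's own proof in all essentials: both read the corollary off the commuting diagram of Theorem~\ref{the:fundamental_class}, the invertibility of \(\alpha_{\R^{0,d}}\), and the computation of \(\K_*(\Cst_\Roe(\Z^d)_\mathbb{F})\) from Corollary~\ref{cor:K_Roe_Zd}. The only (harmless) divergence is the final upgrade from surjectivity to bijectivity: the paper uses that the map induced by \(\alpha'_{\T^d}\) is a \(\K_*(\R)\)-module homomorphism between free rank-one modules and hence is determined by its value on a generator, whereas you use that each graded component of \(\K_*(\mathbb{F})\) is Hopfian; both steps are valid.
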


\begin{proof}[Proof of the corollary]
  Both \(\KR_*(\Cont_0(\R^{0,d}))\)
  and \(\KR_*(\Cst_\Roe(\Z^d))\)
  are isomorphic to free \(\K_*(\R)\)-modules
  with a generator in degree~\(-d\).
  The Bott periodicity generator~\(\alpha_{\R^{0,d}}\)
  maps the generator of \(\KR_{-d}(\Cont_0(\R^{0,d}))\)
  onto a generator of \(\K_0(\R)\).
  The commuting diagram in Theorem~\ref{the:fundamental_class} shows
  that its image in \(\KR_{-d}(\Cst_\Roe(\Z^d))\)
  must be a generator as well.  So~\(\alpha'_{\T^d}\)
  acts by multiplication with~\(\pm1\)
  on a generator.  Since~\(\alpha'_{\T^d}\)
  is a \(\K\)\nb-homology
  class, the map on \(\K\)\nb-theory
  that it induces is a \(\K_*(\R)\)-module
  homomorphism.  Hence it is multiplication by~\(\pm1\)
  everywhere once this happens on a generator.  So the map
  on~\(\KR_*\)
  induced by~\(\alpha'_{\T^d}\)
  is invertible.  The same proof works for complex \(\K\)\nb-theory.
\end{proof}

We have already shown that all but one of the free
\(\K_*(\mathbb{F})\)-module
summands in \(\K_*(\Cst(\Z^d)_\mathbb{F})\)
are killed by the map to \(\K_*(\Cst_\Roe(\Z^d)_\mathbb{F})\).
When we combine this with the above corollary, it follows that the
kernel of the map from \(\K_*(\Cst(\Z^d)_\mathbb{F})\)
to \(\K_*(\Cst_\Roe(\Z^d)_\mathbb{F})\)
is exactly the sum of the images of \(\K_*(\iota_k)\)
for \(k=1,\dotsc,d\),
that is, the subgroup generated by the \(\K\)\nb-theory
classes of weak topological insulators.

We still have to prove Theorem~\ref{the:fundamental_class}.  The
left triangle in the diagram in Theorem~\ref{the:fundamental_class}
commutes because of the following general fact:

\begin{proposition}
  \label{pro:fundamental_class_open_restrict}
  Let \(U\subseteq X\)
  be an open subset of a ``real'' manifold~\(X\)
  that is invariant under the real involution.
  Give \(U\)
  and~\(X\)
  some complete Riemannian metrics.
  The Kasparov product of the ideal inclusion
  \(j\colon \Cont_0(U,\Cliff U) \injto \Cont_0(X,\Cliff X)\)
  and the fundamental class
  \(\alpha_X\in \KK_0^\R(\Cont_0(X,\Cliff X), \C)\)
  is the fundamental class
  \(\alpha_U\in \KK_0^\R(\Cont_0(U,\Cliff U), \C)\).
  In particular, the fundamental class does not depend on the choice
  of the Riemannian metric.
\end{proposition}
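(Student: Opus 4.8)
The plan is to use the elementary description of the Kasparov product with the class of a \Star{}homomorphism: writing \(j\colon \Cont_0(U,\Cliff U)\to\Cont_0(X,\Cliff X)\) for the ideal inclusion (a section of \(\Cliff U\) being extended by zero to a section of \(\Cliff X\)), the product \([j]\otimes_{\Cont_0(X,\Cliff X)}\alpha_X\) is represented by the Kasparov cycle \((L^2(\Lambda^*(X)),F_X)\) with the representation of \(\Cont_0(X,\Cliff X)\) restricted along~\(j\). Here \(F_X=(1+\mathcal D_X^2)^{-1/2}\mathcal D_X\) is the bounded transform of the de Rham operator \(\mathcal D_X=d+d^*\) of \((X,g_X)\). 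So the task is to identify this restricted cycle with \(\alpha_U=(L^2(\Lambda^*(U)),F_U)\) in \(\KK^\R_0(\Cont_0(U,\Cliff U),\C)\). For~\(j\) to be defined one tacitly identifies the Clifford bundle of any metric on~\(U\) with the restriction of \(\Cliff X\); such an identification exists because the space of Riemannian metrics on a manifold is convex, and it is canonical up to homotopy, hence harmless in \(\KK^\R\).

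First I would compress the cycle onto~\(U\). Let \(P\in\Bound(L^2(\Lambda^*(X)))\) be the orthogonal projection onto the closed subspace \(\Hils_U\) of forms vanishing almost everywhere off~\(U\); it is multiplication by the characteristic function of~\(U\). Any \(a\in\Cont_0(U,\Cliff U)\), viewed inside \(\Cont_0(X,\Cliff X)\), satisfies \(PaP=a\) and \(a(1-P)=(1-P)a=0\), while \([F_X,a]\) is compact because we started from a Kasparov cycle. Using these relations, the affine operator homotopy from \(F_X\) to \(PF_XP+(1-P)F_X(1-P)\) consists of Kasparov cycles over \(\Cont_0(U,\Cliff U)\), so the restricted cycle equals \((\Hils_U,PF_XP)\oplus\bigl((1-P)L^2(\Lambda^*(X)),(1-P)F_X(1-P)\bigr)\); the second summand is degenerate, since \(\Cont_0(U,\Cliff U)\) acts on it by zero, and may be discarded. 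Identifying \(\Hils_U\) with the Hilbert space of \(L^2\)\nb-forms on~\(U\) for the restricted metric \(g_X|_U\), we are left with \((L^2(\Lambda^*(U)),PF_XP)\).

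It remains to identify this with \(\alpha_U\). The subtlety is that \(g_X|_U\) need not be complete, so \(PF_XP\) is only a self-adjoint extension of the de Rham operator of the (possibly incomplete) manifold \((U,g_X|_U)\), not literally \(F_U\). I would handle this in two steps. First, pick any complete metric \(g_U\) on~\(U\) and connect \(g_X|_U\) to \(g_U\) by the straight line \(g_t\defeq(1-t)g_X|_U+t g_U\); this stays complete for all~\(t\), because operator convexity of matrix inversion gives \(g_t^{-1}\le(1-t)(g_X|_U)^{-1}+t\,g_U^{-1}\), so a proper function with \(g_U\)\nb-gradient bounded by~\(1\) still has \(g_t\)\nb-gradient bounded by~\(1\), whence \(g_t\) is complete. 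Second, one shows that along this path — allowing the self-adjoint extension to vary when the metric is incomplete — the bounded transforms of the de Rham operators of the \(g_t\) give an operator homotopy of Kasparov cycles over \(\Cont_0(U,\Cliff U)\); this rests on the locality of Kasparov's construction from~\cite{Kasparov:Novikov}, namely that over \(\Cont_0(V,\Cliff V)\) for a relatively compact open \(V\subseteq U\) any two Dirac\nb-type operators with the prescribed symbol represent the same class, combined with a partition of unity. Chaining these homotopies gives \([j]\otimes\alpha_X=\alpha_U\). Finally, nothing above used \(U\neq X\), so applying the result with \(U=X\), \(j=\Id\) and two complete metrics shows that \(\alpha_X\) does not depend on the choice of complete Riemannian metric, which is the last assertion.

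The main obstacle is precisely this last identification: \(PF_XP\) need not arise from a complete metric, so one genuinely has to exploit the local nature of the de Rham \(\K\)\nb-homology class — that its class over \(\Cont_0\) of a relatively compact open set depends only on the operator near that set, and not on the boundary conditions — rather than a naive global homotopy of bounded transforms. The first compression step, by contrast, is routine once the relations \(PaP=a\) and \(a(1-P)=0\) are in hand.
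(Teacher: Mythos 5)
Your proposal is correct and rests on the same key input as the paper's proof --- that the class of the de Rham cycle over \(\Cont_0(U,\Cliff U)\) is determined by the principal symbol of the order-zero operator, a purely local datum --- the paper merely packaging the reduction differently, by identifying the two Hilbert spaces of \(L^2\)\nb-forms on~\(U\) directly through a pointwise positive bundle isomorphism and comparing symbols, rather than via your explicit compression followed by a homotopy of metrics. One minor slip: the convexity inequality you display does not give completeness of \(g_t\) at \(t=0\) (and \(g_X|_U\) genuinely need not be complete); completeness for \(t>0\) already follows from \(g_t\ge t\,g_U\), and since you allow the self-adjoint extension to vary at the incomplete endpoint, nothing in the argument breaks.
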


\begin{proof}
  Both Kasparov cycles \(j^*(\alpha_X)\)
  and~\(\alpha_U\)
  live on Hilbert spaces of \(L^2\)\nb-differential
  forms on~\(U\).
  Here square-integrability is with respect to different metrics.
  The resulting Hilbert spaces are isomorphic by pointwise
  application of a suitable strictly positive, smooth function
  \(X\to \Bound(\Lambda^* X)\).
  This isomorphism also respects the \(\Z/2\)\nb-grading
  and the real involution.  The operator~\(\mathcal{D}\)
  used to construct~\(\alpha_X\)
  is a first-order differential operator.  Hence \(F \defeq
  (1+\mathcal{D}^2)^{-1/2} \mathcal{D}\)
  is an order-zero pseudodifferential operator, and it has the same
  symbol as~\(\mathcal{D}\).
  This is the function
  \[
  S^* X \to \Bound(\Lambda^*(X)),\qquad
  (x,\xi) \mapsto \lambda_\xi + \lambda_\xi^*.
  \]
  The symbol of the operator~\(F\)
  of~\(\alpha_U\)
  is given by the same formula, except that the adjoint is for
  another Riemannian metric.  So the unitary
  between the spaces of \(L^2\)\nb-forms
  will also identify these symbols.  The class of the Kasparov
  cycle defined by an order-zero pseudodifferential operator~\(F\)
  depends only on the symbol of~\(F\).
  So \(j^*(\alpha_X)\)
  and~\(\alpha_U\)
  have the same class in \(\KK_0^\R(\Cont_0(U,\Cliff U), \C)\).
  The last statement is the case \(U=X\)
  of the proposition.
\end{proof}

Now we build the Kasparov cycle
\(\alpha'_{\T^d}\in \KK_0^\R(\Cst_\Roe(\Z^d)\otimes \Cliff_{0,d},\C)\).
Let \(z_j\colon \T^d\to\C\)
be the \(j\)th coordinate function and let
\[
z^k \defeq z_1^{k_1} \dotsm z_d^{k_d}
\qquad\text{for }
k=(k_1,\dotsc,k_d)\in\Z^d.
\]
The real involution on~\(\T^d\)
is defined so that these are real elements of \(\Cont(\T^d)\).
Hence the \(1\)\nb-forms \(z_j^{-1} \diff z_j\)
for \(j=1,\dotsc,d\)
are real.  They form a basis of the space of \(1\)\nb-forms
as a \(\Cont(\T^d)\)-module.
The differential forms
\[
z^k\cdot (z_{i_1} \dotsm z_{i_\ell})^{-1}
\diff z_{i_1} \wedge \dotsb \wedge \diff z_{i_\ell}
\]
for \(k\in\Z^d\)
and \(1\le i_1 < i_2 < \dotsb < i_\ell\le d\)
form a real, orthonormal basis of the Hilbert space
\(L^2(\Lambda^*(\T^d))\).
Hence there is a unitary operator
\begin{multline*}
  U\colon
  \ell^2(\Z^d) \otimes \Lambda^*(\C^d) \congto
  L^2(\Lambda^*(\T^d)),\\
  \delta_k \otimes e_{i_1} \wedge \dotsb \wedge e_{i_\ell} \mapsto
  z^k\cdot (z_{i_1} \dotsm z_{i_\ell})^{-1}
  \diff z_{i_1} \wedge \dotsb \wedge \diff z_{i_\ell}.
\end{multline*}
This unitary is grading-preserving and real for the
\(\Z/2\)\nb-grading and real structure on
\(\ell^2(\Z^d) \otimes \Lambda^\ell(\C^d)\)
where the standard basis vector
\(\delta_k \otimes e_{i_1} \wedge \dotsb \wedge e_{i_\ell}\)
is real and is even or odd depending on the parity of~\(\ell\).

The above trivialisation of the cotangent bundle of~\(\T^d\)
gives the isomorphism
\[
\Cont(\T^d) \otimes \Cliff_{0,d} \congto \Cont(\T^d,\Cliff \T^d),
\qquad
\gamma_j\mapsto \ima z_j^{-1}\,\diff z_j;
\]
recall that \(\gamma_1,\dotsc,\gamma_d\)
are the odd, self-adjoint, anti-commuting unitaries that
generate~\(\Cliff_{0,d}\).
The action of \(\Cont(\T^d,\Cliff \T^d)\)
on \(L^2(\Lambda^* \T^d)\)
now translates to an action of
\(\Cont(\T^d) \otimes \Cliff_{0,d}\)
on \(\ell^2(\Z^d) \otimes \Lambda^*(\C^d)
\cong \ell^2(\Z^d, \Lambda^*(\C^d))\).
Namely, the scalar-valued function \(z^k\in\Cont(\T^d)\)
acts by the shift \((\tau_k f)(n) \defeq f(n-k)\)
for all \(k,n\in\Z^d\),
\(f\in \ell^2(\Z^d,\Lambda^*(\C^d))\).
And the Clifford generator \(\gamma_j\in\Cliff_{0,d}\)
acts by
\[
(\gamma_j f)(n)
= \ima \lambda_{e_j}\bigl(f(n)\bigr)
- \ima \lambda_{e_j}^*\bigl(f(n)\bigr).
\]
The unitary~\(U^*\)
maps the domain of~\(d\)
to the space of rapidly decreasing functions
\(\Z^d\to\Lambda^*(\C^d)\),
where~\(U^* d U\)
acts by pointwise application of the function
\[
A\colon \Z^d\to \Bound(\Lambda^*(\C^d)),\qquad
n \mapsto \lambda_n
= \sum_{j=1}^d n_j\cdot \lambda_{e_j},
\]
because
\[
d\left(z^k\cdot \frac{\diff z_{i_1}}{z_{i_1}} \wedge \dotsc
\wedge \frac{\diff z_{i_\ell}}{z_{i_\ell}}\right)
= \sum_{j=1}^d k_j z^k \cdot \frac{\diff z_j}{z_j}
\wedge \frac{\diff z_{i_1}}{z_{i_1}} \wedge \dotsc
\wedge \frac{\diff z_{i_\ell}}{z_{i_\ell}}.
\]
So~\(U^* \mathcal{D} U\)
acts by pointwise application of the matrix-valued function \(A+A^*\)
on the space of rapidly decreasing functions
\(\Z^d\to\Lambda^*(\C^d)\).
We compute
\[
(A+A^*)^2(n)
= \lambda_n \lambda_n^* + \lambda_n^* \lambda_n
= \norm{n}^2.
\]
So~\(U^* F U\)
acts by pointwise application of the matrix-valued function
\[
\hat\alpha_{\Z^d}\colon
\Z^d \to \Bound(\Lambda^*(\C^d)),\qquad
n \mapsto (1+\norm{n}^2)^{-1/2} (\lambda_n + \lambda_n^*).
\]

Next we take the exterior product with the Morita equivalence
between \(\Comp(\Hils[L])\)
and~\(\C\).
This simply gives the Hilbert space
\(\ell^2(\Z^d,\Lambda^*(\C^d)) \otimes \Hils[L]\)
with the induced \(\Z/2\)\nb-grading
and ``real'' structure, the exterior tensor product representation
of \(\Cont(\T^d)\otimes\Cliff_{0,d}\otimes \Comp(\Hils[L])\),
and with the operator \(F\otimes 1_{\Hils[L]}\).
This is a Kasparov cycle for
\(\KK^\R_0(\Cst(\Z^d) \otimes \Cliff_{0,d} \otimes \Comp(\Hils[L]),
\C)\).
In particular, the operator \(\tilde{F} \defeq U^* F U\otimes 1_{\Hils[L]}\)
is real, odd, and self-adjoint.
Let \(T\in \Cst_\Roe(\Z^d) \subseteq \Bound(\ell^2(\Z^d,\Hils[L]))\)
and \(S\in\Cliff_{0,d}\).
We must show that \((1-\tilde{F}^2) \cdot (T\otimes S)\)
and \([\tilde{F}^2, T\otimes S]\)
are compact operators.
The operator \(1 - \tilde{F}^2\)
acts by pointwise multiplication with \((1+\norm{n}^2)^{-1}\).
Since~\(T\)
is locally compact and~\(\Lambda^* \C^d\)
has finite dimension, the operator
\((1-\tilde{F}^2) \cdot (T\otimes S)\)
is compact.  Describe~\(T\)
as a block matrix \((T_{x,y})_{x,y\in\Z^d}\)
with \(T_{x,y}\in\Bound(\Hils[L])\).
The operator~\(\tilde{F}\)
anti-commutes with \(1\otimes S\).
So the graded commutator
\([A+A^*,T\otimes S] = [A+A^*,T\otimes 1]\cdot (1\otimes S)\)
corresponds to the block matrix with \((x,y)\)-entry
\[
T_{x,y} \otimes (\lambda_{x-y} + \lambda_{x-y}^*) S
\in \Bound(\Hils[L]\otimes \Lambda^* \C^d).
\]
Assume that~\(T\)
is \(R\)\nb-controlled,
that is, \(T_{x,y}=0\)
if \(\norm{x-y}>R\),
and that \(\sup_x \sum_y \norm{T_{x,y}}\)
and \(\sup_y \sum_x \norm{T_{x,y}}\)
are bounded; block matrices with these two properties give bounded
operators, and these are dense in the Roe \(\Cst\)\nb-algebra.
For such~\(T\),
the commutator \([A+A^*,T\otimes S]\)
satisfies analogous bounds because
\(\norm{\lambda_{x-y}+\lambda_{x-y}^*} \le 2 \norm{x-y} \le 2 R\)
whenever \(T_{x,y}\neq0\).
So the set of \(T\in \Cst_\Roe(X)\)
for which \([A+A^*,T \otimes S]\)
is bounded is dense in \(\Cst_\Roe(X)\).
Thus~\(A+A^*\)
defines a spectral triple over
\(\Cst_\Roe(X) \otimes \Cliff_{0,d}\).
As a consequence, \([\tilde{F}, T\otimes 1]\)
is compact for all \(T\in\Cst_\Roe(X) \otimes \Cliff_{0,d}\).
This finishes the proof of Theorem~\ref{the:fundamental_class}.

\subsection{Another topological artefact of the tight binding
  approximation}
\label{sec:artefact_tight_binding}

We already argued in the introduction that the tight binding
approximation may produce topological artefacts.  Namely, it suggests
to use the uniform Roe \(\Cst\)\nb-algebra
instead of the Roe \(\Cst\)\nb-algebra,
whose \(\K\)\nb-theory
is much larger.  We briefly mention another artefact caused by the
tight binding approximation.

We work in Bloch--Floquet theory for greater clarity.  The
Fermi projection of a Hamiltonian is described by a vector bundle
\(V\prto \T^d\)
over the \(d\)\nb-torus,
maybe with extra symmetries.  Here~\(d\)
is the dimension of the material, which is \(2\)
or~\(3\)
in the most relevant cases.  In \(\K\)\nb-theory,
two vector bundles \(\xi_1,\xi_2\)
are identified if they are \emph{stably isomorphic}, that is, there is
a trivial vector bundle~\(\vartheta\)
with \(\xi_1\oplus\vartheta \cong \xi_2\oplus\vartheta\).
Several authors put in extra work to refine the classification of
vector bundles (with symmetries) provided by \(\K\)\nb-theory
to a classification up to isomorphism, see
\cites{De_Nittis-Gomi:Real_Bloch, De_Nittis-Gomi:Quaternionic_Bloch,
  Kennedy:Thesis, Kennedy-Zirnbauer:Bott_gapped}.  Here we argue that
such a refinement of the classification is of little physical
significance.  The tight binding approximation leaves out energy bands
that are sufficiently far below the Fermi level.  Their inclusion only
adds a trivial vector bundle -- but this is the difference
between stable isomorphism and isomorphism.

\begin{theorem}[\cite{Husemoller:Fibre_bundles}*{Chapter 8, Theorem 1.5}]
  \label{the:K_stable_range}
  Let~\(X\)
  be an \(n\)\nb-dimensional
  CW-complex and let \(\xi_1\)
  and~\(\xi_2\)
  be two \(k\)\nb-dimensional
  vector bundles.  Let \(c=1,2,4\)
  depending on whether the vector bundles are real, complex or
  quaternionic.  Assume \(k \ge \lceil (n+2)/c \rceil - 1\).
  If \(\xi_1\)
  and~\(\xi_2\) are stably isomorphic, then they are isomorphic.
\end{theorem}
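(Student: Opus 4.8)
The plan is to prove the equivalent \emph{cancellation} statement: if $\xi_1\oplus\vartheta\cong\xi_2\oplus\vartheta$ for some trivial bundle~$\vartheta$ of rank~$N$, then $\xi_1\cong\xi_2$, under the hypothesis $k\ge\lceil(n+2)/c\rceil-1$, which is equivalent to $n\le c(k+1)-2$. First I would reduce to $N=1$ by induction on~$N$. Writing $\vartheta=\vartheta'\oplus\epsilon$ with $\epsilon$ a trivial line bundle, we have $(\xi_1\oplus\vartheta')\oplus\epsilon\cong(\xi_2\oplus\vartheta')\oplus\epsilon$; the bundles $\xi_i\oplus\vartheta'$ have rank $k+N-1\ge k$ and so still satisfy the stable-range inequality, which only weakens as the rank grows. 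Granting the case of a single trivial line bundle, this yields $\xi_1\oplus\vartheta'\cong\xi_2\oplus\vartheta'$, and iterating removes all~$N$ summands. So the whole content is the case where $\vartheta=\epsilon$ is a trivial line bundle.

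For that case, equip the rank-$(k+1)$ bundle $\eta\defeq\xi_1\oplus\epsilon\cong\xi_2\oplus\epsilon$ with a metric (Euclidean, Hermitian or quaternion-Hermitian according to $c=1,2,4$), built from a partition of unity. Each of the two descriptions of~$\eta$ presents the trivial line summand as a nowhere-vanishing section, that is, a section $s_i$ of the sphere bundle $S(\eta)\to X$, whose fibrewise orthogonal complement $s_i^\perp$ is a rank-$k$ subbundle isomorphic to~$\xi_i$ via orthogonal projection. The crucial point is that $s_1$ and $s_2$ are homotopic through sections of $S(\eta)$. Indeed, $S(\eta)\to X$ is a fibre bundle with fibre the sphere $S^{c(k+1)-1}$, which is $(c(k+1)-2)$-connected. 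The obstructions to a homotopy between two given sections of a fibration lie in the cohomology groups $H^i(X;\pi_i(\text{fibre}))$ for $i\ge1$ (with suitable local coefficients), and since $\pi_i(S^{c(k+1)-1})=0$ for $1\le i\le c(k+1)-2$ while $\dim X\le c(k+1)-2$ by hypothesis, every such group vanishes; hence $s_1\simeq s_2$. Equivalently, the fibration $S^{c(k+1)-1}\to BG(k)\to BG(k+1)$ with $G(k)\in\{\mathrm{O}(k),\mathrm{U}(k),\mathrm{Sp}(k)\}$ is $(c(k+1)-1)$-connected, hence induces a bijection $[X,BG(k)]\to[X,BG(k+1)]$ when $\dim X\le c(k+1)-2$; stable isomorphism means the classifying maps of $\xi_1$ and $\xi_2$ agree in $[X,BG]=\varinjlim_j[X,BG(j)]$, and since every map $[X,BG(k+j)]\to[X,BG(k+j+1)]$, $j\ge0$, is bijective in this range, the classifying maps already agree in $[X,BG(k)]$.

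Finally, a homotopy of sections of $S(\eta)$ is covered, by the covering homotopy property of the associated principal $G(k+1)$-bundle, by a path in $\Aut(\eta)$ starting at the identity; its endpoint is a bundle automorphism $g$ of~$\eta$ with $g\circ s_1=s_2$. Then $g$ carries $s_1^\perp$ isomorphically onto $s_2^\perp$, so $\xi_1\cong s_1^\perp\cong s_2^\perp\cong\xi_2$, which finishes the case $N=1$ and hence the theorem.

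I expect the main obstacle to be the obstruction-theoretic bookkeeping: matching the connectivity of $S^{c(k+1)-1}$ against $\dim X$ so that the bound $k\ge\lceil(n+2)/c\rceil-1$ comes out exactly, and, on the geometric side, ensuring that a homotopy of sections genuinely integrates to a bundle automorphism rather than merely a fibre-homotopy equivalence. The reduction to $N=1$ and the metric constructions via partitions of unity are routine.
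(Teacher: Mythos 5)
Your proof is correct and follows essentially the same route as the argument in Husemoller that the paper cites and sketches: reduce to cancelling a single trivial line summand, view the two splittings of \(\xi_1\oplus\epsilon\cong\xi_2\oplus\epsilon\) as nowhere vanishing sections of the sphere bundle, whose fibre \(S^{c(k+1)-1}\) is \((c(k+1)-2)\)-connected, deduce by obstruction theory that the sections are vertically homotopic because \(n\le c(k+1)-2\), and conclude that their orthogonal complements \(\xi_1\) and \(\xi_2\) are isomorphic. The numerology matches the stated bound \(k\ge\lceil(n+2)/c\rceil-1\) exactly, and the reduction to a single trivial summand and the final passage from a homotopy of sections to an isomorphism of complements are handled correctly.
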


So for a \(3\)\nb-dimensional
space~\(X\),
the isomorphism and stable isomorphism classification agree for real
vector bundles of dimension at least~\(4\),
for complex vector bundles of dimension at least~\(2\),
and for all quaternionic vector bundles.
For instance, consider the material Bi$_2$Se$_3$ studied in
\cites{Zhang-Liu-Qi-Adi-Fang-Zhang:Insulators_BiSe,
  Liu-Qi-Zhang-Dai-Fang-Zhang:Model_Hamiltonian}.  The model
Hamiltonian in \cites{Zhang-Liu-Qi-Adi-Fang-Zhang:Insulators_BiSe,
  Liu-Qi-Zhang-Dai-Fang-Zhang:Model_Hamiltonian} focuses on four
bands, of which half are below and half above the Fermi energy.  But
the dimension of the physically relevant vector bundle is
\(2\cdot 83 + 3\cdot 34 = 268\),
the number of electrons per unit cell of the crystal; each atom of
Bismuth has 83~electrons and each atom of Se has 34~electrons.

% So we should add a \(266\)-dimensional trivial bundle to the model
% Hamiltonian with four bands for Bi$_2$Se$_3$ before classifying up
% to homotopy equivalence.  For vector bundles of such a high
% dimension, stable equivalence and isomorphism give the same
% equivalence relation.

The theorem cited above does not take into account a real involution
on the space~\(X\).
The proof of Theorem~\ref{the:K_stable_range} is elementary enough,
however, to extend to ``real'' vector bundles over ``real'' manifolds.
To see this, one first describes a ``real'' manifold as a
\(\Z/2\)\nb-CW-complex.
The main step in the proof of Theorem~\ref{the:K_stable_range} is to
build nowhere vanishing sections of vector bundles, assuming that the
fibre dimension is large enough.  This allows to split off a trivial
rank-\(1\)
vector bundle as a direct summand.  Similarly, if two vector bundles
with nowhere vanishing sections are homotopic, then there is a nowhere
vanishing section for the homotopy if the dimension of the fibres is
large enough.  The only change in the ``real'' case is that we need
a \(\Z/2\)\nb-equivariant
nowhere vanishing section of a ``real'' vector bundle to split off
trivial summands.  Such sections are built by induction over the
cells of the \(\Z/2\)\nb-CW-complex.
The \(\Z/2\)\nb-action
on the interior of such a cell is either free or trivial.  In the
first case, a \(\Z/2\)\nb-equivariant
section is simply a section on one half of the cell.  In the second
case, the cell is contained in the fixed-point submanifold, and we
need a nowhere vanishing section of a real vector bundle in the usual
sense.  So the argument in~\cite{Husemoller:Fibre_bundles} allows to
build nowhere vanishing real sections of ``real'' vector bundles
under the same assumptions on the dimension as for real vector bundles.

\begin{bibdiv}
  \begin{biblist}
    \bibselect{references}
  \end{biblist}
\end{bibdiv}
\end{document}